\newtheorem{theorem}{Theorem}[section]
\newtheorem{lemma}[theorem]{Lemma}
\newtheorem{definition}[theorem]{Definition}
\newtheorem{proposition}[theorem]{Proposition}
\newtheorem{corollary}[theorem]{Corollary}
\newtheorem{remark}[theorem]{Remark}
\newtheorem{exam}[theorem]{Example}
\newcommand{\posres}[2]{\nicefrac{#1}{#2}}
\newcommand{\fullproof}[1]{}
\newcommand{\shortproof}[1]{#1}
\newcommand{\logpersax}{{\sf ITL}^{\sf FS}}
\newcommand{\lanclass}{{\mathcal L}^C}
\newcommand{\iiff}{\leftrightarrow}
\newcommand{\ineg}{\neg}
\newcommand{\icltl}{{\sf ITL}^{\sf c}_\ps }
\newcommand{\cl }{\mathcal }
\newcommand{\fr }{\mathfrak }
\newcommand{\nx}{{\circ}}
\newcommand{\nec}{\Box}
\newcommand{\ps}{\Diamond}
\newcommand{\val}[1]{\lb #1 \rb}
\def\eqdef{\stackrel{\rm def}{=}}
\newcommand{\forrefs}[1]{}
\newcommand{\sqsubT}{\sqsubseteq _T}
\newcommand{\simrel}{\mathrel E}
\newcommand{\simu}{\rightharpoonup}
\newcommand{\remc}[1]{\ominus #1}
\newcommand{\CIcon }{\Lambda}
\newcommand{\ptype}{\type\infty}
\newcommand{\ptypel}[1]{T^\infty_{#1}}
\newcommand{\Sim}[1]{{\rm Sim}({#1})}
\newcommand{\CMod}{\model_\CIcon }
\newcommand{\model}{{\mathfrak M}}
\newcommand{\tnext}{\circ}
\newcommand{\peq}{\preccurlyeq}
\newcommand{\ubox}{\Box}
\newcommand{\diam}{\Diamond}
\newcommand{\itlc}{{\sf ITL}^{\sf c}}
\newcommand{\itle}{{\sf ITL}^{\sf e}}
\newcommand{\ignore}[1]{}
\newcommand{\peqT}{\peq_{T}}
\newcommand{\subT}{\subseteq_{T}}
\newcommand{\ST}{\mathrel S_T}
\newcommand{{\logbasic}}{{\sf ITL}^0_\diam}
\newcommand{\type}[1]{T_{ #1 }}
\def\nb{\blacksquare}
\newcommand{\lanfull}{{\cl L}_\ast}
\newcommand{\landi}{\cl L_\diam}
\newcommand{\landif}{\cl L_{\diam\forall}}
\newcommand{\unp}[1]{J _{#1}}
\newcommand{\cval}[1]{\lb #1 \rb^C}
\newcommand\cqm[1]{\fr J _{ #1 }}
\newcommand{\irr}[1]{\fr I_{ #1 }}
\newcommand{\remove}[2]{#1 \setminus #2}
\def\lb{\left\llbracket}
\def\rb{\right\rrbracket}
\def\<{\left (}
\def\>{\right )}
\def\({\left (}
\def\){\right )}
\def\cbra{\left \{}
\def\cket{\right \}}
\def\eqdef{\stackrel{\rm def}{=}}
\def\lb{\left\llbracket}
\def\rb{\right\rrbracket}
\def\<{\left (}
\def\>{\right )}
\def\cbra{\left \{}
\def\cket{\right \}}
\newcommand{\imp}{\mathop \to}
\newcommand{\seq}{\succcurlyeq}
\newcommand{\defect}[1]{\partial {#1}}
\def\M{{\mathcal M}}
    \newcommand{\david}[1]{}
\newcommand{\mylabel}[2]{#2\def\@currentlabel{#2}\label{#1}}
\begin{document}

\title{Complete Intuitionistic Temporal Logics in Topological Dynamics}
\author[1]{Joseph Boudou \footnote{\href{mailto:joseph.boudou@matabio.net}{\tt joseph.boudou@matabio.net}}}
\author[2]{Mart\'in Di\'eguez \footnote{\href{mailto:dieguez@enib.fr}{\tt dieguez@enib.fr}}}
\author[3]{David Fern\'andez-Duque \footnote{\href{mailto:david.fernandezduque@ugent.be}{\tt david.fernandezduque@ugent.be}}}
\affil[1]{IRIT, France}
\affil[2]{CERV, ENIB,LAB-STICC. Brest, France}
\affil[3]{Department of Mathematics, Ghent University. Gent, Belgium}

{\maketitle}

\begin{abstract}
The language of linear temporal logic can be interpreted over the class of dynamic topological systems, giving rise to the intuitionistic temporal logic $\itlc_{\ps\forall}$, recently shown to be decidable by Fern\'andez-Duque.
In this article we axiomatize this logic, some fragments, and prove completeness for several familiar spaces.
\end{abstract}

\section{Introduction}

The dynamic topological logic project originated in the work of Artemov et al.~\cite{arte}, who suggested that modal logic may be used to reason about dynamic topological systems $(X,f)$ using the `interior' $\blacksquare$ modality (in the sense of Tarski \cite{tarski}) and the `next' $\nx$ modality to reason about the action of $f$.
Kremer and Mints \cite{kmints} later observed that `henceforh' $\nec$ could be used to model the asymptotic behavior of $f$, allowing one to represent phenomena such as the Poincar\'e recurrence theorem \cite{Poincare}. The resulting tri-modal system was called {\em dynamic topological logic} ($\sf DTL$), and it was studied for some time with expectations that it may be applicable in e.g.~automated theorem proving. However, interest in $\sf DTL$ waned after Konev et al.~\cite{konev} showed that the validity problem for $\sf DTL$ formulas is undecidable.

In unpublished work, Kremer \cite{KremerIntuitionistic} also suggested an intuitionistic version of $\sf DTL$, based on `next' and `henceforth'; no interior modality is required in this setting, as the topology is reflected in the semantics for implication. Years later, Fern\'andez-Duque \cite{FernandezITLc} showed that a mild variant based on `next', `eventually' and a universal modality was decidable over the class of all dynamical systems, and Boudou et al.~\cite{BoudouCSL} showed that a logic with `next', `eventually' and `henceforth' was decidable over the class of all dynamical systems based on a poset (see \S \ref{SecBasic}).
These results have led us to recast the dynamic topological logic project in terms of intuitionistic temporal logics.

These decidability results were proven using model-theoretic techniques, with little being known regarding axiomatic systems. Thus our goal in this paper is to develop deductive calculi for these logics. 

\subsection{State-of-the-art}

There are several (poly)modal logics which may be used to model time, and some have already been studied in an intuitionistic setting, e.g.~tense logics by Davoren \cite{Davoren2009} and propositional dynamic logic with iteration by Nishimura \cite{NishimuraConstructivePDL}.
Here we are specifically concerned with intuitionistic analogues of discrete-time linear temporal logic.
Versions of such a logic in finite time have been studied by Kojima and Igarashi \cite{KojimaNext} and Kamide and Wansing \cite{KamideBounded}.
Nevertheless, logics over infinite time have proven to be rather difficult to understand, in no small part due to their similarity to intuitionistic modal logics such as $\sf IS4$, whose decidablitiy is still an open problem~\cite{Simpson94}.

Balbiani and the authors have made some advances in this direction, showing that the intermediate logic of temporal here-and-there is decidable and enjoys a natural axiomatization \cite{BalbianiDieguezJelia} and identifying several conservative temporal extensions of intuitionistic logic, interpreted over dynamic topological systems \cite{FernandezITLc} or what we call {\em expanding posets} \cite{BoudouCSL}. These logics are based on the temporal language with $\tnext$ (`next'), $\diam$ (`eventually'), $\ubox$ (`henceforth') and the universal modality $\forall$. Note that unlike in the classical case, $\diam$ and $\ubox$ are not inter-definable \cite{IMLA}.

Fern\'andez-Duque \cite{FernandezITLc} has shown that a formula in the $\ubox$-free language $\landif$ is valid if and only if it is valid over a class of suitably defined {\em quasimodels.}
With this he showed that the validity problem for $\landif$ over the class of dynamical systems is decidable.
It is also shown in \cite{FernandezITLc} that $\landif$ is expressive enough to capture many of the recurrence phenomena that inspired interest in $\sf DTL$, such as Poincar\'e recurrence and minimality \cite{Bir,FernandezMinimal}.
As we will show, quasimodels can also be used to prove the completeness of a natural deductive calculus ${\sf ITL}^0_{\ps\forall}$ for this language.

On the other hand, we do not yet have a useful notion of quasimodel in the presence of $\ubox$, and Kremer \cite{KremerIntuitionistic} has shown that some key axioms of classical $\sf LTL$ (including $\ubox \varphi \to \nx \ubox \varphi$) are not valid for his topological semantics. At this point, it is unclear which weaker principles should replace them.
For these reasons, in this manuscript we restrict our attention to $\ubox$-free fragments of the temporal language.

\subsection{Our main result}

The main goal of this article is to prove that ${\sf LTL}^0_{\ps \forall}$ is complete for the class of dynamical systems (Theorem \ref{theocomp}).
The completeness proof follows the general scheme of that for linear temporal logic \cite{temporal}: a set of `local states', which we will call {\em moments,} is defined, where a moment is a representation of a potential point in a model (or, in our case, a quasimodel). To each moment $w$ one then assigns a characteristic formula $\chi(w)$ in such a way that $\chi(w)$ is consistent if and only if $w$ can be included in a model, from which completeness can readily be deduced.

In the $\sf LTL$ setting, a moment is simply a maximal consistent subset of a suitable finite set $\Sigma$ of formulas.
For us a moment is instead a finite labelled tree, and the formula $\chi(w)$ must characterize $w$ up to {\em simulation;} for this reason we will henceforth write $\Sim{w}$ instead of $\chi(w)$.
The required formulas $\Sim{w}$ can readily be constructed in our language (Proposition \ref{propSimForm}).

Note that it is {\em failure} of $\Sim{w}$ that characterizes the property of simulating $w$, hence the {\em possible} states will be those moments $w$ such that $\Sim{w}$ is unprovable.
The set of possible moments will form a quasimodel falsifying a given unprovable formula $\varphi$ (Corollary \ref{laststretch}), from which it follows that such a $\varphi$ is falsified on some model as well (Theorem \ref{TheoITLc}). Thus any unprovable formula is falsifiable, and Theorem \ref{theocomp} follows.

Our proof will be presented in such a way that completeness for the sub-logics ${\sf ITL}^0_\nx$ (whose only modality is $\nx$) and ${\sf ITL}^0_\ps$ (whith $\nx$ and $\ps$) are obtained as partial results.
Once we have established our main completeness theorem, we will consider special classes of dynamical systems for which our logics are also complete. In summary, we obtain the following results.

\begin{enumerate}

\item ${\sf ITL}^0_\nx$ and ${\sf ITL}^0_\ps$ are sound and complete for the class of expanding posets, for dynamical systems based on $\mathbb R^n$ for any fixed $n\geq 2$, and for the Cantor space.

\item The logics $\logpersax_\nx$ and $\logpersax_{\nx\forall}$ are sound and complete for the class of {\em persistent} posets.

\item The logic ${\sf ITL}^0_{\ps\forall}$ is sound and complete for both the class of all dynamical systems and of all dynamical systems based on $\mathbb Q$.

\end{enumerate}
In contrast, we will also show that ${\sf ITL}^0_\nx$ is incomplete for $\mathbb R$.

\subsection*{Layout} Section \ref{SecTopre} reviews some basic notions regarding partial orders and topology and Section \ref{SecBasic} introduces the syntax and semantics of $\icltl$. Section \ref{SecNDQ} then discusses labelled structures, which generalize both models and quasimodels.
Section \ref{secCanMod} discusses the canonical model, which properly speaking is only a deterministic weak quasimodel but is sufficient to establish completeness results for logics over the language $\mathcal L_\nx$.
Section \ref{SecSim} reviews simulations and dynamic simulations, including their definability in the intuitionistic language.
Section \ref{seccan} constructs the initial quasimodel and establishes its basic properties, but the fact that it is actually a quasimodel is proven only in Section \ref{secOmSens} where it is shown that the quasimodel is {\em $\omega$-sensible,} i.e.~it satisfies the required condition to interpret $\diam$. The completeness of ${\logbasic}$ follows from this.
In Section \ref{secUniversal} we extend our axiomatization for languages with the universal modality and prove that the logic ${\sf ITL}^0_{\ps\forall}$ is complete.
In Section \ref{SecCons} we prove via an unwinding construction that any $\mathcal L_\ps$-formula falsifiable on a quasimodel is also falsifiable on an expanding poset, from which we conclude that $\logbasic$ is complete for this class of systems, and in Section \ref{secMetric} show how to adapt results for $\sf DTL$ to our setting, obtaining completeness results for $\mathbb Q$, $\mathbb R^n$ and the Cantor space. Finally, Section \ref{secConc} provides some concluding remarks.

\section{Posets and Topology}\label{SecTopre}

We assume familiarity with topological spaces and related concepts; the necessary background may be found in a text such as \cite{munkres2000}.
Topological spaces will typically be denoted $\fr X = (|\fr X|,\mathcal T_\fr X)$, i.e.~$|\fr X|$ is the set of points of $\fr X$ and $\mathcal T_\fr X$ the family of open sets; we will generally adopt the convention of denoting the domain of a stucture $\fr S$ by $ | \fr S |$. We denote the interior of $A\subseteq |\fr X|$ by $A^\circ$ and its closure by $\overline A$. 

We will also work with posets and it will be convenient to view them as a special case of topological spaces.
As usual, a {\em poset} is a pair $\mathfrak A=(|\mathfrak A|,\peq_\mathfrak A)$, where $|\mathfrak A|$ is any set and ${\peq}_\mathfrak A \subseteq |\mathfrak A|\times |\mathfrak A|$ is a reflexive, transitive, antisymmetric relation. We write $\peq$ instead of $\peq_\mathfrak A $ when this does not lead to confusion, and write $a\prec b$ for $a\peq b$ but $b\not \peq a$.
 If $B\subseteq |\fr A|$, $\fr A\upharpoonright B$ is the structure obtained by restricting each component of $\fr A$ to $B$, so that $\fr A \upharpoonright B = (B, {\peq_\mathfrak A} \cap (B\times B))$. Similar conventions apply to other classes of structures, so that for example if $\mathfrak X$ is a topological space and $Y\subseteq |\fr X|$ then $\fr X \upharpoonright Y$ is $Y$ equipped with the subspace topology.

If $\mathfrak W$ is a poset, consider the topology $\mathcal U_\peq$ on $|\mathfrak W|$ given by setting $U\subseteq|\mathfrak W|$ to be open if and only if, whenever $w\in U$ and $v\seq w$, we have that $v\in U$. We call $\mathcal U_\peq$ the {\em up-set topology of $\peq$.}
Topological spaces of this form are {\em Aleksandroff spaces} \cite{alek}, which are fundamental for our completeness proof. If $\mathfrak W,\fr V$ are preorders then it is not hard to check that $g\colon |\mathfrak W|\to |\mathfrak V|$ is continuous with respect to the up-set topologies on $\mathfrak W,\fr V$ if and only if $v\peq_\fr W w$ implies that $g(v)\peq_\fr V g(w)$.

\section{Syntax and Semantics}\label{SecBasic}

Fix a countably infinite set $\mathbb P$ of propositional variables. The {\em full (intuitionistic temporal) language} $\lanfull$ is defined by the grammar (in Backus-Naur form)
\[\varphi,\psi := \ \  \bot \  | \   p  \ |  \ \varphi\wedge\psi \  |  \ \varphi\vee\psi  \ |  \ \varphi\imp \psi  \ |  \ \nx\varphi \  | \  \ps\varphi \  |  \ \nec\varphi  \ |  \  \forall \varphi, \]
where $p\in \mathbb P$. Here, $\nx$ is read as `next', $\ps$ as `eventually', $\nec$ as `henceforth' and $\forall$ as `everywhere'; note that this is a universal modality and not a quantifier. We also use $\ineg\varphi$ as a shorthand for $\varphi\imp \bot$ and $\varphi\iiff \psi$ as a shorthand for $(\varphi\imp \psi) \wedge (\psi\imp\varphi)$; we remark that the exitential modality is definable by $\exists \varphi := \neg \forall \neg \varphi$~\cite{FernandezITLc}. We denote the set of subformulas of $\varphi\in\lanfull$ by ${\mathrm{sub}}(\varphi)$.

The sublanguage of $\lanfull$ which only allows modalities in $M\subseteq \{\ps,\nec,\forall\}$ is denoted $\mathcal L_ M$, where we omit brackets and commas when writing $M$. For purposes of this article, a {\em temporal language} is any language of this form. Note that $\mathcal L_ M$ always contains Booleans, $\nx$ and implication, even though they will not be listed in $M$. We write $\mathcal L_\nx$ instead of $\mathcal L_\varnothing$ (i.e., the language whose only modality is $\nx$).

Next we define our semantics, based on dynamical systems.

\begin{definition}\label{DefSem}
A {\em dynamical (topological) system} is a triple $\mathfrak X=(|\mathfrak X|,\mathcal{T}_\mathfrak X,f_\mathfrak X)$
where $(|\mathfrak X|,\mathcal{T}_\mathfrak X)$ is a topological space and $f_\fr X:|\mathfrak X|\to |\mathfrak X|$ is a continuous function.
A {\em valuation on $\mathfrak X$} is a function $\lb\cdot\rb\colon\lanfull \to \mathcal T_\mathfrak X$ such that
\[
\begin{array}{rclrcl}
\lb\bot\rb&=&\varnothing &
\lb\varphi\wedge\psi\rb &=&\lb\varphi\rb\cap \lb\psi\rb\\
\lb\varphi\vee\psi\rb &=&\lb\varphi\rb\cup \lb\psi\rb&
\lb\varphi\imp\psi\rb &=&\big ( (|\mathfrak X|\setminus\lb\varphi\rb)\cup \lb\psi\rb\big )^\circ\\
\val{\nx\varphi}&=&f^{-1}_\mathfrak X\val\varphi&
\val{\ps\varphi}&=& \bigcup_{n<\omega}f^{-n}_\mathfrak X\val\varphi\\
\val{\nec\varphi}&=&\big (\bigcap_{n<\omega}f^{-n}\val\varphi\big)^\circ&
\val{\forall\varphi}&=&
|\mathfrak X|\text{ if $\val\varphi=|\mathfrak X|$,
$\varnothing$ otherwise.}
\end{array}
\]
A dynamical system $\mathfrak X$ equipped with a valuation $\lb\cdot\rb_\mathfrak X$ is a {\em (dynamical topological) model}.
\end{definition}
Validity is then defined in the usual way:

\begin{definition}
Given a model $\mathfrak X$ and a formula $\varphi\in \lanfull$, we say that $\varphi$ is {\em valid} on $\mathfrak X$, written $\mathfrak X\models\varphi$, if $\val\varphi_\mathfrak X =|\mathfrak X|$. If $\mathfrak X$ is a dynamical system, we write $\mathfrak X\models\varphi$ if $(\mathfrak X,\val\cdot)\models \varphi$ for every valuation $\val\cdot$ on $\mathfrak X$. If $\Omega$ is a class of dynamical systems or models, we say that $\varphi\in\lanfull$ is {\em valid on $\Omega$} if, for every $\mathfrak X\in \Omega$, $\mathfrak X\models\varphi$. If $\varphi$ is not valid on $\Omega$, it is {\em falsifiable on $\Omega$.} 

For a temporal language $\mathcal L_M$ and a class of dynamical systems $\Omega$ we define the logic ${\sf ITL}^{\Omega}_M$ to be the set of formulas of $\mathcal L_M$ that are valid over $\Omega$.
\end{definition}

As before we write ${\sf ITL}^\Omega_\nx$ instead of ${\sf ITL}^\Omega_\varnothing$. Some classes of interest are the class $\sf c$ of all dynamical systems, the class $\sf o$ of all dynamical systems with a (continuous and) open map, the class $\sf e$ of all dynamical systems based on a poset (which we call {\em expanding posets}), and the class ${\sf p} = {\sf e} \cap {\sf o}$ of {\em persistent posets.} If $\mathfrak X$ is a topological space, ${\sf ITL}^\mathfrak X_M$ denotes the set of $\mathcal L_M$-formulas valid on the class of dynamical systems of the form $(\mathfrak X, f)$.

\begin{exam}\label{examRInco}
Consider the formula $\varphi = ( \neg \nx p \wedge \nx \neg \neg p ) \rightarrow ( \nx q \vee \neg \nx q )$. Let us see that $\varphi$ is valid on $\mathbb R$ but not over all dynamical systems. Suppose that $(\mathbb R,f,\val\cdot)$ is a model based on $\mathbb R$ and that $x\in \val {\neg \nx p \wedge \nx \neg \neg p}$.
From $x\in \val { \nx \neg \neg p}$ and the semantics of double negation (discussed in \cite{FernandezITLc}) we see that there is a neighbourhood $V$ of $f(x)$ such that $V\subseteq \overline{\val p}$. It follows from the intermediate value theorem that if $U$ is a neighbourhood of $x$ and $f(U)$ is not a singleton, $f(U) \cap V$ contains an open set and hence $f(U) \cap \val p \not = \varnothing$. Meanwhile, from $x\in \val{\neg \nx p}$ we see that $x$ has a neighbourhood $U_\ast$ such that $f(U_\ast) \cap \val p = \varnothing$, hence for such a $U_\ast$ we must have that $f(U_\ast)$ is the singleton $ \{ f(x) \} $.
But then either $f(x) \in \val q$ and $x \in \val{\nx q }$, or else $f(x) \not \in \val q$, which means that $U_\ast \cap  \val{\nx q } = \varnothing$ and thus $U_\ast$ witnesses that $x\in \val{\neg \nx q}$.
In either case, $x\in \val{\nx q \vee \neg \nx q}$, as required.

On the other hand, consider the plane $\mathbb R^2$ with the projection function $\pi(x,y) = x$, and let $\val p $ be the complement of the $x$ axis and $\val q $ the complement of the $y$ axis. It is not hard to check that $0 \not \in \val \varphi$.
\end{exam}

Note that the formula $\varphi$ tells us that ${\sf ITL}^\mathbb R_\nx$ does not enjoy Craig interpolation.
The use of $\mathbb R^2$ in this example is not accidental: as we will see, any $\landi$-formula that is valid on $\mathbb R^2$ is valid over the class of all topological spaces. Note that this is no longer the case for $\varphi \in \landif$ \cite{FernandezITLc}.

\subsection{Axiomatic systems}

Our deductive calculi are obtained from propositional intuitionistic logic \cite{MintsInt} by adding standard axioms and inference rules of $\sf LTL$ \cite{temporal}, although some modifications are needed to present them in terms of $\diam$ instead of $\ubox$.
For our purposes, a {\em logic} is a set of axioms and rules defining a subset of some temporal language $\mathcal L$. We say that $\Lambda'$ {\em extends} $\Lambda$ if the language of $\Lambda'$ contains that of $\Lambda$ and $\Lambda'$ is closed under all substitution instances of the axioms and rules defining $\Lambda$.

Let us first give two axiomatizations for $\mathcal L_\nx$.
The logic $\logpersax_\nx$ (for {\em Fischer Servi}) is the least set of $\cl L_\nx$-formulas closed under the the axioms of Intuitionistic Propositional Logic~\cite{MintsInt} plus the following axioms and inference rules:

\begin{multicols}{2}
\begin{enumerate}[label=({N}\arabic*),leftmargin=*]
\item\label{ax02Bot}\mbox{$\neg \tnext \bot$}
\item\label{ax03NexWedge}\mbox{$ \tnext \varphi \wedge\tnext \psi  \rightarrow \tnext \left( \varphi \wedge \psi \right)$}
\item\label{ax04NexVee}\mbox{$ \tnext \left( \varphi \vee \psi \right) \rightarrow  \tnext \varphi \vee\tnext \psi $}
\item\label{ax05KNext}\mbox{$\tnext\left( \varphi \rightarrow \psi \right) \rightarrow \left(\tnext\varphi \rightarrow \tnext\psi\right)$}
\item\label{axFSNext}\mbox{$ \left(\tnext\varphi \rightarrow \tnext\psi\right) \rightarrow \tnext\left( \varphi \rightarrow \psi \right)$}
\end{enumerate}
\begin{enumerate}[label=({{NR}\arabic*}),leftmargin=*]
	\item\label{ax13MP} $\displaystyle\frac {\varphi \ \ \ \varphi\to \psi }{\psi}$
	\item\label{ax14NecCirc} $\displaystyle\frac{\varphi}{\tnext\varphi}$
\end{enumerate}
\end{multicols}

All of the above axioms for $\nx$ are standard for a functional modality.
We also define the logic ${\sf ITL}^0_\nx$ by omitting axiom \ref{axFSNext}, which is not valid over the class of dynamical systems, although it {\em is} valid over the class of open systems \cite{BoudouLICS}.
In contrast, we can derive the converses of the other axioms. Below, for a set of formulas $\Gamma$ we define $\tnext \Gamma = \{\tnext\varphi : \varphi \in \Gamma\}$, and empty conjunctions and disjunctions are defined by $\bigwedge\varnothing =\top$ and $\bigvee \varnothing = \bot$.

\begin{lemma}\label{lemmReverseNext}
Let $\mathcal L$ be a temporal language and $\Lambda$ be a logic over $\mathcal L$ extending ${\sf ITL}^0_\nx$. Let $\Gamma\subseteq \mathcal L$ be finite. Then, the following are derivable in $\Lambda$:
\begin{enumerate*}[label=(\arabic*)]
	\item $\tnext \bigwedge \Gamma \leftrightarrow \bigwedge \tnext \Gamma$ \hspace{20pt}
	\item $\tnext \bigvee \Gamma \leftrightarrow \bigvee \tnext \Gamma$.
\end{enumerate*}
\end{lemma}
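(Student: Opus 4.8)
The plan is to reduce the claim to the \emph{binary} distribution laws $\tnext(\varphi\wedge\psi)\leftrightarrow\tnext\varphi\wedge\tnext\psi$ and $\tnext(\varphi\vee\psi)\leftrightarrow\tnext\varphi\vee\tnext\psi$, and then to obtain the general finite case by induction on the cardinality of $\Gamma$. Observe that the two ``outward'' halves of these biconditionals, namely $\tnext\varphi\wedge\tnext\psi\to\tnext(\varphi\wedge\psi)$ and $\tnext(\varphi\vee\psi)\to\tnext\varphi\vee\tnext\psi$, are literally axioms \ref{ax03NexWedge} and \ref{ax04NexVee} of ${\sf ITL}^0_\nx$, so only their converses require work.

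The single tool needed for those converses is \emph{monotonicity} of $\tnext$: if $\varphi\to\psi$ is derivable in $\Lambda$, then the necessitation rule \ref{ax14NecCirc} gives $\tnext(\varphi\to\psi)$, and the normality axiom \ref{ax05KNext} together with modus ponens \ref{ax13MP} yields $\tnext\varphi\to\tnext\psi$. Instantiating this with the intuitionistic theorems $\varphi\wedge\psi\to\varphi$ and $\varphi\wedge\psi\to\psi$ gives $\tnext(\varphi\wedge\psi)\to\tnext\varphi$ and $\tnext(\varphi\wedge\psi)\to\tnext\psi$, hence $\tnext(\varphi\wedge\psi)\to\tnext\varphi\wedge\tnext\psi$; dually, instantiating with $\varphi\to\varphi\vee\psi$ and $\psi\to\varphi\vee\psi$ and then closing under intuitionistic disjunction elimination (from $A\to C$ and $B\to C$ infer $A\vee B\to C$) gives $\tnext\varphi\vee\tnext\psi\to\tnext(\varphi\vee\psi)$. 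All of these steps are available already in intuitionistic propositional logic, so no appeal to excluded middle is made.

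With the binary biconditionals in hand I would induct on $|\Gamma|$. The base case is $\Gamma=\varnothing$, where $\bigwedge\varnothing=\top$ and $\bigvee\varnothing=\bot$: here I must establish $\tnext\top\leftrightarrow\top$ and $\tnext\bot\leftrightarrow\bot$. The former holds because $\top$, and hence by necessitation $\tnext\top$, are both theorems; the latter is where axiom \ref{ax02Bot}, i.e.\ $\neg\tnext\bot$, is essential, supplying the nontrivial direction $\tnext\bot\to\bot$ while $\bot\to\tnext\bot$ is ex falso. For the inductive step, writing $\Gamma=\Gamma'\cup\{\varphi\}$, I use $\bigwedge\Gamma\leftrightarrow\bigwedge\Gamma'\wedge\varphi$, apply the binary conjunction biconditional, and substitute the induction hypothesis $\tnext\bigwedge\Gamma'\leftrightarrow\bigwedge\tnext\Gamma'$ to conclude; the disjunctive case is symmetric, using $\bigvee\Gamma\leftrightarrow\bigvee\Gamma'\vee\varphi$ and part~(2) of the binary law.

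I do not anticipate a genuine obstacle here: the argument is standard normal-modal-logic bookkeeping combined with an elementary induction. The only point requiring attention is the empty-set base case for disjunction, which is precisely what forces the use of \ref{ax02Bot}; every other step is pure intuitionistic propositional reasoning plus the monotonicity of $\tnext$, and in particular the omitted Fischer--Servi axiom \ref{axFSNext} plays no role.
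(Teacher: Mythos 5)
Your proposal is correct and follows essentially the same route as the paper's (very terse) proof: one direction of each equivalence by repeated use of axioms \ref{ax03NexWedge} and \ref{ax04NexVee}, the converse direction by monotonicity of $\tnext$ obtained from \ref{ax14NecCirc}, \ref{ax05KNext} and \ref{ax13MP}, with \ref{ax02Bot} handling the empty disjunction. Your write-up simply supplies the binary-law-plus-induction bookkeeping that the paper leaves to the reader.
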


\proof
One direction is obtained from repeated use of axioms \ref{ax03NexWedge} or \ref{ax04NexVee} and the other is proven using \ref{ax14NecCirc} and \ref{ax05KNext}; note that the second claim requires \ref{ax02Bot} to treat the case when $\Gamma = \varnothing$. Details are left to the reader.
\endproof

Next we define the logic ${\logbasic}$ by extending ${\sf ITL}^0_\nx$ with the following axioms and rules.

\medskip

\begin{center}
\begin{tabular}{ccc}
\mylabel{ax10DiamFix}{(E1)}  $\varphi \vee \tnext \diam \varphi \to \diam \varphi$\hspace{10pt} & \mylabel{ax11:dist}{(ER1)} $\displaystyle\frac{ \varphi \rightarrow  \psi } { \diam \varphi \rightarrow \diam \psi }$  & \hspace{10pt} \mylabel{ax12:ind:2}{(ER2)}  $\displaystyle\frac{ \tnext \varphi \to \varphi} { \diam \varphi \rightarrow \varphi } $
\end{tabular}
\end{center}

\medskip
%

Axiom \ref{ax10DiamFix} is the dual of $\ubox \varphi \rightarrow \varphi \wedge \tnext \ubox \varphi$.
The rule \ref{ax11:dist} replaces the dual K-axiom $\ubox(\varphi \to \psi)\to(\diam \varphi \to \diam \psi)$ common in intuitionistic modal logic, while \ref{ax12:ind:2} is dual to the induction rule $ \frac{ \varphi \to \tnext  \varphi} { \varphi \rightarrow \ubox \varphi } $.
Of course we could also consider a logic $\logpersax_\diam$ which includes axiom \ref{axFSNext}, but we do not have any completeness results for this logic.

\begin{lemma}\label{lemmReverseDiam}
Let $\mathcal L$ be a temporal language and $\Lambda$ be a logic over $\mathcal L$ extending ${\sf ITL}^0_\ps$. Then, for any $\varphi \in \mathcal L$,
$\Lambda \vdash \diam \varphi \to \varphi \vee \tnext \diam \varphi.$
\end{lemma}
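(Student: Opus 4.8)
The plan is to prove the statement as the converse of axiom \ref{ax10DiamFix} by applying the dual induction rule \ref{ax12:ind:2} to a carefully chosen ``fixpoint candidate.'' Set $\chi \eqdef \varphi \vee \tnext \diam \varphi$; the goal is then exactly $\diam \varphi \to \chi$. Since $\varphi \to \chi$ is an intuitionistic tautology, rule \ref{ax11:dist} yields $\diam \varphi \to \diam \chi$, so it suffices to derive $\diam \chi \to \chi$. By \ref{ax12:ind:2} this follows provided I can establish the premise $\tnext \chi \to \chi$, and this is where the real work lies.

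Before attacking that premise I would record two routine facts. First, $\tnext$ is monotone: from $\Lambda \vdash \alpha \to \beta$ one obtains $\Lambda \vdash \tnext \alpha \to \tnext \beta$ by necessitation \ref{ax14NecCirc} followed by the K-axiom \ref{ax05KNext}. Second, axiom \ref{ax10DiamFix} has the disjunctive antecedent $\varphi \vee \tnext \diam \varphi$, so from it I can read off both $\varphi \to \diam \varphi$ and $\tnext \diam \varphi \to \diam \varphi$ as separate derivable implications.

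The heart of the argument is deriving $\tnext \chi \to \chi$. Applying \ref{ax04NexVee} to $\tnext \chi = \tnext(\varphi \vee \tnext \diam \varphi)$ gives $\tnext \chi \to \tnext \varphi \vee \tnext \tnext \diam \varphi$, so it is enough to show that each of $\tnext \varphi$ and $\tnext \tnext \diam \varphi$ implies $\chi$. For the first disjunct, monotonicity of $\tnext$ applied to $\varphi \to \diam \varphi$ yields $\tnext \varphi \to \tnext \diam \varphi$, and $\tnext \diam \varphi$ is itself a disjunct of $\chi$. For the second disjunct, monotonicity applied to $\tnext \diam \varphi \to \diam \varphi$ yields $\tnext \tnext \diam \varphi \to \tnext \diam \varphi$, which again lands inside $\chi$. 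Hence both disjuncts imply $\chi$, giving $\tnext \chi \to \chi$ by intuitionistic disjunction elimination.

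Putting the pieces together: from $\tnext \chi \to \chi$ and \ref{ax12:ind:2} I obtain $\diam \chi \to \chi$, and composing with $\diam \varphi \to \diam \chi$ from the first paragraph yields $\diam \varphi \to \chi$, which is the desired $\diam \varphi \to \varphi \vee \tnext \diam \varphi$. I expect the main obstacle to be the verification of $\tnext \chi \to \chi$, and specifically the insight that \ref{ax10DiamFix} must be invoked twice --- once to bound $\tnext \varphi$ and once to bound $\tnext\tnext\diam\varphi$, both by $\tnext \diam \varphi$ --- so that the induction rule can be applied to $\chi$ taken as its own inductive invariant. Everything else is bookkeeping with intuitionistic propositional reasoning and the monotonicity of $\tnext$.
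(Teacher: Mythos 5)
Your proposal is correct and follows essentially the same route as the paper's own proof: both extract $\varphi \to \diam\varphi$ and $\tnext\diam\varphi \to \diam\varphi$ from axiom \ref{ax10DiamFix}, lift them with the monotonicity of $\tnext$ (rule \ref{ax14NecCirc} plus axiom \ref{ax05KNext}), use \ref{ax04NexVee} to establish $\tnext(\varphi \vee \tnext\diam\varphi) \to \varphi \vee \tnext\diam\varphi$, and then close via the induction rule \ref{ax12:ind:2} and the distribution rule \ref{ax11:dist}. The only difference is presentational: you state the fixpoint candidate $\chi$ and the reduction up front, whereas the paper derives the same pieces first and assembles them at the end.
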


\proof
Reasoning within ${\logbasic}$, note that $\varphi \to \diam \varphi$ holds by \ref{ax10DiamFix} and propositional reasoning, hence $\tnext\varphi \to \tnext \diam \varphi$ by \ref{ax14NecCirc}, \ref{ax05KNext} and \ref{ax13MP}. 
In a similar way, $\tnext \diam \varphi \to \diam \varphi$ holds by \ref{ax10DiamFix} and propositional reasoning, so $\tnext \tnext\diam \varphi \to \tnext \diam \varphi$ does by \ref{ax14NecCirc}, \ref{ax05KNext} and \ref{ax13MP}. Hence, $\tnext \varphi \vee \tnext \tnext \diam \varphi \to \tnext \diam \varphi$ holds. Using \ref{ax04NexVee} and some propositional reasoning we obtain $\tnext(\varphi \vee \tnext \diam \varphi) \to \varphi \vee \tnext \diam \varphi$. But then, by \ref{ax12:ind:2}, $\diam(\varphi \vee \tnext \diam \varphi) \to \varphi \vee \tnext \diam \varphi$; since $\diam\varphi \to \diam (\varphi \vee \tnext \diam \varphi)$ can be proven using \ref{ax11:dist}, we obtain $\diam\varphi \to \varphi \vee \tnext \diam \varphi$, as needed.
\endproof
\noindent Finally, we define the logics ${\sf ITL}^ 0_{\forall}$ and ${\sf ITL}^ 0_{\ps\forall}$ by extending ${\sf ITL}^ 0 $ and ${\sf ITL}^ 0_\ps$, respetively, with the following axioms and rule.

\begin{multicols}{2}
	\begin{enumerate}[label=(UA\arabic*),leftmargin=*]
\item\label{axUnivEM} $\forall \varphi \vee \neg \forall \varphi$
\item\label{axUnivK} $\forall (\varphi \to \psi) \to (\forall \varphi \to \forall \psi)$
\item\label{axUnivVee} $\forall (\varphi \vee \forall \psi) \to \forall \varphi \vee \forall \psi$
\item\label{axUnivT} $\forall \varphi \rightarrow \varphi$
\item\label{axUniv4} $\forall \varphi \rightarrow \forall\forall \varphi $
\item\label{axUnivNex} $\forall \varphi \leftrightarrow \nx \forall \varphi$
\end{enumerate}
\begin{enumerate}[label=({UR\arabic*}),leftmargin=*]
\item\label{rulUnivNec} $\displaystyle\frac \varphi {\forall \varphi}$
\end{enumerate}

\end{multicols}

The reader may observe that these axioms are designed to make the universal modality behave classically; indeed this is not surprising, as the only truth values that $\forall \varphi$ can take are the whole space or the empty set.
With the exception of \ref{axFSNext}, we will assume that all of the above rules and axioms are available when relevant.

\begin{definition}
An {\em admissible intuitionistic temporal logic} is any logic $\Lambda$ over a temporal language $\cl L_M$ such that $\Lambda$ extends ${\sf ITL}^0_M$.
\end{definition}

As usual, a logic $\Lambda$ is {\em sound} for a class of structures $\Omega$ if, whenever $\Lambda \vdash \varphi$, it follows that $\Omega \models \varphi$.
The following is essentially proven in \cite{BoudouLICS}:
   
\begin{theorem}\label{ThmSoundZero}
  ${\sf ITL}^0_{\ps\forall}$ is sound for the class of dynamical systems.
  \end{theorem}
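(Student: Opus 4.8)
The plan is to argue by induction on the length of a derivation in ${\sf ITL}^0_{\ps\forall}$: I fix an arbitrary model $\mathfrak X=(|\mathfrak X|,\mathcal T_\mathfrak X,f,\val\cdot)$ and check that every axiom has $\val\cdot=|\mathfrak X|$ on $\mathfrak X$, and that every rule, applied to premises valid on $\mathfrak X$, produces a conclusion valid on $\mathfrak X$. Since $\mathfrak X$ is arbitrary, this shows $\Lambda\vdash\varphi$ implies validity on every dynamical system, which is exactly soundness. The base of the induction splits into four groups of axioms, treated by increasingly concrete computations with $\val\cdot$.

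First, for the intuitionistic propositional base and the rule \ref{ax13MP}, I would invoke Tarski's topological semantics: the open sets $\mathcal T_\mathfrak X$ form a Heyting algebra whose meet, join, and relative pseudo-complement are precisely $\cap$, $\cup$, and $A,B\mapsto((|\mathfrak X|\setminus A)\cup B)^\circ$, matching the clauses for $\wedge,\vee,\imp$ in Definition~\ref{DefSem}. Hence every {\sf IPL} axiom evaluates to $|\mathfrak X|$, and modus ponens preserves this because $\val{\varphi\imp\psi}=|\mathfrak X|$ is equivalent to $\val\varphi\subseteq\val\psi$. For the $\nx$-fragment I would use that $\val{\nx\varphi}=f^{-1}\val\varphi$ and that $f^{-1}$ commutes with $\cap$, $\cup$ and complement and fixes $\varnothing$ and $|\mathfrak X|$; this makes \ref{ax02Bot}, \ref{ax03NexWedge} and \ref{ax04NexVee} hold as equalities and validates \ref{ax14NecCirc}. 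The one genuinely topological point is \ref{ax05KNext}: by continuity $f^{-1}$ sends open sets to open sets, so for any $U$ the set $f^{-1}(U^\circ)$ is an open subset of $f^{-1}U$ and therefore $f^{-1}(U^\circ)\subseteq (f^{-1}U)^\circ$; taking $U=(|\mathfrak X|\setminus\val\varphi)\cup\val\psi$ gives $\val{\nx(\varphi\imp\psi)}\subseteq\val{\nx\varphi\imp\nx\psi}$. This is exactly the inclusion that fails in the reverse direction unless $f$ is open, which is why \ref{axFSNext} is excluded from ${\sf ITL}^0_\nx$.

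Third, for $\ps$ I would use $\val{\ps\varphi}=\bigcup_{n<\omega}f^{-n}\val\varphi$: re-indexing the union gives $\val\varphi\cup f^{-1}\val{\ps\varphi}=\val{\ps\varphi}$, which yields \ref{ax10DiamFix}; monotonicity of each $f^{-n}$ validates \ref{ax11:dist}; and for \ref{ax12:ind:2} a hypothesis $f^{-1}\val\varphi\subseteq\val\varphi$ propagates by induction to $f^{-n}\val\varphi\subseteq\val\varphi$ for all $n$, whence $\val{\ps\varphi}\subseteq\val\varphi$. For the universal modality the key observation is that $\val{\forall\varphi}$ is always clopen, lying in $\{\varnothing,|\mathfrak X|\}$, so $\forall$ acts on the two-element Boolean subalgebra and behaves classically; each of \ref{axUnivEM}--\ref{axUniv4} then reduces to a two-case check, \ref{axUnivNex} follows from $f^{-1}$ fixing $\varnothing$ and $|\mathfrak X|$, and \ref{rulUnivNec} is immediate.

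I expect no deep obstacle here: the entire argument is bookkeeping over these cases. The only two points requiring any care are the one-sided use of continuity in \ref{ax05KNext} (which isolates precisely where the functional-modality axiom \ref{axFSNext} would demand openness), and the observation that the classical-looking $\forall$-axioms are sound despite the intuitionistic base solely because $\forall$ takes only clopen values. Since this verification is essentially the content of \cite{BoudouLICS}, I would present the computations compactly rather than in full detail.
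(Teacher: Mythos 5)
Your proof is correct, but it is worth noting how it relates to the paper: the paper gives no in-text argument for Theorem \ref{ThmSoundZero} at all; it cites \cite{BoudouLICS} and remarks only that the axioms and rules restated in terms of $\ps$ rather than $\nec$ are ``innocuous'' modifications whose correctness the reader may check. Your induction on derivations is exactly the routine verification that this citation stands in for, and it correctly isolates the only points of substance: (i) continuity of $f$ gives the one-sided inclusion $f^{-1}(U^\circ)\subseteq (f^{-1}U)^\circ$, which validates \ref{ax05KNext} and explains why \ref{axFSNext} is omitted (its soundness would require $f$ open); (ii) the $\ps$-clauses --- the re-indexing $\val{\ps\varphi}=\val\varphi\cup f^{-1}\val{\ps\varphi}$ for \ref{ax10DiamFix}, monotonicity of preimages for \ref{ax11:dist}, and the induction $f^{-n}\val\varphi\subseteq\val\varphi$ for \ref{ax12:ind:2} --- which are precisely the ``modified'' axioms and rules the paper delegates to the reader; and (iii) the observation that $\val{\forall\varphi}\in\{\varnothing,|\mathfrak X|\}$, so the classically flavoured axioms \ref{axUnivEM}--\ref{axUnivNex} are sound despite the intuitionistic base. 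One presentational point you get right and should keep explicit: soundness of the rules (\ref{ax13MP}, \ref{ax14NecCirc}, \ref{ax11:dist}, \ref{ax12:ind:2}, \ref{rulUnivNec}) means preservation of validity in a fixed model, i.e.\ $\val\varphi=|\mathfrak X|$, not preservation of truth at a point; all of them do preserve this, so the induction on derivation length goes through. In short, your proposal supplies the self-contained proof that the paper outsources, covering exactly the checks its remark asks of the reader.
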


Note however that a few of the axioms and rules have been modified to fall within $\cl L_\diam$, but these modifications are innocuous and their correctness may be readily checked by the reader.
We will see that every admissible intuitionistic temporal logic is also complete for the class of dynamic topological systems.

\section{Labelled structures}\label{SecNDQ}

The central ingredient of our completeness proof is given by non-deterministic {\em quasimodels,} introduced by Fern\'andez-Duque \cite{FernandezNonDeterministic} in the context of dynamic topological logic and later adapted to intuitionistic temporal logic \cite{FernandezITLc}.

\subsection{Two-sided types}

Our presentation will differ slightly from that of \cite{FernandezITLc}, since it will be convenient for us to use two-sided types, defined as follows.

\begin{definition}\label{def:type}
Let $\Sigma \subseteq \landif$ be closed under subformulas and $\Phi^+,\Phi ^ - \subseteq \Sigma$ be finite sets of formulas. We say that the pair $\Phi = (\Phi ^+ ; \Phi ^-) $ is a {\em two-sided $\Sigma$-type} if:
\begin{multicols}{2}
  \begin{enumerate}
	\item $\Phi^- \cap \Phi ^ +  = \varnothing$,
	\label{cond:type:intersection}
	
	\item $\bot\not\in \Phi ^ +$,
	\label{cond:type:bot}
	
	\item if $\varphi\wedge\psi\in \Phi ^ +$, then $\varphi,\psi\in \Phi^+$,
	\label{cond:type:posconj}
	
	\item if $\varphi\wedge\psi\in \Phi ^ -$, then	
	$\varphi \in \Phi ^-$ or $\psi\in \Phi^-$,
	\label{cond:type:negconj}
	
	\item if $\varphi\vee\psi\in \Phi ^ +$, then	$\varphi \in \Phi^+$ or $\psi\in \Phi^+$,
	
	\columnbreak
	
	\label{cond:type:posdisj}
	
	\item if $\varphi\vee\psi\in \Phi ^ -$, then  $\varphi , \psi\in \Phi^-$,
	\label{cond:type:negdisj}
	
		\item if $\varphi\to\psi\in \Phi^+$, then $\varphi \in \Phi^-$ or $\psi \in\Phi^+$,
	\label{cond:type:implication}

		\item if $\varphi\to\psi\in \Phi^-$, then $\psi \in\Phi^-$, and
	\label{cond:type:implication:neg}

	\item\label{cond:type:diam} if $\diam \varphi \in \Phi^-$ then $\varphi \in \Phi^-$.
		
\end{enumerate}
\end{multicols}
 If moreover $\Sigma = \Phi^- \cup \Phi^+$, we may say that $\Phi$ is {\em saturated.}
  The set of finite two-sided $\Sigma$-types will be denoted $\type{\Sigma}$.  
\end{definition}
  Whenever $\Xi$ is an expression denoting a two-sided type,
  we write $\Xi^+$ and $\Xi^-$ to denote its components.
We will consider three partial orders on $\type{\Sigma}$. We will write\david{Lo volv\'i a separar porque se me hac\'i dif\'icil de leer.}
\begin{enumerate}[label=\arabic*)]
  \item $\Phi \peqT \Psi$ if $\Psi^-\subseteq \Phi^-$ and $\Phi^+ \subseteq \Psi^+$,
  \item $\Phi \subseteq_T \Psi$ if $\Phi^- \subseteq \Psi^-$ and $\Phi^+\subseteq \Psi^+$,  
  and
  \item $\Phi \sqsubT \Psi$ if $\Phi^- = \Psi^-$ and $\Phi^+\subseteq \Psi^+$.
\end{enumerate}

\begin{remark}\label{RemarkTypes}
Fern\'andez-Duque \cite{FernandezITLc} uses one-sided $\Sigma$-types, but it is readily checked that a one-sided type $\Phi$ as defined there can be regarded as a saturated two-sided type $\Psi$ by setting $\Psi^+=\Phi$ and $\Psi^- = \Sigma \setminus \Phi$.
Henceforth we will write {\em type} instead of {\em two-sided type} and explicitly write {\em one-sided type} when discussing \cite{FernandezITLc}.
\end{remark}

Many times we want $\Sigma$ to be finite, and to indicate this, given $\Delta\subseteq \lanfull$ we write $\Sigma\Subset \Delta$ if $\Sigma\subseteq \Delta$ is finite and closed under subformulas.
Note that $\type\Sigma$ is partially ordered by $\subseteq$, and we will endow it with the up-set topology $\mathcal U_\subseteq$. For $\Phi\in\type\Sigma$, say that a formula $\varphi\imp \psi\in\Sigma$ is a {\em defect} of $\Phi$ if $\varphi \imp \psi \in \Phi^-$ but $\varphi \not \in \Phi^+$. The set of defects of $\Phi$ will be denoted $\defect\Phi$.

\begin{definition}\label{frame}
Let $\Sigma \subseteq \landif$ be closed under subformulas.
We say that a {\em $\Sigma$-labelled space} is a triple $\mathfrak W= ( |\mathfrak W|,\mathcal T_\mathfrak W,\ell_\mathfrak W )$, where $( |\mathfrak W|,\mathcal T_\mathfrak W )$ is a topological space and $\ell_\mathfrak W\colon | \mathfrak W | \to \type \Sigma$ a continuous function such that for all $w\in |\mathfrak W|$, whenever $\varphi\imp \psi\in \defect{ \ell_\mathfrak W(w)}$ and $U$ is any neighborhood of $w$, there is $v\in U$ such that $\varphi\in \ell_\mathfrak W(v)$ and $\psi\not\in \ell_\mathfrak W(v)$. Such a $v$ {\em revokes} $\varphi\imp \psi$.

	The $\mathcal L_M$-labelled space $\mathfrak W$ {\em satisfies} $\varphi\in\mathcal L$ if $\varphi\in \ell^+_\mathfrak W(w)$ for some $w\in |\mathfrak W|$, and {\em falsifies} $\varphi\in\mathcal L$ if $\varphi\in \ell^-_\mathfrak W(w)$ for some $w\in |\mathfrak W|$. We say that $\ell_\mathfrak W$ is {\em honest} if, for every $w\in |\mathfrak W|$ and every $\forall\varphi\in \Sigma$, we have that $\forall\varphi\in \ell^+_\mathfrak W(w)$ implies that $\varphi\in \ell^+_\mathfrak W(v)$ for every $v\in |\mathfrak W|$, and $\forall\varphi\in \ell^-_\mathfrak W(w)$ implies that $\varphi\in \ell^-_\mathfrak W(v)$ for some $v\in |\mathfrak W|$.
We say that $\fr W$ and $\ell_\mathfrak W$ are {\em saturated} if $\ell_\mathfrak W (w)$ is saturated for every $w\in |\mathfrak W|$.
\end{definition}

If $\mathfrak W$ is a labelled space, elements of $|\mathfrak W|$ will sometimes be called {\em worlds.} As usual, we may write $\ell$ instead of $\ell_\mathfrak W$ when this does not lead to confusion. Since we have endowed $\type\Sigma$ with the topology $\mathcal U_\subseteq$, the continuity of $\ell$ means that for every $w\in|\mathfrak W|$, there is a neighborhood $U$ of $w$ such that, whenever $v \in U$, $\ell_\mathfrak W(w)\subseteq \ell_\mathfrak W(v)$.

Note that not every subset $U$ of $|\mathfrak W|$ gives rise to a substructure that is also a labelled space; however, this is the case when $U$ is open. The following is not hard to check.

\begin{lemma}\label{LemmOpenSubst}
If $\Sigma\subseteq \landif$ is closed under subformulas, $\mathfrak W$ is a $\Sigma$-labelled space, and $U\subseteq |\mathfrak W|$ is open, then $\mathfrak W\upharpoonright U$ is a $\Sigma$-labelled space.
\end{lemma}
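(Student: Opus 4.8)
The plan is to verify the three requirements of Definition \ref{frame} for the restricted structure $\mathfrak W\upharpoonright U = (U,\mathcal T_\mathfrak W\upharpoonright U,\ell_\mathfrak W\upharpoonright U)$, where $U$ is endowed with the subspace topology. The first requirement is immediate, since the subspace topology on $U$ is a topology and hence $(U,\mathcal T_\mathfrak W\upharpoonright U)$ is a topological space. The second requirement, that $\ell_\mathfrak W\upharpoonright U\colon U\to\type\Sigma$ be continuous, is also routine: the restriction of a continuous map to a subspace is continuous. Concretely, for an up-set $O$ in $\type\Sigma$ the preimage $(\ell_\mathfrak W\upharpoonright U)^{-1}(O)$ equals $\ell_\mathfrak W^{-1}(O)\cap U$, which is open in the subspace because $\ell_\mathfrak W^{-1}(O)$ is open in $\mathfrak W$ by continuity of $\ell_\mathfrak W$. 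Note that this step does not require $U$ to be open.

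The only point requiring care is the revocation (defect) condition, and this is exactly where openness of $U$ is used. I would fix $w\in U$, a defect $\varphi\imp\psi\in\defect{\ell_\mathfrak W(w)}$, and an arbitrary neighborhood $V$ of $w$ in the subspace $U$. The key observation is that, because $U$ is open in $\mathfrak W$, such a $V$ is also a neighborhood of $w$ in $\mathfrak W$: by definition of the subspace topology $V$ contains some $V'\cap U$ with $V'$ open in $\mathfrak W$ and $w\in V'\cap U$, and $V'\cap U$ is open in $\mathfrak W$ precisely because $U$ is open. I can therefore invoke the revocation condition of the ambient labelled space $\mathfrak W$ for the defect $\varphi\imp\psi$ and the $\mathfrak W$-neighborhood $V$, obtaining $v\in V$ with $\varphi\in\ell_\mathfrak W(v)$ and $\psi\notin\ell_\mathfrak W(v)$. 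Since $v\in V\subseteq U$, this same $v$ witnesses the revocation condition inside $\mathfrak W\upharpoonright U$, as required.

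The main (and essentially only) obstacle is that last observation: a subspace neighborhood of $w$ in $U$ must itself be a neighborhood of $w$ in $\mathfrak W$, so that the witness supplied by $\mathfrak W$ automatically lands in $U$ and no new witness needs to be constructed. This is exactly the property that fails for non-open $U$, which is why the hypothesis is needed. The type conditions of Definition \ref{def:type} are inherited pointwise from $\ell_\mathfrak W$, so no further verification is needed, and we conclude that $\mathfrak W\upharpoonright U$ is a $\Sigma$-labelled space.
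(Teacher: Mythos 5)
Your proof is correct; the paper itself leaves this lemma unproved (``The following is not hard to check''), and your direct verification is precisely the intended argument. The key point you isolate --- that openness of $U$ makes every subspace neighborhood of $w\in U$ a neighborhood in $\mathfrak W$, so the revocation witness supplied by $\mathfrak W$ automatically lies in $U$ --- is exactly where the hypothesis is needed, and the continuity and type conditions are, as you say, routine.
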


For our purposes, a {\em continuous relation} on a topological space is a relation under which the preimage of any open set is open; note that this is {\em not} the standard definition of a contiuous relation.
In the context of an Alexandroff space with the up-set topology, a continuous relation $S$ is one that satisfies the {\em forward confluence} property: if $w' \seq w \mathrel S v$, then there is $v'$ such that $w'\mathrel S v' \seq v$.
Similarly, an open relation $S$ is one such that if $ w \mathrel S v \peq v'$, then there is $w'$ such that $w \peq w' \mathrel S v' $.

\begin{definition}\label{compatible}
Let $\Sigma \subseteq \landif$ be closed under subformulas and $\Phi,\Psi\in\type \Sigma$. The ordered pair $(\Phi,\Psi)$ is {\em sensible} if

\begin{multicols}{2}
\begin{enumerate}[label=\arabic*)]
	\item\label{ItCompOne} $\tnext\varphi\in \Phi^+$ implies $ \varphi\in \Psi^+$,
	\item\label{ItCompTwo} $\tnext\varphi\in \Phi^-$ implies $ \varphi\in \Psi^-$,
	\item\label{ItCompThree}$\diam\varphi\in \Phi^+$ implies

$\varphi\in\Phi^+$ or $\diam\varphi\in \Psi^+$,
	\columnbreak
	\item\label{ItCompFour} $\diam\varphi\in \Phi^-$, implies $\diam \varphi \in \Psi^-$,
	\item\label{ItCompFive} $\forall\varphi\in \Phi^+$ iff $\forall\varphi\in \Psi^+,$ and
	\item\label{ItCompSix} $\forall\varphi\in \Phi^-$ iff $\forall \varphi \in \Psi^-$.
\end{enumerate}
\end{multicols}

Likewise, a pair $(w,v)$ of worlds in a labelled space $\mathfrak W$ is sensible if $(\ell (w),\ell (v))$ is sensible.

A continuous relation
$S\subseteq |\mathfrak W|\times |\mathfrak W|$
is {\em sensible} if every pair in $S$ is sensible.
Further, $S$ is {\em $\omega$-sensible} if it is serial and, whenever $\ps\varphi\in \ell(w)$, there are $n\geq 0$ and $v$ such that $w \mathrel S^n v$ and $\varphi\in \ell(v)$.

A {\em $\Sigma$-labelled system} is a $\Sigma$-labelled space $\mathfrak W$ equipped with a sensible relation $S_\mathfrak W\subseteq |\mathfrak W|\times|\mathfrak W|$; if moreover $\ell_\mathfrak W$ is honest and $S_\mathfrak W$ is $\omega$-sensible, we say that $\mathfrak W$ is a {\em well $\Sigma$-labelled system.}
\end{definition}

Given $\Sigma \subseteq \landif$ closed under subformulas, any dynamic topological model can be regarded as a well $\Sigma$-labelled system. If $\mathfrak{X}$ is a model, we can assign a saturated $\Sigma$-type $\ell_\mathfrak X(x)$ to $x$ given by $\ell_\mathfrak X(x)=\cbra\psi\in \Sigma :x\in \val\psi_\mathfrak X \cket.$ We also set $S_\mathfrak X=f_\mathfrak X$; it is obvious that $\ell_\mathfrak X$ is honest and $S_\mathfrak X$ is $\omega$-sensible.
Henceforth we will tacitly identify $\mathfrak X$ by its associated well $\landif$-labelled system.
However, not all labelled systems we are interested in arise from models: another useful class of labelled systems is given by quasimodels.

\begin{definition}\label{ndqm}
Given $\Sigma \subseteq \landif$ closed under subformulas, a {\em weak $\Sigma$-quasimodel} is a $\Sigma$-labelled system $\mathfrak Q$ such that $\mathcal T_\mathfrak Q$ is equal to the up-set topology for a partial order which we denote $\peq_\mathfrak Q$. If moreover $\mathfrak Q$ is a well $\Sigma$-labelled system, then we say that $\mathfrak{Q}$ is a {\em $\Sigma$-quasimodel.}
\end{definition}

Note that quasimodels are very close to models, except that the relation $S$ may be non-deterministic. Indeed, deterministic quasimodels are essentially models.
  The following can be checked by a standard structural induction on $\varphi$.

\begin{lemma}\label{LemTruth}
  Let $\Sigma \subseteq \cl L_{\ps \forall}$ be closed under subformulas and $\fr Q$ be an honest, deterministic $\Sigma$-quasimodel.
  
  Define a valuation $\val\cdot_\fr Q$ on $ {\fr Q}$ by setting
$\val p_\fr Q =\{w \in W : p \in \ell(w)^+\}$
and extending to all of $\mathcal L$ recursively.
Then, for all formulas $\varphi \in \mathcal L_\diam$ and for all $w \in { W}$,
	\begin{enumerate*}[label=\arabic*)]
		\item if $\varphi \in \ell(w)^+$ then  $w \in \val{\varphi}_\fr Q$, and
		\item if $\varphi \in \ell(w)^-$ then  $w \not \in \val{\varphi} _ \fr Q$.
	\end{enumerate*}
\end{lemma}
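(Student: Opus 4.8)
The plan is a routine structural induction on $\varphi \in \mathcal L_\diam$, proving clauses (1) and (2) simultaneously. Since $\mathcal L_\diam$ is closed under subformulas, every subformula of $\varphi$ again lies in $\mathcal L_\diam$, so the induction hypothesis is always available, and since $\varphi$ is $\forall$-free the modality $\forall$ never arises in the induction. Because $\fr Q$ is deterministic its relation $S$ is a function, and by the seriality contained in $\omega$-sensibility it is total; I will write $f$ for it and freely use $f(w)$ and $f^n(w)$. The base cases are immediate: $\bot$ never lies in a positive part by condition \ref{cond:type:bot} and $\val\bot=\varnothing$; for $p$ clause (1) holds by definition of $\val p_\fr Q$, and clause (2) follows from $\ell(w)^+\cap\ell(w)^-=\varnothing$ (condition \ref{cond:type:intersection}). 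The conjunction and disjunction cases reduce directly to the saturation conditions \ref{cond:type:posconj}--\ref{cond:type:negdisj} together with the induction hypothesis and the clauses $\val{\psi\wedge\chi}=\val\psi\cap\val\chi$, $\val{\psi\vee\chi}=\val\psi\cup\val\chi$.

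The delicate case is implication, where the topology enters. For clause (1), suppose $\psi\imp\chi\in\ell(w)^+$. By continuity of $\ell$ there is an open neighborhood $U\ni w$ with $\ell(w)\subT\ell(v)$, hence $\psi\imp\chi\in\ell(v)^+$, for every $v\in U$; condition \ref{cond:type:implication} then yields $\psi\in\ell(v)^-$ or $\chi\in\ell(v)^+$, so by the induction hypothesis $v\notin\val\psi$ or $v\in\val\chi$. Thus $U\subseteq(|\fr Q|\setminus\val\psi)\cup\val\chi$, and since $U$ is open and contains $w$ we conclude $w\in\big((|\fr Q|\setminus\val\psi)\cup\val\chi\big)^\circ=\val{\psi\imp\chi}$. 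For clause (2), suppose $\psi\imp\chi\in\ell(w)^-$, so $\chi\in\ell(w)^-$ by condition \ref{cond:type:implication:neg}; it suffices to exhibit, in every open neighborhood of $w$, a world lying in $\val\psi\setminus\val\chi$, which shows $w$ is not in the interior above. If $\psi\in\ell(w)^+$ then $w$ itself works by the induction hypothesis. Otherwise $\psi\imp\chi$ is a defect of $\ell(w)$, and the defining property of a labelled space supplies, in each neighborhood, a world $v$ revoking it, i.e.\ with $\psi\in\ell(v)^+$ and $\chi\in\ell(v)^-$, whence $v\in\val\psi\setminus\val\chi$ by the induction hypothesis.

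For $\tnext\psi$ the two clauses follow from sensibility of the pair $(w,f(w))$: conditions \ref{ItCompOne} and \ref{ItCompTwo} transfer $\psi$ to $\ell(f(w))$ on the appropriate side, and one concludes via $\val{\tnext\psi}=f^{-1}\val\psi$ and the induction hypothesis. For $\diam\psi$, clause (1) is precisely where $\omega$-sensibility is used: it produces $n$ with $\psi\in\ell(f^n(w))^+$, so $f^n(w)\in\val\psi$ and $w\in f^{-n}\val\psi\subseteq\val{\diam\psi}$. Clause (2) follows by propagating $\diam\psi\in\ell(w)^-$ forward along $f$ using condition \ref{ItCompFour}, then extracting $\psi\in\ell(f^n(w))^-$ at every $n$ via condition \ref{cond:type:diam}, so that no $f^n(w)$ lies in $\val\psi$ and hence $w\notin\val{\diam\psi}$.

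I expect the implication case to be the only genuine subtlety, as it is the sole point where the Aleksandroff topology is really felt: the positive direction requires invoking continuity to secure an open neighborhood on which $\psi\imp\chi$ stays positive, while the negative direction needs the defect-revoking property to defeat the interior. The modal cases, by contrast, are mechanical once the correct closure condition is identified for each clause ($\omega$-sensibility for positive $\diam$, and conditions \ref{ItCompFour} and \ref{cond:type:diam} for negative $\diam$).
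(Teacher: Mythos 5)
Your proof is correct and follows essentially the same route as the paper's: a structural induction in which the propositional cases reduce to the type conditions, the implication case uses continuity of $\ell$ (equivalently, monotonicity along $\peq$ in the Aleksandroff topology) for the positive clause and the defect-revoking property for the negative clause, and the modal cases use sensibility, determinism and $\omega$-sensibility exactly as the paper does. Your explicit case split in the negative implication case (using $w$ itself as witness when $\psi\in\ell(w)^+$, and the revoking property only when $\psi\imp\chi$ is genuinely a defect) is a careful and correct reading of Definition \ref{frame} that the paper glosses over.
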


\fullproof{
\proof
We proceed by structural induction on $\varphi$. We must consider the following cases.
\medskip

\noindent ($\varphi = p$ is an atom) \ Note that by definition of $\val{p}^{\ell}$, if $p \in {\ell}^+(w)$ then $w \in \val{p}^{\ell}$ and if $p \in {\ell}^-(w)$ then $p \not \in {\ell}^+(w)$ so $w \not \in \val{p}^{\ell}$.
\medskip

\noindent ($\varphi = \psi \wedge \theta$) \ Assume that $\psi \wedge \theta \in {\ell}^+(w)$. By Definition~\ref{def:type} it follows that $\psi \in {\ell}^+(w)$ and $\theta \in {\ell}^+(w)$. By induction hypothesis,  $w \in \val{\psi}^{\ell}$ and $w \in \val{\theta}^{\ell}$. Therefore $w \in \val{\psi \wedge \theta}^{\ell}$.

If $\psi \wedge \theta \in {\ell}^-(w)$, by definition~\ref{def:type} it follows that either $\psi \in {\ell}^-(w)$ or $\theta \in {\ell}^-(w)$. By induction hypotheses we conclude that $w \not \in \val{\psi}^{\ell}$ or $w \not \in \val{\theta}^{\ell}$. Therefore $w \not \in \val{\psi \wedge \theta}^{\ell}$.
\medskip

\noindent ($\varphi = \psi \vee \theta$) This case is symmetric, but using the conditions for $\vee$.
\medskip

\noindent ($\varphi = \psi \to \theta$) \
Assume first that $\psi \to \theta \in {\ell}^+(w)$.
Then for all $y$ such that $w \peq y$, by condition~\ref{cond:frame:monotony} of Definition~\ref{frame}, $\psi \to \theta \in \ell(y)^+$.
By condition~\ref{cond:type:implication} of Definition~\ref{def:type} and by induction hypothesis, $y \notin {\val \psi}_\fr Q$ or $y \in {\val \theta}_\fr Q$.
Therefore, $w \in {\val{\psi \to \theta}}_\fr Q$.

Now let us assume that $\psi \to \theta \in {\ell}^-(w)$. By Definition~\ref{frame} it follows that there exists $v \in W$ such that $w\peq v$ and $\psi \in {\ell}^+(v)$ and $\theta \in {\ell}^-(v)$. By induction hypothesis it follows that $v  \in \val{\psi}^{\ell} \setminus \val{\theta}^{\ell}$, which means that $w \not \in \val{\psi\to \theta}^{\ell}$.
\medskip

\noindent ($\varphi = \tnext \psi$) \ Assume that $\tnext \psi \in {\ell}^+(w)$. Since $S$ is sensible, $\psi \in {\ell}^+(S(w))$. By induction hypothesis $S(w) \in \val{\psi}^{\ell}$. Therefore $w \in \val{\tnext \psi}^{\ell}$. The case where $\tnext \psi \in {\ell}^-(w)$ is analogous.
\medskip

\noindent ($\varphi = \diam \psi$) If $\diam \psi \in {\ell}^+(w)$, by the fact that $S$ is $\omega$-sensible there exists $v\in W$ such that $w \mathrel S^n v$ and $\psi \in {\ell}^+(v)$; since $S$ is deterministic, we must forcibly have $v=S^n(w)$. By induction hypothesis we conclude that $v \in \val{\psi}^{\ell}$ and by the satisfaction relation it follows that $w \in \val{\diam \psi}^{\ell}$.

In case that $\diam \psi \in {\ell}^-(w)$, observe that for all $n$, if $\diam \psi \in {\ell}^-(S^n(w))$ then $\diam \psi \in {\ell}^-(S^{n+1}(w))$; thus by induction, $\diam \psi \in {\ell}^-(S^n(w))$ for all $n<\omega$. In virtue of Definition \ref{def:type}.\ref{cond:type:diam}, $\psi \in {\ell}^-(S^n(w))$ for all $n<\omega$, hence by the induction hypothesis $S^n(w) \not \in \val \psi$, from which it follows that $w \not \in \val{\diam \psi}$.
\endproof
}

In the non-deterministic case quasimodels are not models as they stand, but in \cite{FernandezITLc}, it is shown that dynamical systems can be extracted from them.

\begin{theorem}[Fern\'andez-Duque \cite{FernandezITLc}]\label{TheoITLc}
A formula $\varphi \in \landif$ is satisfiable (falsifiable) over the class of dynamic topological systems if and only if it is satisfiable (falsifiable) over the class of saturated, finite, ${\rm sub}(\varphi)$-quasimodels.
\end{theorem}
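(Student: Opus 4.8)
The plan is to establish the two implications separately, and throughout to treat only satisfiability, the falsifiability case being symmetric once $\ell^+$ is everywhere interchanged with $\ell^-$. For the implication from models to finite quasimodels I would argue by a filtration that exploits the non-determinism of quasimodels in order to achieve finiteness. Given a model $\mathfrak X$ and a point $x_0$ with $\varphi\in\ell_{\mathfrak X}(x_0)^+$, set $\Sigma={\rm sub}(\varphi)$ and recall that $\mathfrak X$ is already a well $\Sigma$-labelled system with saturated labels $\ell_{\mathfrak X}(x)$. Since $\Sigma$ is finite there are only finitely many saturated $\Sigma$-types, so I collapse $\mathfrak X$ onto the set of types it realises: take $|\mathfrak Q|=\ell_{\mathfrak X}[|\mathfrak X|]\subseteq\type\Sigma$, order it so that positive membership grows upward (the order $\peqT$), equip it with the corresponding up-set topology, let $\ell$ be the inclusion, and put $\Phi\mathrel{S}\Psi$ whenever some $x$ with $\ell_{\mathfrak X}(x)=\Phi$ satisfies $\ell_{\mathfrak X}(f_{\mathfrak X}(x))=\Psi$.

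The verifications that this yields a saturated quasimodel are routine but must be carried out. Continuity of $\ell$ and the defect condition of Definition~\ref{frame} follow because the revoking points guaranteed in $\mathfrak X$ have realised, $\peqT$-larger types; the sensibility clauses of Definition~\ref{compatible} hold because $S$ is read off the dynamics $f_{\mathfrak X}$; forward confluence of $S$ follows from continuity of $f_{\mathfrak X}$; and $\omega$-sensibility together with honesty transfer directly from the corresponding properties of $\mathfrak X$. The essential point is that replacing the function $f_{\mathfrak X}$ by the relation $S$ is exactly what permits the collapse to finitely many worlds, since distinct points with the same type but different successors are merged into a single non-deterministic world.

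For the converse I would reduce to the truth lemma already available for deterministic quasimodels: by Lemma~\ref{LemTruth} (together with honesty for the universal modality) it suffices to turn a finite saturated quasimodel $\mathfrak Q$ satisfying $\varphi$ into an honest \emph{deterministic} $\omega$-sensible quasimodel still satisfying $\varphi$, since such a structure is, via Definition~\ref{ndqm}, essentially a dynamical model on which $\varphi$ is again satisfied. To determinise I unwind $\mathfrak Q$ along its relation $S$: the worlds of the new structure are the infinite $S$-paths $\sigma=\langle\sigma(0),\sigma(1),\dots\rangle$ through $\mathfrak Q$, the dynamics is the shift $f(\sigma)=\langle\sigma(1),\sigma(2),\dots\rangle$, the label of $\sigma$ is $\ell_{\mathfrak Q}(\sigma(0))$, and the order is inherited coordinatewise from $\peq_{\mathfrak Q}$. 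The shift is a genuine function and is monotone, hence deterministic and continuous; honesty is inherited from $\mathfrak Q$; and a path beginning at a world witnessing $\varphi$ witnesses $\varphi$ in the unwinding.

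The main obstacle is ensuring that the deterministic unwinding remains \emph{$\omega$-sensible}: once a unique successor is fixed at each point, every obligation $\diam\theta\in\ell^+$ must still be discharged along the single forward orbit. I would handle this by restricting attention to \emph{fair} paths, in which the eventualities generated at each coordinate are scheduled in round-robin fashion so that each $\diam\theta$ encountered is eventually witnessed; $\omega$-sensibility of $\mathfrak Q$ guarantees that a witnessing finite $S$-run always exists to be inserted, and forward confluence of $S$ guarantees that such insertions can be lifted compatibly with the poset structure, preserving continuity of the labelling and the revoking condition of Definition~\ref{frame}. Verifying that the space of fair paths is non-empty, is closed under the shift, carries an up-set topology for which the labelling is continuous, and still satisfies the defect condition is where the real work lies; once this is done, Lemma~\ref{LemTruth} converts the determinised quasimodel into the required model and completes the argument.
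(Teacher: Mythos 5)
First, a point of reference: the paper does not prove Theorem \ref{TheoITLc} at all --- it imports it from \cite{FernandezITLc}, adding only the remark (cf.\ Remark \ref{RemarkTypes}) that the one-sided types used there convert to the two-sided types used here. So your argument must stand on its own, and its first half does not: the model-to-quasimodel direction has a genuine gap at forward confluence. Your relation $S$ on realized types ($\Phi \mathrel S \Psi$ iff some $x$ has $\ell_{\fr X}(x)=\Phi$ and $\ell_{\fr X}(f_{\fr X}(x))=\Psi$) need not be a continuous relation, and continuity of $f_{\fr X}$ is of no help: forward confluence demands that whenever $\Phi' \succcurlyeq_T \Phi \mathrel S \Psi$ there is $\Psi' \succcurlyeq_T \Psi$ with $\Phi' \mathrel S \Psi'$, but the points realizing $\Phi'$ may lie in a completely different region of the space than the point $x$ witnessing $\Phi \mathrel S \Psi$, so their $f$-images are order-unconstrained. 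Concretely, take the expanding poset $\{a \prec a',\ b,\ \alpha\prec\alpha',\ \beta\}$ with $f(a)=\alpha$, $f(a')=\alpha'$, $f(b)=\beta$, the points $\alpha,\alpha',\beta$ fixed, $\val p = \{\beta\}$, $\val q = \{a',\alpha,\alpha'\}$, and $\Sigma = {\rm sub}(p\vee q)$. Then $\ell(a)=\ell(b)=:\Phi$ is the all-negative type, $\ell(a')=\ell(\alpha)=\ell(\alpha')=:\Psi_a$ and $\ell(\beta)=:\Psi_b$ are incomparable, and in your collapse $\Psi_a \succcurlyeq_T \Phi \mathrel S \Psi_b$ while the only $S$-successor of $\Psi_a$ is $\Psi_a \not\succcurlyeq_T \Psi_b$; so the collapse is not a quasimodel. (Here one could fatten $S$ by hand, since $\Sigma$ contains no temporal formulas, but once $\diam$-formulas are present any repair must simultaneously preserve sensibility and $\omega$-sensibility, and that is exactly the hard content of the theorem.) This failure is precisely why \cite{FernandezITLc} does not collapse points to types but to \emph{moments} --- finite labelled trees recording the order structure around a point --- and extracts the finite quasimodel from the initial structure $\irr\Sigma$ of Theorem \ref{thmSurjI}: a bare type carries too little information to recover a continuous successor relation.

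Your second half is, in spirit, the limit-model construction of \cite{FernandezITLc}, of which Section \ref{SecCons} is a finite-path variant, but the specific form you propose --- infinite fair paths ordered coordinatewise with the up-set topology --- hides a problem in exactly the step you defer. To verify the defect condition of Definition \ref{frame} at a fair path $\alpha$ with $\varphi\to\psi\in\ell(\alpha)^-$, you must produce a \emph{fair} path $\beta$ with $\beta_i\succcurlyeq\alpha_i$ for \emph{every} $i$ and with $\beta_0$ a revoking point. Forward confluence lifts $\alpha$ to a path from any revoking $\beta_0$, but the lift may acquire new eventualities $\diam\theta\in\ell(\beta_i)^+\setminus\ell(\alpha_i)^+$, and the witnessing runs supplied by $\omega$-sensibility need not stay above the tail of $\alpha$; round-robin scheduling handles fairness of a freely built path, not of one constrained to dominate $\alpha$ in all coordinates forever. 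This is why the limit model of \cite{FernandezITLc} carries a topology that is not an up-set topology (the present paper explicitly notes that the infinite-path unwinding yields a topological space rather than a poset), and why Section \ref{SecCons}, wanting a poset, switches to \emph{finite} terminal typed paths (Definition \ref{def:term-path}, Lemma \ref{LemTerminal}), where domination constraints expire after finitely many coordinates --- and then only for $\landi$, since, e.g., the empty path with empty positive label is incompatible with honesty for $\forall$. Indeed, if your Alexandroff unwinding worked for the full language $\landif$, it would show that $\landif$-satisfiability over dynamical systems coincides with satisfiability over posets, a strengthening of Theorem \ref{TheoConservativity} and Corollary \ref{corKripkeComplete} that neither this paper nor \cite{FernandezITLc} claims; that alone should give you pause.
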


Note that \cite{FernandezITLc} uses quasimodels with one-sided types, but in view of Remark \ref{RemarkTypes}, the theorem can easily be modified to obtain quasimodels with two-sided types.
Two-sided types will be more convenient for us, especially in Section \ref{SecCons}.
Below, recall that for a structure $\fr A$ and $U\subseteq |\fr A|$, $\fr A \upharpoonright U$ is the substructure of $\fr A$ obtained by restricting all functions and relations of $\fr A$ to $U$.

\begin{lemma}\label{LemmIsQuasi}
Let $\fr Q$ be a (weak) quasimodel and $U \subseteq |\fr Q|$ be open. If either
\begin{enumerate*}[label=\arabic*)]
\item $S_{\fr Q} \upharpoonright U$ is serial and $\omega$-sensible, or
\item $U$ is $S _{\fr Q} $-invariant (i.e., $S _{\fr Q} (U) \subseteq U$),
\end{enumerate*}
then $\fr Q \upharpoonright U$ is a (weak) quasimodel.
\end{lemma}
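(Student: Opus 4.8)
The plan is to separate the two layers hidden in the definition of a quasimodel: the underlying \emph{weak} quasimodel structure, which will follow purely from $U$ being open, and the extra well-ness conditions (honesty and $\omega$-sensibility), which is where hypotheses 1) and 2) enter. First I would dispatch the weak case. That $\fr Q\upharpoonright U$ is a $\Sigma$-labelled space is exactly Lemma \ref{LemmOpenSubst}, since $U$ is open. Because $\mathcal T_{\fr Q}$ is the up-set topology of $\peq_{\fr Q}$ and $U$ is open, hence an up-set, the subspace topology on $U$ coincides with the up-set topology of the restricted order $\peq_{\fr Q}\cap(U\times U)$: any up-set of the restriction is the trace on $U$ of its upward closure in $\fr Q$, which stays inside $U$ precisely because $U$ is upward closed. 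It then remains to see that $S_{\fr Q}\upharpoonright U$ is a sensible continuous relation. Sensibility is immediate, as every pair of $S_{\fr Q}\upharpoonright U\subseteq S_{\fr Q}$ is already sensible in the sense of Definition \ref{compatible}. For continuity I would verify forward confluence directly: if $w'\seq w\mathrel{S_{\fr Q}}v$ with $w',w,v\in U$, continuity of $S_{\fr Q}$ gives some $v'$ with $w'\mathrel{S_{\fr Q}}v'\seq v$; since $v\in U$ and $v\peq v'$, openness of $U$ forces $v'\in U$, so $w'\mathrel{(S_{\fr Q}\upharpoonright U)}v'\seq v$. Crucially, none of this uses 1) or 2), so the weak quasimodel claim is already complete.

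For the full quasimodel case I must additionally preserve the two well-ness requirements of Definition \ref{frame} and Definition \ref{compatible}: honesty of $\ell$ and $\omega$-sensibility of the relation. Honesty's positive clause is inherited for free, since a property asserted at \emph{every} world of $\fr Q$ holds in particular at every world of $U$. The substantive content is $\omega$-sensibility, i.e.\ seriality together with the demand that each $\diam\varphi\in\ell(w)$ be witnessed by a finite chain $w\mathrel{S_{\fr Q}^{\,n}}v$ with $\varphi\in\ell(v)$; the difficulty is that the witnessing chain supplied by $\fr Q$ must be relocated \emph{inside} $U$. Under hypothesis 1) there is nothing to do, as $S_{\fr Q}\upharpoonright U$ is assumed serial and $\omega$-sensible outright. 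Under hypothesis 2) I would argue that seriality of $S_{\fr Q}\upharpoonright U$ holds because $S_{\fr Q}$ is serial and $S_{\fr Q}(U)\subseteq U$, so any $S_{\fr Q}$-successor of a point of $U$ already lies in $U$; and the witnessing chain for $\diam\varphi$ stays in $U$ by a trivial induction on its length, again using $S_{\fr Q}(U)\subseteq U$, so it is genuinely a chain of $S_{\fr Q}\upharpoonright U$.

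The step I expect to be the main obstacle is exactly this relocation of \emph{existential} witnesses into $U$. The `universal' parts of the definition — the defect/labelled-space condition, the forward-confluence continuity, and the positive honesty clause — survive any open restriction automatically. It is the existential parts, namely seriality and the $\diam$-eventualities (and, when $\forall$ occurs in $\Sigma$, the negative honesty clause, which also demands a witness within $U$), that require the witnessing successors and chains to remain available after cutting down to $U$. Hypotheses 1) and 2) are precisely the two natural ways of securing this for the $\diam$-eventualities and seriality — by assumption, or by forward invariance of $U$ — and I would treat the $\omega$-sensibility verification under 2) as the delicate point, checking carefully that the inductive argument keeping the chain inside $U$ uses only $S_{\fr Q}(U)\subseteq U$ and openness.
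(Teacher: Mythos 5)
Your proposal follows essentially the same route as the paper's own proof: Lemma \ref{LemmOpenSubst} handles the labelled-space part, sensibility of $S_{\fr Q}\upharpoonright U$ is immediate by inclusion, continuity follows from openness of $U$, and seriality together with $\omega$-sensibility come either from hypothesis 1) by fiat or from hypothesis 2) by invariance. The extra details you supply --- the identification of the subspace topology with the up-set topology of the restricted order, the forward-confluence check, and the induction keeping the $\diam\varphi$-witnessing chain inside $U$ when $S_{\fr Q}(U)\subseteq U$ --- are exactly what the paper compresses into ``clearly'' and ``follows easily'', so on these points you are in full agreement, just more explicit.

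The one place where you stop short is the negative honesty clause, and you are right to be suspicious: this is a genuine problem, but with the lemma as stated rather than with your argument. By Definitions \ref{frame}, \ref{compatible} and \ref{ndqm}, a (non-weak) quasimodel must be honest, and the clause ``$\forall\varphi\in\ell^-(w)$ implies $\varphi\in\ell^-(v)$ for some $v$'' is existential, so neither hypothesis forces the witness to survive the restriction. A two-point counterexample: let $\Sigma=\{p,\forall p\}$, take incomparable points $w,v$ with $\ell(w)=(\{p\},\{\forall p\})$ and $\ell(v)=(\varnothing,\{p,\forall p\})$, let $S$ be the identity, and set $U=\{w\}$; then $U$ is open and $S$-invariant and $\fr Q$ is an honest quasimodel, but $\fr Q\upharpoonright U$ is not honest. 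The paper's own proof never mentions honesty at all, so in effect it establishes the lemma with ``quasimodel'' read as ``weak quasimodel with a serial, $\omega$-sensible relation'', i.e.\ with honesty dropped --- and this weaker reading suffices for every use the paper makes of it: in Corollary \ref{laststretch} as applied in Theorem \ref{theocomp} one has $\Sigma\subseteq\landi$, so there are no $\forall$-formulas and honesty is vacuous, while in Lemma \ref{lemSInv} and Theorem \ref{theoUnivComp} honesty of the restricted structure is re-established separately via Lemma \ref{LemTotSim}. So your proof is exactly as complete as the paper's; the subtlety you flagged cannot be closed by a cleverer argument under hypotheses 1) and 2), only by weakening the statement (or adding honesty of the restriction as a hypothesis) as the paper implicitly does.
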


\fullproof{
\proof We must show that ${\cl Q}$ satisfies all properties of Definition~\ref{def:quasimodel}. First we check that
$(U , \mathord{\peq}_{{\cl Q}\upharpoonright U}, \ell_{{\cl Q}\upharpoonright U})$
is a labelled frame.
The relation $\mathord{\peq}_{{\cl Q}\upharpoonright U}$ is a partial order, since restrictions of partial orders are partial orders.
Similarly, if $x \peq_{\cl Q\upharpoonright U} y$ it follows that
$x \peq_{{\cl Q}} y$, so that from the definition of
$\ell_{{\cl Q}\upharpoonright U}$ it is easy to deduce that
$\ell_{{\cl Q}\upharpoonright U}(x) \peqT \ell_{{\cl Q}\upharpoonright U}(y)$.

To check that condition~\ref{cond:frame:imp} holds, let us take $x \in U$ and a formula $\varphi \to \psi \in \ell_{{\cl Q}\upharpoonright U}^-(x)$. By definition, $\varphi \to \psi \in \ell_{{\cl Q}}^-(x)$ so there exists $y \in |{\cl Q}|$ such that $x \peq_{\cl Q} y$, $\varphi \in \ell_{\cl Q}^+(y)$ and $\psi \in \ell_{\cl Q}^-(y)$. Since $U$ is upward closed then $y\in U$ and, by definition, $x  \peq_{{\cl Q}\upharpoonright U} y$, $\varphi \in \ell_{{\cl Q}\upharpoonright U}^+(y)$ and $\psi \in \ell_{{\cl Q}\upharpoonright U}^-(y)$, as needed.
\medskip

Now we check that the relation $S_{\cl Q\upharpoonright U}$ satisfies \eqref{itSerial}-\eqref{itOmega}.
Note that $S_{\cl Q\upharpoonright U}$ is serial and $\omega$-sensible by assumption and it is clearly sensible as $S_\cl Q$ was already sensible, so it remains to see that $S_{{\cl Q}\upharpoonright U}$ is forward-confluent. Take $x,y, z \in U$ such that $ x  \peq_{{\cl Q}\upharpoonright U} y$ and $x \mathrel S_{{\cl Q}\upharpoonright U} z$. By definition $x  \peq _{\cl Q} y$ and $x \mathrel S_{\cl Q} z$. Since $S_{\cl Q}$ is confluent, there exists $t \in |{\cl Q}|$ such that $ z  \peq _{\cl Q} t$ and $y \mathrel S_{\cl Q} t$. Since $U$ is upward closed $t \in U$ and, by definition, $y \mathrel S_{{\cl Q}\upharpoonright U} t$ and $z \mathord{\peq}_{{\cl Q}\upharpoonright U} t$.
\endproof
}

\shortproof{
\proof
By Lemma \ref{LemmOpenSubst} we know that $\fr Q \upharpoonright U$ is a labelled frame, while $S_{\fr Q \upharpoonright U}$ is clearly sensible.
Since $U$ is open and $S_\fr Q$ is continuous, $S_{\fr Q \upharpoonright U}$ is continuous as well.
Thus it remains to show that $S_{\fr Q\upharpoonright U}$ is serial and $\omega$-sensible, which in the first case holds by assumption and in the second follows easily from $S_\fr Q$ already having these properties.
\endproof
}

As usual, if $\varphi$ is not derivable, we wish to produce a model where $\varphi$ is falsified, but in view of Theorem \ref{TheoITLc}, it suffices to falsify $\varphi$ on a quasimodel. This is convenient, as quasimodels are much easier to construct than models.

\section{The canonical model}\label{secCanMod}

In this section we construct a standard canonical model for any logic $\Lambda$ extending ${\sf ITL}^0_\nx$.
From this we will obtain some completeness results for logics over $\mathcal L_\nx$.
However, in the presence of $\ps$, the standard canonical model is only a saturated, weak, deterministic quasimodel rather than a proper model.
Nevertheless, the canonical model will later be a useful ingredient in our completeness proofs for ${\sf ITL}^0_\ps$ and ${\sf ITL}^0_{\ps \forall}$.
Since we are working over an intuitionistic logic, the role of maximal consistent sets will be played by prime types, as defined below. 

\begin{definition}\label{def:prime}
Let $\mathcal L$ be a temporal language and $\Lambda$ a logic over $\mathcal L$.
Given two sets of formulas $\Gamma$ and $\Delta$, we say that $\Delta$ is a consequence of $\Gamma$ (with respect to $\Lambda$), denoted by $\Gamma \vdash \Delta$, if there exist finite $\Gamma'\subseteq \Gamma$ and $\Delta' \subseteq \Delta$ such that $ \Lambda \vdash \bigwedge \Gamma' \to \bigvee \Delta'$.

We say that a pair of sets $\Phi =(\Phi^+,\Phi^-)$ is {\em consistent} if $\Phi^+ \not\vdash \Phi^-$. A saturated, consistent $\mathcal L$-type is a {\em prime type.} The set of prime $\mathcal L$-types will be denoted $\ptypel {\mathcal L}$.
\end{definition}

Note that we are using the standard interpretation of $\Gamma \vdash \Delta$ in Gentzen-style calculi.
The logic $\Lambda$ will always be clear from context, which is why we do not reflect it in the notation.
When working within a turnstyle, we will follow the usual proof-theoretic conventions of writing $\Gamma,\Delta$ instead of $\Gamma \cup \Delta$ and $\varphi$ instead of $\{\varphi\}$.
Observe that there is no clash in terminology regarding the use of the word {\em type:}

\begin{lemma}\label{lemmPrimeIsType}
If $\Lambda$ is an admissible temporal logic over a language $\mathcal L$ and $\Phi$ is a prime $\mathcal L$-type then $\Phi$ is an $\cl L$-type.
\end{lemma}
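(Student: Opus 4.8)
The plan is to verify that a prime $\mathcal L$-type $\Phi$ satisfies each of the nine closure conditions of Definition~\ref{def:type}. Since $\Phi$ is consistent, we have $\Phi^+ \not\vdash \Phi^-$, and since it is saturated, $\Sigma = \Phi^+ \cup \Phi^-$ with $\Phi^+ \cap \Phi^- = \varnothing$ (the latter giving condition~\ref{cond:type:intersection} immediately, as any formula in the intersection would make $\Phi^+ \vdash \Phi^-$ trivially). The key observation driving every remaining case is saturation: for any subformula $\theta \in \Sigma$, either $\theta \in \Phi^+$ or $\theta \in \Phi^-$, so to place $\theta$ on the correct side it suffices to rule out the wrong side using consistency together with the appropriate intuitionistic propositional validity.

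The uniform method I would use is proof by contradiction against consistency. For a representative case, take condition~\ref{cond:type:posconj}: suppose $\varphi\wedge\psi \in \Phi^+$ but, say, $\varphi \notin \Phi^+$. By saturation $\varphi \in \Phi^-$, and then $\{\varphi\wedge\psi\} \vdash \{\varphi\}$ holds since $\Lambda \vdash (\varphi\wedge\psi)\to\varphi$ by intuitionistic propositional logic, witnessing $\Phi^+ \vdash \Phi^-$, a contradiction. The negative conjunction case~\ref{cond:type:negconj} is dual: if $\varphi\wedge\psi\in\Phi^-$ but neither $\varphi$ nor $\psi$ lies in $\Phi^-$, then by saturation both are in $\Phi^+$, and $\{\varphi,\psi\}\vdash\{\varphi\wedge\psi\}$ via $\Lambda \vdash \varphi\to(\psi\to(\varphi\wedge\psi))$. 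The disjunction conditions~\ref{cond:type:posdisj} and~\ref{cond:type:negdisj}, the bottom condition~\ref{cond:type:bot} (using $\bot\vdash\varnothing$, i.e.\ $\Lambda\vdash\bot$ would follow), and both implication conditions~\ref{cond:type:implication},~\ref{cond:type:implication:neg} all follow by the same template, each time invoking the relevant propositional validity: for~\ref{cond:type:implication} one uses modus ponens, $\{\varphi\to\psi,\varphi\}\vdash\{\psi\}$; for~\ref{cond:type:implication:neg} one uses $\Lambda \vdash \psi\to(\varphi\to\psi)$.

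The one case meriting slightly more care is condition~\ref{cond:type:diam}: if $\diam\varphi\in\Phi^-$ then $\varphi\in\Phi^-$. Supposing otherwise, saturation gives $\varphi\in\Phi^+$, and we need $\{\varphi\}\vdash\{\diam\varphi\}$, which holds because $\Lambda\vdash\varphi\to\diam\varphi$ is derivable from axiom~\ref{ax10DiamFix} by propositional reasoning (as already noted at the start of the proof of Lemma~\ref{lemmReverseDiam}). This is the only condition that genuinely uses the admissibility hypothesis beyond intuitionistic propositional logic, since it relies on $\Lambda$ extending ${\sf ITL}^0_\ps$ when $\diam$ is in the language; when $\diam \notin \mathcal L_M$ the condition is vacuous.

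I do not anticipate any serious obstacle: the whole argument is a routine bookkeeping exercise in which each closure condition corresponds to one easily checked intuitionistic (or temporal) theorem, and the single recurring engine is ``saturation forces the formula onto one side, and consistency forbids the wrong side.'' If anything, the only subtlety is making sure each cited validity is indeed intuitionistically (not merely classically) provable—which all of the above are—and that the $\diam$ case correctly invokes the axioms of ${\sf ITL}^0_\ps$ rather than classical duality. Consequently I would likely compress the routine cases and state explicitly only the conjunction and the $\diam$ cases, leaving the rest to the reader.
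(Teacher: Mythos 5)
Your proposal is correct and follows essentially the same strategy as the paper's proof: both derive each closure condition of Definition~\ref{def:type} from saturation plus consistency via an appropriate intuitionistic propositional tautology, and both single out condition~\ref{cond:type:diam} as the one case requiring the temporal axiom \ref{ax10DiamFix} (through the derivability of $\varphi \to \diam\varphi$). The only cosmetic difference is which routine case is worked out in full (you chose conjunction, the paper chose implication), so nothing of substance separates the two arguments.
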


\begin{proof}
Let $\Phi$ be a prime $\cl L$-type. Observe that $\Phi$ is already saturated by definition, so it remains to check that it satisfies all conditions of Definition \ref{def:type}.

Conditions \ref{cond:type:intersection} and \ref{cond:type:bot} follow from the consistency of $\Phi$.
The proofs of the other conditions are all similar to each other. For example, for \ref{cond:type:implication}, suppose that $\varphi \to \psi \in \Phi^+$ and $\varphi \not \in \Phi^-$. Since $\Phi$ is saturated, it follows that $\varphi \in \Phi^+$. But $\big (\varphi \wedge (\varphi \to \psi)\big) \to \psi$ is an intuitionistic tautology, so using the fact that $\Phi$ is consistent we see that $\psi \not \in \Psi^-$, which using the assumption that $\Phi$ is saturated gives us $\psi \in \Phi^+$.
For condition \ref{cond:type:diam} we use \ref{ax10DiamFix}: if $\diam \varphi \in \Phi^-$ and $\varphi \in \Phi^+$ we would have that $\Phi$ is inconsistent, hence $\varphi \in \Phi^-$. The rest of the conditions are left to the reader.
\end{proof}

As with maximal consistent sets, prime types satisfy a Lindenbaum property.

\begin{lemma}[Lindenbaum Lemma]\label{LemmLind}
Fix an admissible temporal logic $\Lambda$ over $\mathcal L$.	Let $\Gamma,\Delta \subseteq \mathcal L$. If $\Gamma \not\vdash \Delta$ then there exists a prime type $\Phi$ such that $\Gamma \subseteq \Phi^+$ and $\Delta \subseteq \Phi^-$.
\end{lemma}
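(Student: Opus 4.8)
The plan is to carry out a standard Lindenbaum-style construction adapted to the two-sided setting, building $\Phi$ as the union of an increasing chain of pairs that decide, one formula at a time, whether each $\chi \in \mathcal L$ lands in the positive or the negative component, all the while maintaining consistency. Since $\mathbb P$ is countable, so is $\mathcal L$; I would fix an enumeration $\chi_0, \chi_1, \dots$ of $\mathcal L$ and set $(\Gamma_0, \Delta_0) = (\Gamma, \Delta)$, which satisfies the invariant $\Gamma_0 \not\vdash \Delta_0$ by hypothesis. At stage $n+1$ I extend $(\Gamma_n, \Delta_n)$ by placing $\chi_n$ on one of the two sides in a way that preserves the invariant, and the desired prime type will be $\Phi = (\bigcup_n \Gamma_n, \bigcup_n \Delta_n)$.

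The crux is the following extension step, which I expect to be the main obstacle: if $\Gamma' \not\vdash \Delta'$, then for every formula $\chi$ at least one of $\Gamma', \chi \not\vdash \Delta'$ or $\Gamma' \not\vdash \Delta', \chi$ holds. Happily it reduces to intuitionistic propositional reasoning, which is available since every admissible logic extends intuitionistic propositional logic. The argument is by contradiction: suppose both $\Gamma', \chi \vdash \Delta'$ and $\Gamma' \vdash \Delta', \chi$. Unwinding the definition of $\vdash$, there are finite $\Gamma_1, \Gamma_2 \subseteq \Gamma'$ and $\Delta_1, \Delta_2 \subseteq \Delta'$ with $\Lambda \vdash (\bigwedge \Gamma_1 \wedge \chi) \to \bigvee \Delta_1$ and $\Lambda \vdash \bigwedge \Gamma_2 \to (\bigvee \Delta_2 \vee \chi)$. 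Writing $g = \bigwedge(\Gamma_1 \cup \Gamma_2)$ and $d = \bigvee(\Delta_1 \cup \Delta_2)$ and weakening, I obtain $\Lambda \vdash (g \wedge \chi) \to d$ and $\Lambda \vdash g \to (d \vee \chi)$; a routine case split (disjunction elimination, intuitionistically valid) then yields $\Lambda \vdash g \to d$, that is $\Gamma' \vdash \Delta'$, contradicting the invariant. Hence at stage $n+1$ I may always add $\chi_n$ to whichever side keeps the invariant, and the invariant propagates through the construction.

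Finally I would set $\Phi^+ = \bigcup_n \Gamma_n$ and $\Phi^- = \bigcup_n \Delta_n$ and check the two required properties. Saturation, $\Phi^+ \cup \Phi^- = \mathcal L$, is immediate since every $\chi_n$ is decided at stage $n+1$. For consistency, suppose toward a contradiction that $\Phi^+ \vdash \Phi^-$; since $\vdash$ invokes only finite subsets of each side and the chains are increasing, these finite witnesses already lie in $\Gamma_n$ and $\Delta_n$ for a single $n$, whence $\Gamma_n \vdash \Delta_n$, contradicting the invariant. Thus $\Phi$ is a saturated, consistent pair over $\mathcal L$, hence a prime type (indeed an $\mathcal L$-type by Lemma \ref{lemmPrimeIsType}), and by construction $\Gamma = \Gamma_0 \subseteq \Phi^+$ and $\Delta = \Delta_0 \subseteq \Phi^-$, as required.
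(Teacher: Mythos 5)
Your proof is correct and takes essentially the same approach as the paper: the paper's own (sketched) proof is precisely this one-formula-at-a-time extension, justified by an intuitionistically derivable cut rule, followed by taking the union (the paper notes one may alternatively invoke Zorn's lemma). You have simply filled in the details the paper leaves to the reader, namely the explicit disjunction-elimination argument establishing the cut and the finiteness/compactness check that the union remains consistent and saturated.
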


\proof
The proof is standard, but we provide a sketch.
Let $\varphi \in \cl L $. Note that either $\Gamma ,\varphi  \not\vdash \Delta $ or $\Gamma  \not \vdash \Delta,\varphi$, for otherwise by a cut rule (which is intuitionistically derivable) we would have $\Gamma \vdash \Delta$. Thus we can add $\varphi$ to $\Gamma \cup \Delta$, and by repeating this process for each element of $\cl L_\diam$ (or using Zorn's lemma) we can find suitable $\Phi$.
\endproof

Given a set $A$, let $\mathbb I_A$ denote the identity function on $A$. Let $\Lambda$ be an admissible temporal logic over $\cl L$. The canonical model $\CMod$ for $\Lambda$ is defined as the labelled structure
\[\CMod = (|\CMod|,{\peq_\CIcon },S_\CIcon ,\ell_\CIcon ) \eqdef  (\type{\cl L},{\peq_T},S_T,{\mathbb I}_{\ptypel{\cl L}})\upharpoonright \ptype;\]
in other words, $\CMod$ is the set of prime types with the usual ordering and successor relations. Note that $\ell_{\CIcon}$ is just the identity (i.e., $\ell_\CIcon (\Phi) = \Phi$).
We will usually omit writing $\ell_\CIcon $, as it has no effect on its argument.

Next we show that $\CMod$ is a saturated, weak, deterministic quasimodel. For this, we must prove that it has all the required properties.

\begin{lemma}
	\label{lemm:normality} Let $\Lambda$ be an admissible temporal logic over $\cl L$. Then, $\M_\CIcon $ is a labelled frame.
\end{lemma}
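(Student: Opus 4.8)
The plan is to verify directly that $\M_\CIcon$ satisfies the definition of a $\cl L$-labelled space (Definition \ref{frame}), instantiated for the up-set topology on prime types. Recall that $\M_\CIcon$ carries the partial order $\peqT$ restricted to $\ptypel{\cl L}$, its topology is the up-set topology $\mathcal U_{\peqT}$, and its labelling is the identity $\ell_\CIcon(\Phi) = \Phi$. Thus there are two things to check: that $\ell_\CIcon$ is continuous, and that the defect-revocation condition holds at every world. Since $\ell_\CIcon$ is the identity map into $(\ptypel{\cl L}, \peqT)$ equipped with its own up-set topology, continuity amounts to the observation (made in Section \ref{SecTopre}) that a map between Aleksandroff spaces is continuous iff it is order-preserving; here the map is the identity, so it is trivially monotone and hence continuous. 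A small subtlety: the topology on $\type\Sigma$ was introduced via $\subseteq$, whereas the canonical order is $\peqT$, so I would first note that $\M_\CIcon$ is by construction ordered by $\peqT$ and that continuity of $\ell_\CIcon$ with respect to this order is immediate because $\ell_\CIcon$ is the identity.

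The substantive step is the defect-revocation property. Fix a prime type $\Phi$ and suppose $\varphi \imp \psi \in \defect{\Phi}$, i.e.\ $\varphi \imp \psi \in \Phi^-$ but $\varphi \notin \Phi^+$. I must produce a prime type $\Psi$ with $\Phi \peqT \Psi$ (equivalently $\Psi^- \subseteq \Phi^-$ and $\Phi^+ \subseteq \Psi^+$), such that $\varphi \in \Psi^+$ and $\psi \in \Psi^-$; such a $\Psi$ revokes the defect, and since every $\peqT$-successor of $\Phi$ lies in every neighbourhood of $\Phi$ in the up-set topology, this single $\Psi$ witnesses the condition for all neighbourhoods $U$ simultaneously. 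To build $\Psi$, I would apply the Lindenbaum Lemma (Lemma \ref{LemmLind}) to the sets $\Gamma = \Phi^+ \cup \{\varphi\}$ and $\Delta = \Phi^- \cup \{\psi\}$.

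The heart of the argument — and the step I expect to be the main obstacle — is verifying the hypothesis $\Gamma \not\vdash \Delta$ of the Lindenbaum Lemma. Suppose for contradiction that $\Phi^+, \varphi \vdash \Phi^-, \psi$. By the Gentzen-style reading of the turnstile (Definition \ref{def:prime}), this yields finite $\Gamma' \subseteq \Phi^+$ and $\Delta' \subseteq \Phi^-$ with $\Lambda \vdash \bigwedge\Gamma' \wedge \varphi \to \bigvee\Delta' \vee \psi$. By intuitionistic deduction this gives $\Lambda \vdash \bigwedge\Gamma' \to (\varphi \to (\bigvee\Delta' \vee \psi))$, and hence $\bigwedge \Gamma' \to ((\varphi \to \psi) \to \dots)$ type manipulations; more directly, since $\varphi \imp \psi \in \Phi^-$ and $\psi$ appears on the right, one combines the derived implication with $\varphi\imp\psi$ to conclude $\Phi^+ \vdash \Phi^-$, contradicting the consistency of $\Phi$. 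The delicate point is the propositional bookkeeping: one must use that $\big(\varphi \to (\theta \vee \psi)\big) \wedge \neg(\varphi \to \psi)$ is intuitionistically incompatible with consistency, exploiting $\varphi\imp\psi \in \Phi^-$ to move $\psi$ from the succedent back into the context. Once $\Gamma \not\vdash \Delta$ is established, Lemma \ref{LemmLind} delivers the prime $\Psi$ with $\Phi^+ \cup \{\varphi\} \subseteq \Psi^+$ and $\Phi^- \cup \{\psi\} \subseteq \Psi^-$, which gives $\Phi \peqT \Psi$ together with $\varphi \in \Psi^+$ and $\psi \in \Psi^-$, as required. Having checked continuity and revocation, I conclude that $\M_\CIcon$ is a labelled frame.
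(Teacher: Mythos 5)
Your continuity step is fine and agrees with the paper's (the identity map is trivially order-preserving, hence continuous for the up-set topologies). The defect-revocation step, however, contains a fatal error. You apply the Lindenbaum Lemma (Lemma \ref{LemmLind}) to $\Gamma = \Phi^+ \cup \{\varphi\}$ and $\Delta = \Phi^- \cup \{\psi\}$, but for a defect we have $\varphi \notin \Phi^+$, and since prime types are saturated this means $\varphi \in \Phi^-$. So $\varphi$ lies in both $\Gamma$ and $\Delta$, and $\Gamma \vdash \Delta$ holds trivially: take $\Gamma' = \Delta' = \{\varphi\}$ and the tautology $\varphi \to \varphi$. The hypothesis of the Lindenbaum Lemma is therefore \emph{false}, not merely hard to verify. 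Equivalently, the prime type you ask for cannot exist: if $\Phi^+ \subseteq \Psi^+$ and $\Phi^- \subseteq \Psi^-$ with both types saturated, then disjointness of $\Psi^+$ and $\Psi^-$ forces $\Psi = \Phi$, whence $\varphi \in \Psi^+ = \Phi^+$ contradicts the defect assumption. This also explains why your "propositional bookkeeping" cannot be completed: the inference you need, from $\varphi \to (\theta \vee \psi)$ to $\theta \vee (\varphi \to \psi)$, is not intuitionistically valid (it fails at the root of the Kripke model with two incomparable successors, one forcing $\varphi,\theta$ and the other forcing $\varphi,\psi$), and no valid argument could replace it, since the conclusion you are after is false.

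The underlying confusion is the direction of the order: $\Phi \peqT \Psi$ requires $\Psi^- \subseteq \Phi^-$, not $\Phi^- \subseteq \Psi^-$, so $\Phi^-$ must \emph{not} be placed on the right-hand side of the Lindenbaum application; indeed the whole point of passing to a $\peqT$-successor is that formulas (in particular $\varphi$ itself) migrate from the negative to the positive side. The paper's proof takes $\Gamma = \Phi^+ \cup \{\varphi\}$ and $\Delta = \{\psi\}$ only: if $\Phi^+, \varphi \vdash \psi$, then by intuitionistic reasoning (the deduction theorem) $\Phi^+ \vdash \varphi \to \psi$, contradicting the consistency of $\Phi$ because $\varphi \to \psi \in \Phi^-$; Lindenbaum then yields a prime $\Psi$ with $\Phi^+ \cup \{\varphi\} \subseteq \Psi^+$ and $\psi \in \Psi^-$. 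The inclusion $\Psi^- \subseteq \Phi^-$ that you were trying to force by hand comes for free from saturation: $\Psi^- = \cl L \setminus \Psi^+ \subseteq \cl L \setminus \Phi^+ = \Phi^-$. With that repair, the remainder of your argument (a single $\peqT$-successor witnesses revocation in every neighbourhood, since every neighbourhood of $\Phi$ contains its up-set) goes through.
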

\begin{proof}
	We know that $\peqT$ is a partial order and restrictions of partial orders are partial orders, so $\peq_\CIcon $ is a partial order. Moreover, $\ell_\CIcon $ is the identity, so $\Phi \peq_\CIcon  \Psi$ implies that $\ell_\CIcon  (\Phi) \peqT \ell_\CIcon  (\Psi)$.
	
	Now let $\Phi \in |\CMod|$ and assume that $\varphi \to \psi \in \Phi^-$. Note that $ \Phi^+,\varphi \not \vdash \psi$, for otherwise by intuitionistic reasoning we would have $\Phi^+\vdash \varphi \to \psi$, which is impossible if $\Phi$ is a prime type. By Lemma \ref{LemmLind}, there is a prime type $\Psi$ with $\Phi^+ \cup \{\varphi\} \subseteq \Psi^+$ and $\psi \in \Psi^-$. It follows that $\Phi \peq_\CIcon  \Psi$, $\varphi \in \Psi^+$ and $\psi \in \Psi^-$, as needed.
\end{proof}

\begin{lemma}
	\label{lemm:rcnext:prop} Let $\Lambda$ be an admissible temporal logic over $\cl L$.  Then, $S_{\CIcon }$ is a continuous function.
	If moreover \ref{axFSNext} is an axiom of $\Lambda$, then $S_{\CIcon }$ is also open.
\end{lemma}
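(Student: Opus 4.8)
The plan is to identify, for each prime type, its unique $S_\CIcon$-successor, verify that the resulting map is $\peqT$-monotone (which yields continuity), and then handle openness by a Lindenbaum argument whose only non-routine step is an application of \ref{axFSNext}.

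First I would make the successor explicit. Recall that on prime types $\Phi\mathrel{S_\CIcon}\Psi$ holds exactly when the pair $(\Phi,\Psi)$ is sensible (Definition~\ref{compatible}), and set $N(\Phi)\eqdef(\{\varphi:\tnext\varphi\in\Phi^+\},\{\varphi:\tnext\varphi\in\Phi^-\})$. I claim $N(\Phi)$ is the unique prime successor of $\Phi$. Saturation of $N(\Phi)$ is immediate, since $\Phi$ is saturated and $\tnext\varphi\in\mathcal L$ for every $\varphi$. For consistency, suppose $N(\Phi)^+\vdash N(\Phi)^-$, witnessed by $\vdash\bigwedge\Gamma\to\bigvee\Delta$ with $\Gamma\subseteq N(\Phi)^+$ and $\Delta\subseteq N(\Phi)^-$; applying \ref{ax14NecCirc} and \ref{ax05KNext} and then Lemma~\ref{lemmReverseNext} gives $\vdash\bigwedge\tnext\Gamma\to\bigvee\tnext\Delta$, contradicting the consistency of $\Phi$. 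Next, $(\Phi,N(\Phi))$ is sensible: the $\tnext$-clauses hold by definition, the $\diam$-clauses follow from \ref{ax10DiamFix} and Lemma~\ref{lemmReverseDiam}, and the $\forall$-clauses from \ref{axUnivNex}. Uniqueness is the key point: if $\Psi$ is any prime type with $(\Phi,\Psi)$ sensible, the $\tnext$-clauses give $N(\Phi)^+\subseteq\Psi^+$ and $N(\Phi)^-\subseteq\Psi^-$, and since both $N(\Phi)$ and $\Psi$ partition $\mathcal L$ this forces $\Psi=N(\Phi)$. Hence $S_\CIcon$ is total and single-valued, i.e.\ the function $\Phi\mapsto N(\Phi)$ on $\ptype$.

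For continuity it suffices, by the characterization of continuous maps between Aleksandroff spaces recalled in Section~\ref{SecTopre}, to show that $N$ is $\peqT$-monotone. This is immediate: if $\Phi\peqT\Phi'$, i.e.\ $\Phi'^-\subseteq\Phi^-$ and $\Phi^+\subseteq\Phi'^+$, then $\tnext\varphi\in\Phi^+$ implies $\tnext\varphi\in\Phi'^+$ and $\tnext\varphi\in\Phi'^-$ implies $\tnext\varphi\in\Phi^-$, which is precisely $N(\Phi)\peqT N(\Phi')$.

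The substantive part is openness, and this is where I expect the main obstacle to lie. Assume now that \ref{axFSNext} is an axiom of $\Lambda$. Given $\Phi\mathrel{S_\CIcon}\Psi$ (so $\Psi=N(\Phi)$) and $\Psi\peqT\Theta$, I must produce a prime $\Phi_1$ with $\Phi\peqT\Phi_1$ and $N(\Phi_1)=\Theta$. Since prime types partition $\mathcal L$, it is enough to find prime $\Phi_1$ with $\Phi_1^+\supseteq\Phi^+\cup\{\tnext\varphi:\varphi\in\Theta^+\}$ and $\Phi_1^-\supseteq\{\tnext\varphi:\varphi\in\Theta^-\}$; by Lemma~\ref{LemmLind} this reduces to showing $\Phi^+\cup\{\tnext\varphi:\varphi\in\Theta^+\}\not\vdash\{\tnext\varphi:\varphi\in\Theta^-\}$.

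Suppose for contradiction that it fails, witnessed by finite $A\subseteq\Theta^+$ and $B\subseteq\Theta^-$ with $\Phi^+\vdash\bigwedge\{\tnext a:a\in A\}\to\bigvee\{\tnext b:b\in B\}$. Writing $\alpha=\bigwedge A$ and $\beta=\bigvee B$ and invoking Lemma~\ref{lemmReverseNext}, this reads $\Phi^+\vdash\tnext\alpha\to\tnext\beta$, whence, \emph{by \ref{axFSNext}}, $\Phi^+\vdash\tnext(\alpha\to\beta)$. This single step is exactly what fails without the Fischer--Servi axiom, and is the crux of the whole argument. It follows that $\tnext(\alpha\to\beta)\in\Phi^+$, hence $\alpha\to\beta\in N(\Phi)^+=\Psi^+\subseteq\Theta^+$; since also $\alpha\in\Theta^+$ (as $\Theta^+$ is closed under finite conjunction) and $\Theta^+$ is closed under modus ponens, we get $\beta\in\Theta^+$, contradicting $\beta\in\Theta^-$. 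Thus the required set is consistent, Lemma~\ref{LemmLind} supplies $\Phi_1$, and $S_\CIcon$ is open.
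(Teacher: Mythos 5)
Your proposal is correct and follows essentially the same route as the paper's proof: the same explicit description of the unique successor $(\remc{\Phi^+},\remc{\Phi^-})$ with the same saturation, consistency, sensibility and uniqueness arguments, monotonicity for continuity, and the same Lindenbaum-based openness argument in which \ref{axFSNext} is used to turn $\tnext\alpha\to\tnext\beta$ into $\tnext(\alpha\to\beta)$ and derive a contradiction inside the successor type (the paper likewise extends $(\tnext\Psi^+\cup\Phi^+,\tnext\Psi^-)$ to a prime type). No gaps.
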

\begin{proof} 
	For a set $\Gamma \subseteq \cl L_\diam$, recall that we have defined $\tnext \Gamma = \{\tnext \varphi :   \varphi \in \Gamma\}$. It will be convenient to introduce the notation $\remc \Gamma = \{\varphi : \tnext \varphi \in \Gamma\}$.
	With this, we show that $S_\CIcon$ is functional and forward-confluent.
	\medskip
	
	\noindent{\sc Functionality.} We claim that for all $\Phi, \Psi \in |\CMod|$,
	\begin{equation}\label{EqSc}
	\Phi \mathrel S_\CIcon \Psi \text{ if and only if }\Psi = (\remc \Phi^+,\remc\Phi^-). 
	\end{equation}
	We must check that $\Psi \in |\CMod|$. To see that $\Psi$ is saturated, let $\varphi \in \cl L_\diam$ be so that $\varphi \not \in \Psi^-$. It follows that $\tnext\varphi \not \in \Phi^-$, but $\Phi$ is saturated, so $\tnext\varphi \in \Phi^+$ and thus $\varphi \in \Psi^+$. Since $\varphi$ was arbitrary, $\Psi^-\cup \Psi^+ = \cl L_\diam$.
	
	Next we check that $\Psi$ is consistent. If not, let $\Gamma \subseteq \Psi^+$ and $\Delta\subseteq \Psi^-$ be finite and such that $\bigwedge \Gamma \to \bigvee \Delta$ is derivable. Using \ref{ax14NecCirc} and \ref{ax05KNext} we see that $\tnext \bigwedge \Gamma \to  \tnext \bigvee \Delta$ is derivable, which in view of Lemma \ref{lemmReverseNext} implies that $ \bigwedge \tnext \Gamma \to  \bigvee  \tnext \Delta$ is derivable as well. But $\tnext \Gamma \subseteq \Phi^+$ and $\tnext \Delta \subseteq \Phi^-$, contradicting the fact that $\Phi$ is consistent.
	
	Thus $\Psi \in |\CMod|$, and $\Phi \mathrel S_\CIcon \Psi$ holds provided that $\Phi \ST \Psi$. It is clear that clauses \ref{ItCompOne} and \ref{ItCompTwo} of Definition \ref{compatible} hold. If $\diam \varphi \in \Phi^+$ (so that $\ps \in M$) and $\varphi \not \in \Phi^+$, it follows that $\varphi \in \Phi^-$. By Lemma \ref{lemmReverseDiam} $\diam\varphi \to \varphi \vee \tnext \diam \varphi$ is derivable, so we cannot have that $\tnext \diam \varphi \in \Phi^-$ and hence $\tnext \diam \varphi \in \Phi^+$, so that $\diam \varphi \in \Psi^+$. Similarly, if $\diam \varphi \in \Phi^-$ we have that $\tnext\diam\varphi \in \Phi^-$, for otherwise we obtain a contradiction from \ref{ax10DiamFix}. Therefore, $\diam\varphi \in \Psi^-$ as well. The clauses for $\forall \varphi$ follow a similar line of reasoning using \ref{axUnivNex}.
	
	To check that $\Psi$ is unique, suppose that $\Theta \in |\CMod|$ is such that $\Phi \mathrel S_\CIcon  \Theta$. Then if $\varphi \in \Psi^+$ it follows from \eqref{EqSc} that $\tnext \varphi \in \Phi^+$ and hence $\varphi \in \Theta^+$; by the same argument, if $\varphi \in \Psi^-$ it follows that $\varphi \in \Theta^-$, and hence $\Theta = \Psi$.
	\medskip
	
	\noindent {\sc Continuity:} Now that we have shown that $S_\CIcon$ is a function, we may treat it as such. Suppose that $\Phi \peq_\CIcon \Psi$; we must check that $S_\CIcon (\Phi) \peq_\CIcon S_\CIcon (\Psi)$. Let $\varphi \in S^+_\CIcon (\Phi)$. Using \eqref{EqSc}, we have that $\tnext\varphi \in \Phi^+$, hence $\tnext\varphi \in \Psi^+ $ and thus $\varphi \in S_\CIcon (\Psi^+)$. Since $\varphi \in S_\CIcon (\Phi)$ was arbitrary we obtain $S^+_\CIcon (\Phi) \peq_\CIcon S^+_\CIcon (\Psi)$, as needed.\\
	
		\noindent {\sc Openness:} Suppose that $\Psi \seq_\CIcon S_\CIcon (\Phi)$.
		We claim that $\Theta' : = ( \tnext \Psi^+ \cup \Phi^+, \tnext \Psi^-)$ is consistent.
If not, there are finite $\Xi \subseteq \Psi^+ $, $\Gamma \subseteq \Phi^+$ and $ \Delta \subseteq \Psi^-$ such that $\vdash \bigwedge \tnext \Xi \wedge \bigwedge \Gamma \to \bigvee\tnext \Delta$.
Since $\Gamma \subseteq \Phi^+$, this yields $ \bigwedge \tnext \Xi  \to \bigvee\tnext \Delta \in \Phi^+$. Using Lemma \ref{lemmReverseNext}, \ref{axFSNext} and propositional reasoning, this gives us $\tnext \left ( \bigwedge   \Xi  \to \bigvee  \Delta \right ) \in \Phi^+$, hence $\bigwedge   \Xi  \to \bigvee  \Delta \in \Psi^+$.
But then $\bigwedge \Xi \wedge \left ( \bigwedge   \Xi  \to \bigvee  \Delta \right ) \to \bigvee \Delta$ would be an intuitionistic tautology witnessing that $\Psi$ is inconsistent, contrary to our assumption. We conclude that $\Theta' $ is consistent, hence it can be extended to a prime type $\Theta$ using the Lindenbaum lemma, and clearly $\Phi \peq_\CIcon \Theta \mathrel S_\CIcon \Psi$, as required. 
\end{proof}

\begin{proposition}\label{prop:CisW}
Let $\Lambda$ be an admissible temporal logic over $\cl L $.
Then, the canonical model for $\Lambda$ is a deterministic weak quasimodel.
\end{proposition}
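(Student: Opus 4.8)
The plan is to verify that $\CMod$ meets each clause in the definition of a deterministic weak quasimodel by invoking the structural lemmas already in hand, so that this proposition is essentially a bookkeeping step. Unwinding Definitions~\ref{ndqm}, \ref{frame} and~\ref{compatible}, it suffices to establish three things: (a) $\CMod$ is a $\cl L$-labelled space whose topology is the up-set topology of a partial order; (b) the relation $S_\CIcon$ is sensible, so that $\CMod$ is a $\cl L$-labelled system; and (c) $S_\CIcon$ is a function, which is exactly the content of the word \emph{deterministic}.

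For (a), I would recall that by construction $\CMod = (\type{\cl L},{\peq_T},S_T,\mathbb I_{\ptypel{\cl L}})\upharpoonright \ptype$, so $\peq_\CIcon$ is the restriction of $\peqT$ to the set $\ptype$ of prime types. A restriction of a partial order is again a partial order, and $\mathcal T_\CIcon$ is by definition the associated up-set topology, which is precisely the partial order demanded by Definition~\ref{ndqm}. That $\CMod$ is a labelled frame -- i.e.\ a labelled space in the Aleksandroff setting, with $\ell_\CIcon$ continuous (equivalently, $\peq_\CIcon$-monotone) and every defect revoked in each neighbourhood -- is exactly Lemma~\ref{lemm:normality}, whose proof uses the Lindenbaum Lemma~\ref{LemmLind} to revoke defects of the form $\varphi\to\psi\in\Phi^-$.

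For (b) and (c), Lemma~\ref{lemm:rcnext:prop} shows that $S_\CIcon$ is a continuous function: functionality is precisely determinism, settling (c), while continuity (forward confluence in the up-set topology) is exactly the continuity required of the relation of a labelled system. Sensibility of $S_\CIcon$, clause (b), is likewise established inside the functionality argument of that lemma, where one checks that $\Phi\mathrel S_\CIcon\Psi$ forces $\Phi\, S_T\,\Psi$, i.e.\ that every clause \ref{ItCompOne}--\ref{ItCompSix} of Definition~\ref{compatible} holds. Assembling (a)--(c) yields that $\CMod$ is a deterministic weak quasimodel.

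I do not expect any genuine obstacle remaining at this level, since all the real work has been pushed into Lemmas~\ref{lemm:normality} and~\ref{lemm:rcnext:prop}. The delicate points live inside those lemmas -- verifying the $\diam$- and $\forall$-clauses of sensibility, which rely on \ref{ax10DiamFix}, Lemma~\ref{lemmReverseDiam} and~\ref{axUnivNex}, and proving functionality and consistency of the successor type via the operation $\Gamma\mapsto\remc\Gamma$ together with Lemma~\ref{lemmReverseNext} -- so here the task is purely to record that the hypotheses of Definitions~\ref{ndqm} and~\ref{compatible} are met.
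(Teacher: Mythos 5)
Your proposal is correct and takes essentially the same route as the paper: both reduce the proposition to bookkeeping, citing Lemma~\ref{lemm:normality} for the labelled-frame part and Lemma~\ref{lemm:rcnext:prop} for functionality (determinism) and forward confluence, with sensibility coming for free because $S_\CIcon$ is by construction a restriction of the sensible-pair relation $\ST$. The only cosmetic difference is that the paper records as a separate third item, via Lemma~\ref{lemmPrimeIsType}, that $\ell_\CIcon$ indeed takes values in $\type{\cl L}$ (i.e.\ prime types are types), a point your appeal to Lemma~\ref{lemm:normality} subsumes.
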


\proof
We need
\begin{enumerate*}
\item $(|\CMod|,{\peq}_\CIcon ,\ell_\CIcon )$ to be a labelled frame,
\item $S_\CIcon $ to be a sensible forward-confluent function, and
\item $\ell_\CIcon $ to have $\type{\cl L }$ as its codomain.
\end{enumerate*}
The first item is Lemma \ref{lemm:normality}. That $S_\CIcon $ is a forward-confluent function is Lemma \ref{lemm:rcnext:prop}, and it is sensible since $\Phi \mathrel S_\CIcon  \Psi$ precisely when $\Phi \ST \Psi$. Finally, if $\Phi \in |\CMod|$ then $\ell_\CIcon (\Phi) = \Phi$, which is an element of $\type{\cl L }$ by Lemma \ref{lemmPrimeIsType}.
\endproof

From this we may already obtain our first completeness results.

\begin{theorem}\label{theoCircComp}
${\sf ITL}^0_{\nx }$ is complete for the class of expanding posets and $\logpersax_{\nx }$ for the class of persistent posets.
\end{theorem}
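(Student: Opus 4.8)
The plan is to derive completeness of both logics as a direct consequence of the canonical model construction established in Proposition~\ref{prop:CisW}, specialized to the modality-free language $\cl L_\nx$. The key observation is that when $\cl L = \cl L_\nx$, the canonical model $\CMod$ is not merely a deterministic \emph{weak} quasimodel but an honest deterministic quasimodel: since there are no occurrences of $\ps$ or $\forall$, honesty is vacuous, and $\omega$-sensibility reduces to seriality of $S_\CIcon$, which holds because $S_\CIcon$ is a total function by Lemma~\ref{lemm:rcnext:prop}. Thus Lemma~\ref{LemTruth} applies to $\CMod$, giving a genuine dynamical model whose underlying space is the poset $(\ptypel{\cl L_\nx},{\peq_T})$ equipped with its up-set topology.

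First I would set up the falsifiability argument in the standard way. Given an $\cl L_\nx$-formula $\varphi$ with $\Lambda \nvdash \varphi$, I would apply the Lindenbaum Lemma~\ref{LemmLind} (taking $\Gamma = \varnothing$ and $\Delta = \{\varphi\}$, noting $\varnothing \nvdash \varphi$) to obtain a prime type $\Phi \in |\CMod|$ with $\varphi \in \Phi^-$. By Lemma~\ref{LemTruth}, clause~2, this yields $\Phi \notin \val\varphi_\CMod$, so $\varphi$ is falsified on the canonical model. Since $\CMod$ is based on a poset with a continuous (indeed functional) $S_\CIcon$, it \emph{is} an expanding poset in the sense of the class $\sf e$, which immediately gives completeness of ${\sf ITL}^0_\nx$ for expanding posets. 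Soundness in this direction is given by Theorem~\ref{ThmSoundZero}.

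For the persistent case, the additional input is the openness of $S_\CIcon$. When $\Lambda = \logpersax_\nx$, axiom~\ref{axFSNext} is present, so by the final clause of Lemma~\ref{lemm:rcnext:prop} the successor function $S_\CIcon$ is open as well as continuous. An open continuous function on an Aleksandroff space corresponds exactly to the persistence condition (class $\sf p = \sf e \cap \sf o$), so the same canonical model now witnesses completeness of $\logpersax_\nx$ for persistent posets, with the same Lindenbaum-plus-truth-lemma falsification argument.

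The main subtlety to handle carefully is the passage from the canonical model being a deterministic quasimodel to its being an honest deterministic \emph{quasimodel} in the strict sense required by Lemma~\ref{LemTruth} — specifically confirming that seriality plus determinism suffices for $\omega$-sensibility in the $\ps$-free setting, and that honesty is trivially met because $\Sigma = \cl L_\nx$ contains no $\forall$-formulas. I expect no serious obstacle beyond bookkeeping; the real work has already been discharged in Lemmas~\ref{lemm:normality} and~\ref{lemm:rcnext:prop} and in Proposition~\ref{prop:CisW}, so this theorem is essentially an application, with the $\sf o$/openness distinction being the only point where the two logics diverge.
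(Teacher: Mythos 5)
Your proposal is correct and follows essentially the same route as the paper: Lindenbaum's lemma yields a prime type with $\varphi \in \Phi^-$, Proposition~\ref{prop:CisW} gives that $\CMod$ is a deterministic weak quasimodel which is trivially $\omega$-sensible (and honest) in the $\ps$-free, $\forall$-free language, Lemma~\ref{LemTruth} then falsifies $\varphi$ on the resulting expanding poset, and openness of $S_\CIcon$ from Lemma~\ref{lemm:rcnext:prop} handles the persistent case for $\logpersax_\nx$. Your explicit checks that seriality of the functional $S_\CIcon$ supplies $\omega$-sensibility and that honesty is vacuous are points the paper leaves implicit, but they are exactly the right bookkeeping.
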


\proof
Let $\Lambda$ be either ${\sf ITL}^0_{\nx }$ or $\logpersax_{\nx }$. By the Lindenbaum lemma \ref{LemmLind}, if $\Lambda \not \vdash \varphi$ then there is $\Phi \in |\CMod|$ such that $\varphi \in \Phi^-$. By Proposition \ref{prop:CisW} $\CMod $ is a deterministic, weak quasimodel falsifying $\varphi$, and moreover it is trivially $\omega$-sensible as $\diam$ is not in our language.
By Lemma \ref{LemTruth}, $\varphi$ is not valid over the class of expanding posets, as required.
In the case that $\Lambda = \logpersax_{\nx }$, we additionally use the fact that $S_\CIcon$ is open, so that $\CMod$ is persistent.
\endproof

\section{Simulation formulas}\label{SecSim}

Simulations are relations between labelled spa\-ces, and give rise to the appropriate notion of `substructure' for modal and intuitionistic logics.
We have used them to prove that ${\sf ITL}^{\sf c}_{\ps \forall}$ has the finite quasimodel property \cite{FernandezITLc}, and they will also be useful for our completeness proof.
Below, recall that $\Phi \subT \Psi$ means that $\Phi ^ - \subseteq \Psi^-$  and $\Phi ^ + \subseteq \Psi^+$.

\begin{definition}
Let $\Sigma\subseteq \Delta \subseteq \landif$ be closed under subformulas, $\fr X$ be a $\Sigma$-labelled space and $\fr Y$ be $\Delta$-labelled. A continuous relation
${\simrel} \subseteq |{\fr X}|\times |{\fr Y}|$
is a {\em simulation} if, whenever $x\simrel y$, $\ell_{\fr X} (x) \subT  \ell_{\fr Y}(y).$
If there exists a simulation $\simrel$ such that $x \simrel y$, we write $(\fr X , x)\simu (\fr Y, y)$.

The relation $\simrel$ is a {\em dynamic simulation} between $\fr X$ and $\fr Y$ if ${
\mathrel S_\fr Y\simrel} \subseteq { \simrel \mathrel S_\fr X}
$.

\end{definition}

\begin{figure}

\begin{center}

\begin{tikzpicture}[scale=.6]

\draw[thick] (0,0) circle (.35);

\draw (.04,-.05) node {$y$};

\draw[very thick,->] (.5,0) -- (2.5,0);

\draw (1.5,-.5) node {$S_\fr Y$};

\draw[thick] (3,0) circle (.35);

\draw (3+.03,0) node {$y'$};

\draw[thick] (0,3) circle (.35);

\draw (.04,3-.04) node {$x$};

\draw[very thick,->,dashed] (.5,3) -- (2.5,3);

\draw (1.5,3.5) node {$S_\fr X$};

\draw[very thick,->] (0,2.5) -- (0,.5);

\draw (-.5,1.5) node {${\simrel}$};

\draw[thick,dashed] (3,3) circle (.35);

\draw (3+.03,3) node {$x'$};

\draw[very thick,->,dashed] (3,2.5) -- (3,.5);

\draw (3.5,1.5) node {${\simrel}$};

\end{tikzpicture}

\end{center}

\caption{If ${\simrel} \subseteq |\fr X| \times |\fr Y|$ is a dynamical simulation, this diagram can always be completed.}

\end{figure}
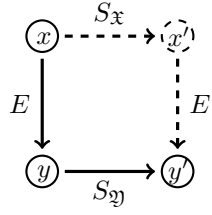

\ignore{
The following properties are readily verified:

\begin{lemma}\label{LemmPropSim}

Let ${\fr X},{\fr Y},{\fr Z}$ be labelled systems and $\simrel\subseteq |{\fr X}|\times  |{\fr Y}|$, $\xi\subseteq |{\fr Y}|\times  |{\fr Z}|$ be simulations. Then:

\begin{enumerate}

\item\label{ItPropSimComp} $\xi\simrel\subseteq |{\fr X}|\times  |{\fr Z}|$ is a simulation. Moreover, if both $\simrel$ and $\xi$ are dynamic, then so is $\xi\simrel$.
\item \label{ItPropSinSub} If $U\subseteq |{\fr X}|$ and $V\subseteq |{\fr Y}|$ are open, then $\simrel\upharpoonright U\times V$ is a simulation.
\item\label{ItPropSimUn} If $\Xi\subseteq\mathcal P(|{\fr X}|\times  |{\fr Y}|)$ is a set of simulations, then $\bigcup \Xi$ is also a simulation.

\end{enumerate}

\end{lemma}
}

Next we show that there exist formulas defining points in finite frames up to simulability, i.e.~that if $\fr W$ is a finite frame and $w\in |\fr W|$, there exists a formula $\Sim w$ such that for all labelled frames $\fr M$ and all $x\in |\fr M|$, $\fr M,x  \models x$ if and only if $(\fr W,w) \simu (\fr M,x)$.
In contrast, simulability formulas for finite $\sf S4$ models are not definable in the classical modal language \cite{FernandezSimulability}, but they {\em can} be constructed using a polyadic extension of the modal language representing the {\em tangled closure} of a family of sets \cite{FernandezTangle,GoldblattH17,GoldblattTangleSL} and expressively equivalent to the $\mu$-calculus over $\sf S4$ frames \cite{do}.

Fern\'andez-Duque \cite{dtlaxiom} uses simulation formulas to axiomatize the resulting polyadic extension of $\sf DTL$; in contrast, the natural axiomatization suggested by Kremer and Mints \cite{kmints} of dynamic topological logic is incomplete \cite{FernandezNonFin}.
In the intuitionistic setting the situation is simplified somewhat, as finite frames \cite{JonghY09} (and hence models) are already definable up to simulation in the intuitionistic language. This may be surprising, as the intuitionistic language is less expressive than the modal language; however, intuitionistic models are posets rather than arbitrary preorders, and this allows us to define simulability formulas by recursion on $\prec$.

\begin{definition}
Fix $\Sigma\Subset \landif$ and let $\fr W$ be a finite $\Sigma$-labelled frame. Given $w\in |\fr W|$, we define a formula $\Sim w$ by backwards induction on ${\peq} = {\peq_\fr W}$ by
\[\Sim w = \bigwedge \ell^+(w) \rightarrow \bigvee \ell^-(w) \vee \bigvee_{v\succ w} \Sim v .\]
\end{definition}

\begin{remark}
Observe that if $\cl L$ is a temporal language and $\fr W$ a finite $\Sigma$-labeled frame with $\Sigma \Subset \cl L$, then $\Sim w \in \cl L$ for all $w\in |\fr W|$.
\end{remark}

\begin{proposition}\label{propSimForm}
Given $\Sigma \Subset \Delta \subseteq \landif$, a finite $\Sigma$-labelled frame $\fr W$, a $\Delta$-labelled frame $\fr X$ and $w\in |\fr W|$, $x \in |\fr X |$:
\begin{enumerate}

\item\label{simulability:c1} if $\Sim w \in \ell_\fr X^-(x)$ then there is $y\seq x$ such that $(\fr W,w) \simu (\fr X, y)$, and

\item\label{simulability:c2} if there is $y\seq x$ such that $ (\fr W,w) \simu (\fr X, y) $ then $\Sim w \not \in \ell_\fr X^+(x)$.

\end{enumerate}
\end{proposition}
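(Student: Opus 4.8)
The plan is to prove both parts by backwards induction on $\peq_{\fr W}$, i.e.\ by induction on the $\prec$-height of $w$, mirroring the recursive definition of $\Sim w$. This is natural because $\Sim w$ is defined in terms of $\Sim v$ for $v \succ w$, and the existence of a simulation at $w$ should depend on the existence of simulations at the $\prec$-successors of $w$. I will treat the two items in parallel within the induction, since they are dual and their proofs share structure, though the semantic content differs: part \ref{simulability:c1} extracts a simulation from the \emph{failure} of $\Sim w$, while part \ref{simulability:c2} shows that the \emph{existence} of a simulation prevents $\Sim w$ from being positively satisfied.

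**Part \ref{simulability:c1} (failure yields a simulation).**

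Suppose $\Sim w \in \ell_{\fr X}^-(x)$. Unfolding the definition, $\Sim w$ is an implication $\bigwedge \ell^+(w) \to \bigl(\bigvee \ell^-(w) \vee \bigvee_{v \succ w} \Sim v\bigr)$, so by the defect condition of Definition~\ref{frame} (an implication in the negative part forces a revoking witness in every neighborhood) I obtain some $y \seq x$ with $\bigwedge \ell^+(w) \in \ell_{\fr X}^+(y)$ and $\bigvee \ell^-(w) \vee \bigvee_{v\succ w}\Sim v \in \ell_{\fr X}^-(y)$. The type conditions on conjunctions and disjunctions (Definition~\ref{def:type}) then give $\ell^+(w) \subseteq \ell_{\fr X}^+(y)$, $\ell^-(w) \subseteq \ell_{\fr X}^-(y)$, and $\Sim v \in \ell_{\fr X}^-(y)$ for every $v \succ w$. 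The first two inclusions say exactly $\ell_{\fr W}(w) \subT \ell_{\fr X}(y)$, so the pair $(w,y)$ is type-compatible. For each $v \succ w$ I apply the induction hypothesis at $y$ (using $\Sim v \in \ell_{\fr X}^-(y)$) to extract $y_v \seq y$ with $(\fr W,v)\simu(\fr X,y_v)$. The simulation I want is then the union of $\{(w,y)\}$ with all the simulations witnessing $(\fr W,v)\simu(\fr X,y_v)$; I must check this union is itself a simulation. The continuity/forward-confluence clause is where the real work lies: I need that whenever $y' \seq y$, the point $w$ can still be matched below, which is guaranteed precisely because every $\prec$-successor $v$ of $w$ is handled by a simulation rooted above $y$, and the $\peq_{\fr W}$-open set above $w$ is covered by $\{w\} \cup \{v : v\succ w\}$.

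**Part \ref{simulability:c2} (simulation blocks positive satisfaction).**

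Here I assume a simulation $\simrel$ with $w \simrel y \seq x$ and argue $\Sim w \notin \ell_{\fr X}^+(x)$. Suppose for contradiction $\Sim w \in \ell_{\fr X}^+(x)$; since types are upward closed along $\peq$ (the labelling is $\peqT$-monotone, so positive formulas persist upward), $\Sim w \in \ell_{\fr X}^+(y)$. From $w \simrel y$ I have $\ell^+(w)\subseteq \ell_{\fr X}^+(y)$, so $\bigwedge\ell^+(w)\in\ell_{\fr X}^+(y)$; combined with $\Sim w\in\ell_{\fr X}^+(y)$ and the implication-elimination behavior of types, the consequent $\bigvee\ell^-(w)\vee\bigvee_{v\succ w}\Sim v$ must lie in $\ell_{\fr X}^+(y)$, hence one disjunct does. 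It cannot be any formula of $\ell^-(w)$, since $\ell^-(w)\subseteq\ell_{\fr X}^-(y)$ and the two sides of a type are disjoint; so some $\Sim v \in \ell_{\fr X}^+(y)$ with $v\succ w$. To derive a contradiction from the induction hypothesis I must produce a point $y'\seq y$ that $v$-simulates; this is exactly where dynamic/continuity structure of $\simrel$ is used, via forward confluence: from $v \succ w \simrel y$ confluence furnishes $y'$ with $w\simrel y' \seq y$ upgraded to $v \simrel y'$, contradicting the induction hypothesis applied to $\Sim v \in \ell_{\fr X}^+(y')$.

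**Main obstacle.**

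The routine manipulations are the type-theoretic unfoldings of $\wedge,\vee,\to$; the genuinely delicate point is verifying the \emph{continuity} (forward-confluence) clause when assembling or invoking the simulation. In part \ref{simulability:c1} I must ensure the union of simulations over the successors $v\succ w$, together with the base pair $(w,y)$, respects confluence against arbitrary $\peq_{\fr W}$-moves above $w$; and in part \ref{simulability:c2} I must use confluence of the given simulation to promote a match of $w$ into a match of a strictly larger $v$. Getting these confluence conditions to align with the recursion on $\prec$ — and confirming that the finiteness of $\fr W$ makes the backwards induction well-founded — is the crux of the argument.
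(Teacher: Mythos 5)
Your proposal follows the paper's own proof essentially step for step: both items are established by backwards induction on $\peq_{\fr W}$; for item \ref{simulability:c1} you use the revocation clause of Definition~\ref{frame} to produce $y\seq x$ satisfying the antecedent and falsifying the consequent of $\Sim w$, and then take the union of $\{(w,y)\}$ with the simulations supplied by the induction hypothesis at each $v\succ w$; for item \ref{simulability:c2} you combine the type conditions with forward confluence of the given simulation, exactly as the paper does.

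Two points need repair, one of substance. In item \ref{simulability:c2} you infer ``$\ell^+(w)\subseteq\ell^+_{\fr X}(y)$, so $\bigwedge\ell^+(w)\in\ell^+_{\fr X}(y)$''. That inference is not licensed: the conditions of Definition~\ref{def:type} close types only under \emph{decomposition} of connectives, never under composition, so a type may contain $p$ and $q$ positively without containing $p\wedge q$ (take $\Phi^+=\{p,q\}$, $\Phi^-=\varnothing$ with $\Sigma=\{p,q,p\wedge q\}$). What your argument actually needs is only the weaker statement $\bigwedge\ell^+(w)\notin\ell^-_{\fr X}(y)$, which is what the paper proves: were the conjunction in $\ell^-_{\fr X}(y)$, condition \ref{cond:type:negconj} would place some conjunct in $\ell^-_{\fr X}(y)$, contradicting $\ell^+(w)\subseteq\ell^+_{\fr X}(y)$ and condition \ref{cond:type:intersection}. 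With that weaker statement, condition \ref{cond:type:implication} still forces the consequent into $\ell^+_{\fr X}(y)$, and the rest of your argument goes through unchanged. The second point is only phrasing: in item \ref{simulability:c1} you state the confluence obligation as ``whenever $y'\seq y$, the point $w$ can still be matched below'', which quantifies over the wrong side. Continuity of $\simrel\subseteq|\fr W|\times|\fr X|$ asks that whenever $w'\seq w\simrel y$ there is $y'$ with $w'\simrel y'\seq y$; your subsequent justification (any $w'\seq w$ is $w$ itself or some $v\succ w$, and each such $v$ is matched at $y_v\seq y$ by the inductively obtained simulation) is precisely the correct argument for the correctly stated condition, so this is a slip of formulation rather than of substance.
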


\proof Each claim is proved by backward induction on $\peq$.
\medskip

\noindent \eqref{simulability:c1} Let us first consider the base case, when there is no $v \succ w$. Assume that $\Sim w \in \ell^-(x)$. From the definition of labelled frame $\bigwedge \ell_\fr W^+(w) \in \ell_\fr X^+(y)$ and  $\bigvee \ell_\fr W^-(w) \in \ell_\fr X^-(y)$ for some $y \seq x$. From the definition of type it follows that $\ell_\fr W^+(w) \subseteq \ell_\fr X^+(y)$ and $\ell_\fr W^-(w) \subseteq \ell_\fr X^-(y)$, so that $\ell_\fr W(w) \subT \ell_\fr X(y)$. It follows that ${\simrel} \eqdef \lbrace (w, y)\rbrace$ is a simulation, so $(\fr W, w) \simu  ({\fr X}, y )$.

For the inductive step, let us assume that the lemma holds for all $v \succ w$.
Assume that $\Sim w \in \ell_\fr X^-(x)$. From the definition of labelled frame, it follows that $\bigwedge \ell_\fr W^+(w) \in \ell_\fr X^+(y)$,  $\bigvee \ell_\fr W^-(w) \in \ell_\fr X^-(y)$ and $\bigvee _{v \prec w} \Sim v \in \ell_\fr X^-(y)$ for some $y \seq x$.
Following similar reasoning as in the base case we can conclude that $\ell_\fr W(w) \subseteq  \ell_\fr X(y)$, and moreover, that $\Sim v \in \ell_\fr X^-(y)$ for all $v \succ w$. By induction hypothesis we conclude that for all $v \succ w$, there exists a simulation $\simrel_{v}$ such that $v \simrel_{v} z_{v}$ for some $z_{v} \seq y$. Let ${\simrel} \eqdef \lbrace (w, y) \rbrace \cup \bigcup \limits_{v \succ w} \simrel_{v}$ . The reader may check that $\simrel$ is a simulation and that $w \simrel y \seq x$, so that $(\fr W,w) \simu ({\fr X}, y)$, as needed.
\medskip

\noindent \eqref{simulability:c2}
For the base case, assume that $(\fr W,w) \simu (\fr X, y)$ for some $y \seq x$, so there exists a simulation $\simrel$ such that $w \simrel y  $. It follows that $\ell_\fr W^+(w) \subseteq \ell_\fr X^+(y)$ and $\ell_\fr W^-(w) \subseteq \ell_\fr X^-(y)$. From conditions \ref{cond:type:posconj} and \ref{cond:type:posdisj} of the definition of type (Definition \ref{def:type}), it follows that $\bigwedge \ell_\fr W^+(w) \not \in \ell_\fr X^-(y)$ and $\bigvee \ell_\fr W^-(w) \not \in \ell_\fr X^+(y)$. But then, condition \ref{cond:type:implication} gives us $\Sim w \not \in \ell_\fr X^+(y)$, so $\Sim w \not \in \ell_\fr X^+(x)$.

For the inductive step, by the same reasoning as in the base case it follows that $\bigwedge \ell_\fr W^+(w) \not \in \ell_\fr X^-(y)$ and $\bigvee \ell_\fr W^-(w) \not \in \ell_\fr X^+(y)$. Now, let $v$ be such that $v \succ w$. Since $\simrel$ is forward confluent then $v \simrel z_{v}$ for some $z_{v} \seq y$. By induction hypothesis, $\Sim v \not \in \ell^+(z_{v})$, so $\Sim v \not \in \ell^+(y)$. Since $v$ was arbitrary we conclude that $\bigvee _{v \succ w} \Sim v \not \in \ell^+(y)$. Finally, from condition \ref{cond:type:implication} of Definition \ref{def:type} and the fact that $y \peq x$ we get that $\Sim w \not \in \ell^+(x)$. 	
\endproof

\begin{remark}
Proposition \ref{propSimForm} more generally holds when $\mathfrak X$ is any labelled  space (not necessarily Aleksandroff), but this restricted version will suffice for our purposes.
\end{remark}

\section{The initial quasimodel}\label{seccan}

In this section we review the initial weak quasimodel $\irr \Sigma $ \cite{FernandezITLc} and use it to define an initial quasimodel $\fr J_\Sigma$. These structures are `initial' in the sense that if $\fr A$ is any labelled system, there exist surjective simulations from both $\irr \Sigma $ and $\fr J_ \Sigma $ to $\fr A$, i.e., they are initial in a category-theoretic sense.

\begin{theorem}\label{thmSurjI}
Given $\Sigma\Subset \fr \landif$, there exists a finite, saturated weak quasimodel $\irr\Sigma$
such that if $\fr A$ is any deterministic weak quasimodel then ${\rightharpoonup} \subseteq |\irr\Sigma| \times |\fr A|$ is a surjective dynamic simulation.
\end{theorem}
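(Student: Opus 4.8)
The plan is to realize $\irr\Sigma$ as the structure of \emph{moments} and to obtain the simulation onto an arbitrary target $\fr A$ by a local unraveling argument, with the one-step relation on moments engineered so that forward confluence and the dynamic-simulation condition hold simultaneously.

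First I would take the worlds of $\irr\Sigma$ to be the finite, saturated $\Sigma$-labelled \emph{trees} (moments): finite labelled frames with a distinguished root in which every defect (in the sense of the paragraph preceding Definition~\ref{frame}) of the label of a node is revoked at some node above it, and all of whose labels are saturated types. The order $\peq_{\irr\Sigma}$ is the ``pass to a rooted subtree'' relation, so that the up-set of a moment $w$ is the finite set of its rooted subtrees; the topology is the associated up-set topology, making $\irr\Sigma$ Aleksandroff, and $\ell_{\irr\Sigma}(w)$ is the type labelling the root of $w$. Finiteness up to isomorphism rests on two points: along $\peq$ the positive part of the label strictly grows, since $\ell$ is $\peqT$-monotone by continuity (Definition~\ref{frame}) and labels are saturated, so the depth of a moment is bounded by $|\Sigma|$; and since $\type\Sigma$ is finite and children carrying simulation-equivalent subtrees may be merged, the branching is bounded as well. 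I would then define $S_{\irr\Sigma}$ so that $w \mathrel S_{\irr\Sigma} w'$ holds exactly when the root pair $(\ell(w),\ell(w'))$ is sensible (Definition~\ref{compatible}) \emph{and} the tree $w'$ matches a one-step image of the tree $w$; recording the successor \emph{tree}, not merely the successor type, is what will guarantee forward confluence.

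Next I would verify that $\irr\Sigma$ is a saturated weak $\Sigma$-quasimodel (Definition~\ref{ndqm}): it is a labelled frame by construction (defects are revoked inside each tree), its topology is an up-set topology, and $S_{\irr\Sigma}$ is sensible by definition. The only clause requiring work is that $S_{\irr\Sigma}$ is forward confluent: given $w \peq w^\ast$ and $w \mathrel S_{\irr\Sigma} w'$ I must produce $w'' \seq w'$ with $w^\ast \mathrel S_{\irr\Sigma} w''$. Since passing from $w$ to $w^\ast$ enlarges the root label along $\peqT$ and only refines the tree, sensibility of the root pair is preserved upward and the matching successor of $w^\ast$ may be taken to be an appropriate subtree-extension of $w'$; this is precisely why the successor tree is encoded in $S_{\irr\Sigma}$. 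Note that $\omega$-sensibility is not needed, as $\irr\Sigma$ is only claimed to be \emph{weak}.

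Finally, given a deterministic weak quasimodel $\fr A$, I would set $w \rightharpoonup a$ to hold iff the moment $w$, viewed as a labelled frame, simulates into $\fr A$ with its root sent to $a$, i.e.\ $(\irr\Sigma,w) \simu (\fr A,a)$. That $\rightharpoonup$ is a simulation (it respects labels via $\subT$, which for saturated labels such as those of the canonical model $\CMod$ amounts to type equality) and is continuous follows from the definition and the upward transfer of confluence. Surjectivity is the first genuine task: for each $a \in |\fr A|$ I must exhibit a \emph{finite} moment $w_a$ with $w_a \rightharpoonup a$, obtained by unraveling the up-set of $a$ and pruning it using $\peqT$-monotonicity of $\ell_\fr A$ and finiteness of $\type\Sigma$, exactly as in the finiteness argument above. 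The heart of the proof, and the step I expect to be the main obstacle, is that $\rightharpoonup$ is \emph{dynamic}, i.e.\ ${S_\fr A}\,{\rightharpoonup} \subseteq {\rightharpoonup}\,{S_{\irr\Sigma}}$: given $w \rightharpoonup a$ and the unique $a' = S_\fr A(a)$, I must find $w'$ with $w \mathrel S_{\irr\Sigma} w'$ and $w' \rightharpoonup a'$. Here the successor tree recorded by $S_{\irr\Sigma}$ must be taken to be a moment simulating the $S_\fr A$-image of the tree witnessing $w \rightharpoonup a$; continuity (forward confluence) of $S_\fr A$ ensures that this image is again simulated in $\fr A$ at $a'$, while determinism pins $a'$ down. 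Arranging the definition of $S_{\irr\Sigma}$ so that the internal confluence of the previous paragraph and this dynamic matching hold at once is the delicate point; everything else amounts to routine checking against Definitions~\ref{frame}, \ref{compatible} and~\ref{ndqm}.
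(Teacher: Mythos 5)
You should first be aware that the paper does not prove this theorem at all: immediately after stating it, the authors write that they ``do not need to elaborate on the construction of $\irr\Sigma$ here'', deferring entirely to \cite{FernandezITLc}. So your sketch can only be compared with that cited construction, whose overall architecture it does follow: moments as finite saturated labelled trees in which every defect is revoked, the generated-subtree order, $\rightharpoonup$ defined as simulability, and surjectivity by unravelling and pruning the up-set of a point of $\fr A$. Within that architecture, however, there are genuine gaps. The first is finiteness: your justification --- that the positive label grows \emph{strictly} along $\peq$ because $\ell$ is $\peqT$-monotone and labels are saturated --- is false. Monotonicity plus saturation only give $\ell^+(w)\subseteq\ell^+(u)$, and a chain of $n$ nodes all carrying the same defect-free saturated type (for instance $(\varnothing,\{p\})$ when $\Sigma=\{p\}$) is a moment under your definition for every $n$, so your class of trees is infinite and of unbounded depth. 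Strictness must be built into the definition of moment (e.g.\ every child revokes a defect of its parent, or one restricts to simulation-irreducible trees), and one must then re-check that the restricted class is still rich enough for the surjectivity argument; your remark about ``merging'' simulation-equivalent children only bounds branching, not depth.

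The second gap is fatal as written: the relation $S_{\irr\Sigma}$ is never actually defined --- ``the tree $w'$ matches a one-step image of the tree $w$'' is left unspecified --- and both hard verifications of the theorem rest entirely on that definition. Requiring only that the root pair be sensible demonstrably breaks forward confluence: with $\Sigma=\{p,\tnext p\}$, take the two-node moment $w$ whose root is labelled $(\varnothing,\{p,\tnext p\})$ and whose child is labelled $(\{\tnext p\},\{p\})$, and the one-node moment $w'$ labelled $(\varnothing,\{p,\tnext p\})$; the root pair of $(w,w')$ is sensible, but the subtree of $w$ at its child has no sensible successor among the subtrees of $w'$. So one must spell out a tree-level matching (a forward-confluent, pairwise-sensible relation containing the root pair), and then, for the dynamic condition, construct from a witness of $w\rightharpoonup a$ three things simultaneously: a successor moment $w'$, such a matching from $w$ to $w'$, and a simulation of $w'$ into $\fr A$ at $S_{\fr A}(a)$. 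Here $w'$ must contain enough nodes to absorb the $S_{\fr A}$-images of \emph{all} points of $\fr A$ hit by the original simulation --- this is exactly where saturation of moment labels and determinism of $\fr A$ do real work, and where a naive choice of $w'$ (built only from the up-set of $S_{\fr A}(a)$) fails to admit the required matching. You explicitly defer this step as ``the delicate point'', but it is not routine checking against the definitions: it is the mathematical content of the theorem, and carrying it out amounts to reproducing the construction of \cite{FernandezITLc}. As it stands, the proposal is a plausible plan rather than a proof.
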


We do not need to elaborate on the construction of $\irr\Sigma$ here, but this is done in detail in \cite{FernandezITLc}.
Points of $\irr\Sigma$ are called {\em moments.}
One can think of $\irr\Sigma$ as a finite initial structure over the category of labelled weak quasimodels. 
Next, we will internalize the notion of simulating elements of $\irr\Sigma$ into the temporal language. This is achieved by the formulas $\Sim w$.



\begin{proposition}\label{propsub}
Let $\Lambda$ be a logic extending ${\sf ITL}^0_\nx$ over a temporal language $\cl L$.
Fix $\Sigma \Subset \cl L$ and let $\fr I = \irr\Sigma $, $w\in |\irr{}|$ and $\psi\in \Sigma $.
	\begin{multicols}{2}
\begin{enumerate}[label=\arabic*)]
	\item\label{itPropsubOne} \mbox{If $\psi\in \ell^-({{w}})$, then $\vdash \psi \to \mathrm{Sim}({{w}}) $.}
	\item\label{itPropsubOneb} If $\psi\in \ell^+({w})$, then\\ $\vdash \big (\psi \to \Sim {{w}} \big )\to \Sim w $.

	\item\label{itPropsubThree}\mbox{If ${{{w}}}\peq{{v}}$, then $\vdash \mathrm{Sim}({{{v}}})\to \mathrm{Sim}({{w}})$.}
	
	\columnbreak
	
	\item\label{itPropsubFour} \mbox{$\vdash \displaystyle \bigwedge_{\psi \in \ell_{\irr{}}^- ({{w}})}  \mathrm{Sim}({{w}}) \rightarrow \psi.$}
	\item\label{itPropsubFive} \mbox{$\vdash\displaystyle \tnext\bigwedge _{{{w}} \mathrel S_{\irr{}} {{{v}}}  }\mathrm{Sim}({{{v}}}) \to \mathrm{Sim}({{w}}).$}
\end{enumerate}
	\end{multicols}
\end{proposition}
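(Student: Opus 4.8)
The plan is to establish every part by unfolding the recursive definition $\Sim{w}=\bigwedge\ell^+(w)\to C_w$, where $C_w\eqdef\bigvee\ell^-(w)\vee\bigvee_{v\succ w}\Sim{v}$, and then reasoning intuitionistically; only part \ref{itPropsubFive} will require more than propositional manipulation. For part \ref{itPropsubOne}, if $\psi\in\ell^-(w)$ then $\psi$ is a disjunct of $C_w$, so $\psi\to C_w$ holds, and weakening in the antecedent $\bigwedge\ell^+(w)$ gives $\psi\to\Sim{w}$. Part \ref{itPropsubFour} then follows by collecting these implications over all $\psi\in\ell^-(w)$.

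Part \ref{itPropsubOneb} uses that $\psi\in\ell^+(w)$ makes $\psi$ a conjunct of $\bigwedge\ell^+(w)$, so $\bigwedge\ell^+(w)\to\psi$ is derivable; the target $(\psi\to\Sim{w})\to\Sim{w}$ is then the instance with $A=\bigwedge\ell^+(w)$ and $C=C_w$ of the intuitionistic schema $(A\to\psi)\to\big((\psi\to(A\to C))\to(A\to C)\big)$, discharged by extracting $\psi$ from $A$ and then $C$. Part \ref{itPropsubThree} is immediate from the definition: if $w\peq v$ then either $v=w$, or $v\succ w$ and $\Sim{v}$ is literally a disjunct of $C_w$, so $\Sim{v}\to\Sim{w}$ again by weakening. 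None of these steps needs induction or the $\tnext$-axioms.

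Part \ref{itPropsubFive} is the crux. First I would apply Lemma \ref{lemmReverseNext} to the premise, reducing the goal to $\vdash\bigwedge_{w\mathrel S_{\irr\Sigma}v}\tnext\Sim{v}\to\Sim{w}$. The genuine difficulty is that $\Sim{w}$ recurses on the intuitionistic order $\peq$, whereas the premise speaks of the temporal successors $v$ with $w\mathrel S_{\irr\Sigma}v$, and these two relations must be bridged. I would argue through the canonical model $\CMod$, which is a deterministic weak quasimodel by Proposition \ref{prop:CisW}, using the Lindenbaum Lemma \ref{LemmLind} to reduce $\vdash$ to the absence of a refuting prime type. Assuming $\tnext\bigwedge_{wSv}\Sim{v}\to\Sim{w}\in\Phi^-$, condition \ref{cond:type:implication:neg} of Definition \ref{def:type} and Lemma \ref{lemm:normality} produce $\Psi\seq_\CIcon\Phi$ with $\tnext\bigwedge_{wSv}\Sim{v}\in\Psi^+$ and $\Sim{w}\in\Psi^-$. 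From $\Sim{w}\in\Psi^-$ and Proposition \ref{propSimForm}\eqref{simulability:c1} I get $\Theta\seq_\CIcon\Psi$ with $\ell(w)\subT\Theta$; since $\Psi\peq_\CIcon\Theta$ we also have $\tnext\bigwedge_{wSv}\Sim{v}\in\Theta^+$, so clause \ref{ItCompOne} of Definition \ref{compatible} gives $\Sim{v}\in(S_\CIcon\Theta)^+$ for every successor $v$ of $w$.

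The contradiction I aim for is that $S_\CIcon\Theta$ must itself simulate some $S_{\irr\Sigma}$-successor $v$ of $w$, whence $\Sim{v}\notin(S_\CIcon\Theta)^+$ by Proposition \ref{propSimForm}\eqref{simulability:c2}. Producing this single dynamic step is where I expect the main obstacle, since Proposition \ref{propSimForm} supplies only a plain simulation of $w$ at $\Theta$. I would close the gap using the initiality of $\irr\Sigma$ (Theorem \ref{thmSurjI}): the maximal label-inclusion relation ${\simrel}=\{(a,b):\ell(a)\subT b\}$ from $\irr\Sigma$ into the deterministic weak quasimodel $\CMod$ is a dynamic simulation, so from $w\simrel\Theta$ and the fact that $S_\CIcon$ sends $\Theta$ to $S_\CIcon\Theta$, its forth condition yields $v$ with $w\mathrel S_{\irr\Sigma}v$ and $\ell(v)\subT S_\CIcon\Theta$. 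This is exactly where the detailed construction of $\irr\Sigma$ and the forward-confluence of $S_{\irr\Sigma}$ enter. With the dynamic step in hand the contradiction closes, and part \ref{itPropsubFive} follows.
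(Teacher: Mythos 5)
Your parts \ref{itPropsubOne}, \ref{itPropsubOneb} and \ref{itPropsubThree} are correct, and in fact more elementary than the paper's treatment: the paper routes all five items through the canonical model, the Lindenbaum lemma and Proposition \ref{propSimForm}, whereas you obtain these three by unfolding the definition of $\Sim w$ and reasoning in intuitionistic propositional logic alone, which works precisely because $\psi$ (resp.\ $\Sim v$ for $v\succ w$) occurs literally as a disjunct of the consequent, or as a conjunct of the antecedent, of the defining implication. Your part \ref{itPropsubFive} is also correct and is essentially the paper's own argument: Lindenbaum plus Lemma \ref{lemm:normality} to obtain a prime type where the implication is revoked, Proposition \ref{propSimForm}\eqref{simulability:c1} to get $\Theta$ with $(\irr\Sigma,w)\simu(\CMod,\Theta)$, and the dynamic-simulation clause of Theorem \ref{thmSurjI} to produce the temporal successor, closing the contradiction via Proposition \ref{propSimForm}\eqref{simulability:c2}. (One caveat: $\simu$ is the simulability relation, not the label-inclusion relation $\{(a,b):\ell(a)\subT\ell(b)\}$; you only need the former, which is what Theorem \ref{thmSurjI} asserts, so this slip is harmless.)

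The genuine gap is part \ref{itPropsubFour}, which you dismiss as ``collecting the implications of part \ref{itPropsubOne}.'' It does not follow from part \ref{itPropsubOne} under any reading. The conjunction there ranges over the \emph{moments} $w$ satisfying $\psi\in\ell^-_{\irr{}}(w)$, for a fixed $\psi$; this is how it is applied in Theorem \ref{theocomp}, where one derives $\vdash\bigwedge_{w\in W}\Sim w\to\varphi$ with $W$ the set of all moments falsifying $\varphi$. If instead one reads the conjunction as ranging over $\psi\in\ell^-(w)$ with $w$ fixed, it collapses to the single formula $\Sim w$, and the claim becomes $\vdash\Sim w\to\psi$ --- the \emph{converse} of part \ref{itPropsubOne} --- which is not derivable: together with part \ref{itPropsubOne} it would make any two formulas falsified at a common moment provably equivalent, contradicting soundness. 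Under the correct reading the item expresses that the moments of $\irr\Sigma$ exhaust every way of refuting $\psi$, and this is where the paper must invoke the \emph{surjectivity} of $\simu$ from Theorem \ref{thmSurjI}: given a prime type $\Gamma$ containing the conjunction positively, there is a moment $u$ with $(\irr\Sigma,u)\simu(\CMod,\Gamma)$, whence $\Sim u\in\Gamma^-$ by Proposition \ref{propSimForm}; so $\psi\notin\ell^-(u)$ (otherwise $\Sim u$ would be one of the conjuncts in $\Gamma^+$, contradicting consistency), hence $\psi\in\ell^+(u)$ by saturation of $\irr\Sigma$, and therefore $\psi\in\Gamma^+$. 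This surjectivity --- every prime type is simulated by some moment --- is the completeness-flavoured content of the item; no propositional manipulation of the formulas $\Sim w$ can produce it, and your proposal never engages with it.
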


\proof

\noindent \ref{itPropsubOne}
First assume that $\psi\in \ell^- ({w})$, and toward a contradiction that $\nvdash \psi \to \Sim{{w}}$. By the Lindenbaum lemma there is $\Gamma \in | \CMod | $ such that $\psi \to \Sim{{w}} \in \Gamma^-$. Thus for some $\Theta \seq_\CIcon \Gamma$ we have that $\psi \in \Theta^+$ and $\Sim{{w}} \in \Theta^-$. But then by Proposition \ref{propSimForm} we have that $(\fr W,{w}) \simu (\CMod,\Delta)$ for some $\Delta \seq_\CIcon \Theta$, so that $\psi\in \Delta^-$, and by monotonicity $\psi \in \Theta^-$, contradicting the consistency of $\Theta$.
\medskip

\noindent \ref{itPropsubOneb} If $\psi\in \ell^+ ({w})$, we proceed similarly. Assume toward a contradiction that\linebreak $\nvdash \big ( \psi \to \Sim{{w}} \big ) \to \Sim w$. Then, reasoning as above there is $\Theta \in |\CMod|$ such that $\psi \to \Sim{{w}} \in \Theta^+$ and $\Sim w \in \Theta^-$. From Proposition \ref{propSimForm} we see that there is $\Delta \seq_c \Theta$ such that $(\fr W,w) \simu (\CMod,\Delta)$, so that $\psi \in \Delta^+$ and, once again by Proposition \ref{propSimForm}, $\Sim w \in \Delta^-$. It follows that $\psi \to \Sim w \not \in \Delta^+$; but in view of upward persistence, this contradicts that $\psi \to \Sim{{w}} \in \Theta^+$.

\medskip


\noindent \ref{itPropsubThree} Suppose that ${v}\seq{w}$. Reasoning as above, it suffices to show that if $\Gamma \in |\CMod|$ is such that $\Sim{{w}} \in \Gamma ^-$, then also $\Sim{{v}} \in \Gamma ^-$. But if $\Sim{{w}} \in \Gamma ^-$, there is $\Theta \seq _\CIcon \Gamma$ such that $(\irr{},w) \simu (\CMod, \Theta)$. By forward confluence $(\irr{},v) \simu (\CMod, \Delta)$ for some $\Delta \seq_\CIcon \Theta$. Thus by Proposition \ref{propSimForm}, $\Sim {v} \in \Delta^-$ and by upwards persistence $\Sim{v} \in \Gamma^-$. Since $\Gamma \in |\CMod| $ was arbitrary, the claim follows.
\medskip

\noindent \ref{itPropsubFour}
We prove that if $\Gamma \in |\CMod|$ is such that
\begin{equation}\label{EqGammaConjunc}
\bigwedge _{\psi\in \ell^- ({{w}})} \mathrm{Sim}({{w}})  \in \Gamma ^+,
\end{equation}
then $\psi \in \Gamma^+$. If \eqref{EqGammaConjunc} holds then by Theorem \ref{thmSurjI}, there is ${w} \in |\irr{}|$ with $(\irr{}, w) \simu (\CMod,\Gamma)$. By Proposition \ref{propSimForm}, $\Sim {w} \in \Gamma^-$, hence it follows from \eqref{EqGammaConjunc} that $\psi \not \in \ell^- ({{w}})$; but $\irr{}$ is saturated and $\psi \in \Sigma$, so $\psi  \in \ell^+ ({{w}})$ and thus $\psi \in \Gamma^+$, as required.
\medskip

\noindent \ref{itPropsubFive} Suppose that $\Gamma \in |\CMod|$ is such that
\begin{equation}\label{EqGammaCirc}
\tnext\bigwedge _{{{w}} \mathrel S_{\irr{}} {{{v}}}   }\Sim{{{v}}} \in \Gamma^+ ,
\end{equation}
and assume toward a contradiction that $\Sim{{w}} \in \Gamma^-$. By Proposition \ref{propSimForm} $(\irr{}, w) \simu (\CMod,\Delta)$ for some $\Delta \seq_\CIcon \Gamma$. Since $\simu$ is a dynamic simulation, it follows that there is $v \in |\irr{}|$ with ${w} \mathrel S_{\irr{}} v$ and $(\irr{}, v) \simu  \big ( \CMod, S_\CIcon (\Delta) \big ) $, so that $\mathrm{Sim}(v) \in \big ( S_\CIcon (\Delta) \big ) ^-$. It follows that $\tnext \mathrm{Sim}(v) \in \Gamma^-$, since $S_\CIcon$ is sensible and $\Gamma$ is saturated. But $\Delta\seq_\CIcon \Gamma$, so that $\tnext \Sim v \in S^-_\CIcon (\Gamma)$ as well, contradicting \eqref{EqGammaCirc}. 
\endproof

We are now ready to define our initial quasimodel. Given a finite set $\Sigma$ of formulas, we will define a quasimodel $\cqm\Sigma$ falsifying all unprovable $\Sigma$-types. This quasimodel is a substructure of $\irr\Sigma$, containing only moments which are {\em possible} in the following sense.

\begin{definition}\label{defsound}
Let $\Lambda$ be an admissible temporal logic over $\cl L \subseteq \landif$.
Fix $\Sigma\Subset \cl L$. We say that a moment ${{w}} \in |\irr\Sigma|$ is {\em possible} if $\not \vdash \mathrm{Sim}({{w}})$, and denote the set of possible $\Sigma$-moments by $\unp\Sigma$.
\end{definition}

\shortproof{The following gives an alternative characterization of Definition \ref{defsound} and can be checked using Proposition \ref{propSimForm} and the Lindenbaum lemma.
}

\begin{lemma}\label{lemPosTotal}
Let $\Lambda$ be an admissible temporal logic over $\cl L \subseteq \landif$ and $\Sigma\Subset \cl L$.
Then, ${{w}} \in |\irr\Sigma|$ is possible if and only if there is $\Gamma \in |\CMod|$ such that $( \irr\Sigma,  w ) \simu ( \CMod, \Gamma) $.
\end{lemma}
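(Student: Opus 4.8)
The plan is to derive both implications directly from the simulation-formula characterization of Proposition~\ref{propSimForm}, applied to the canonical model $\CMod$ in the role of the $\Delta$-labelled frame (here $\Delta = \cl L$), together with the Lindenbaum lemma~\ref{LemmLind}. Recall that $|\CMod| = \ptype$ is the set of prime $\cl L$-types, that $\ell_\CIcon$ is the identity (so $\ell_\CIcon(\Phi) = \Phi$), and that $\Sim w \in \cl L$ by the remark following its definition, hence $\Sim w$ occupies exactly one side of each saturated prime type. The whole argument then reduces to matching "$\Sim w$ lies in the negative part of some prime type" with "$w$ is simulated from that type", reading off one inclusion from each clause of Proposition~\ref{propSimForm}.

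First I would treat the forward implication. Assume $w$ is possible, i.e.\ $\nvdash \Sim w$; in particular $\nvdash \bot$ (otherwise $\vdash \Sim w$ by ex falso), so that $\varnothing \not\vdash \{\Sim w\}$ in the sense of Definition~\ref{def:prime}. The Lindenbaum lemma~\ref{LemmLind} then supplies a prime type $\Phi \in |\CMod|$ with $\Sim w \in \Phi^- = \ell_\CIcon^-(\Phi)$. Applying clause~\eqref{simulability:c1} of Proposition~\ref{propSimForm} with $\fr X = \CMod$ and $x = \Phi$ yields some $\Gamma \seq_\CIcon \Phi$ with $(\irr\Sigma, w) \simu (\CMod,\Gamma)$; as $\Gamma$ is itself a prime type we have $\Gamma \in |\CMod|$, which is the required witness.

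Conversely, suppose $\Gamma \in |\CMod|$ satisfies $(\irr\Sigma, w)\simu(\CMod,\Gamma)$. Taking $x = y = \Gamma$ (so that trivially $y\seq x$) in clause~\eqref{simulability:c2} of Proposition~\ref{propSimForm}, we obtain $\Sim w \notin \Gamma^+$. To upgrade this to unprovability I would use that every provable formula lies in the positive part of every prime type: if $\vdash \theta$ yet $\theta \notin \Phi^+$, then by saturation $\theta \in \Phi^-$, while $\Phi^+ \vdash \theta$ holds vacuously from $\vdash \theta$, so $\Phi^+ \vdash \Phi^-$, contradicting consistency. Instantiating with $\theta = \Sim w$ and $\Phi = \Gamma$, the fact that $\Sim w \notin \Gamma^+$ forces $\nvdash \Sim w$, i.e.\ $w$ is possible.

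Since the argument is bookkeeping over Proposition~\ref{propSimForm}, I do not expect a genuine obstacle; the only points needing care are tracking the direction of $\peq_\CIcon$ (the simulating type $\Gamma$ sits \emph{above} the witnessing $\Phi$) and noting that, although $\Sim w$ need not belong to $\Sigma$, it is still a legitimate $\cl L$-formula and hence is classified by every saturated prime type of $\CMod$.
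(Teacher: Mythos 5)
Your proof is correct and follows essentially the same route as the paper's: both directions are read off from Proposition~\ref{propSimForm} applied to $\CMod$, with the Lindenbaum lemma supplying a prime type $\Phi$ with $\Sim w \in \Phi^-$ in the forward direction, and saturation plus consistency of $\Gamma$ converting $\Sim w \notin \Gamma^+$ into $\nvdash \Sim w$ in the converse. Your extra care about the witness (the simulating type sits above the Lindenbaum type, so one must rename it) and about $\Sim w$ being an $\cl L$-formula classified by every prime type are points the paper glosses over, but they are refinements of the same argument, not a different one.
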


\fullproof{
\proof
If $(\irr\Sigma,w) \simu (\CMod,\Gamma)$ then by Proposition \ref{propSimForm} $\Sim w  \not\in \Gamma^+ $, hence since $\Gamma$ is saturated $\Sim w  \in \Gamma^-$ which since $\Gamma$ is consistent implies that $\Sim w $ is possible.
Conversely, if $\not \vdash \Sim w$ we can find by the Lindenbaum lemma $\Gamma \in |\CMod|$ with $\Sim w \in \Gamma^-$, which implies that $(\irr\Sigma,w) \simu (\CMod,\Gamma)$.
\endproof
}

With this we are ready to define our initial structure, which as we will see later is indeed a quasimodel.

\begin{definition}
Let $\Lambda$ be an admissible temporal logic over $\cl L \subseteq \landif$.
Given $\Sigma \Subset\cl L$, we define the {\em initial structure} for $\Sigma$ by $\cqm \Sigma = \irr {\Sigma} \upharpoonright \unp\Sigma$.
\end{definition}

\begin{remark}
In principle $\cqm \Sigma$ depends on $\Lambda$, but we do not reflect this in the notation since $\Lambda$ will always be either ${\sf ITL}^0_\ps$ or ${\sf ITL}^0_{\ps\forall}$, depending on whether $\forall$ appears in $\Sigma$.
\end{remark}

Our strategy from here on will be to show that canonical structures are indeed quasimodels; once we establish this, completeness of ${\logbasic} $ is an easy consequence. The most involved step will be showing that the successor relation on ${\cqm{\Sigma}}$ is $\omega$-sensible, but we begin with some simpler properties.

\begin{lemma}\label{niceprop}
Let $\Lambda$ be a logic extending ${\sf ITL}^0_\nx$ over a temporal language $\cl L \subseteq \landif$.
Let $\Sigma \Subset \cl L$, $\irr{} = \irr\Sigma$ and $\cqm{} = \cqm \Sigma$. Then, $|\cqm{}|$ is an open subset of $|\irr{}|$ and $S_{\cqm{}}$ is serial.
\end{lemma}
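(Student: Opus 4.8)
The plan is to establish the two assertions separately, in each case reducing to the syntactic facts collected in Proposition \ref{propsub}. For openness, recall that $|\irr{}|$ carries the up-set topology of $\peq_{\irr{}}$, so that $|\cqm{}| = \unp\Sigma$ is open exactly when it is upward closed. I would therefore fix $w \in \unp\Sigma$ together with some $v$ satisfying $w \peq v$, and show $v \in \unp\Sigma$. Proposition \ref{propsub}.\ref{itPropsubThree} gives $\vdash \Sim v \to \Sim w$; were $\vdash \Sim v$ to hold, then \ref{ax13MP} would yield $\vdash \Sim w$, contradicting $w \in \unp\Sigma$. Hence $\not\vdash \Sim v$, i.e.\ $v \in \unp\Sigma$, so $\unp\Sigma$ is upward closed and therefore open.

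For seriality, note first that $S_{\cqm{}} = S_{\irr{}} \cap (\unp\Sigma \times \unp\Sigma)$, so that given $w \in \unp\Sigma$ it suffices to produce a \emph{possible} moment $v$ with $w \mathrel S_{\irr{}} v$. I would argue by contradiction, assuming that every $S_{\irr{}}$-successor $v$ of $w$ is impossible, i.e.\ $\vdash \Sim v$. Since $\irr{}$ is finite, the conjunction $\bigwedge_{w \mathrel S_{\irr{}} v} \Sim v$ is then a finite conjunction of theorems and hence itself provable; applying \ref{ax14NecCirc} gives $\vdash \tnext \bigwedge_{w \mathrel S_{\irr{}} v} \Sim v$, whereupon Proposition \ref{propsub}.\ref{itPropsubFive} together with \ref{ax13MP} forces $\vdash \Sim w$. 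This contradicts $w \in \unp\Sigma$, so some $S_{\irr{}}$-successor of $w$ is possible, as required.

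I expect the only delicate point to be the degenerate case of the seriality argument in which $w$ has \emph{no} $S_{\irr{}}$-successor at all. Here the empty conjunction is read as $\top$ by the conventions fixed earlier, so that $\bigwedge_{w \mathrel S_{\irr{}} v} \Sim v = \top$ is trivially provable and Proposition \ref{propsub}.\ref{itPropsubFive} still delivers $\vdash \Sim w$; thus a possible moment can never be a dead end. This is exactly the phenomenon ensuring that seriality survives the passage from $\irr{}$ to its restriction $\cqm{}$, and it is the step I would state carefully rather than leave to the reader.
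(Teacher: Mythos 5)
Your proof is correct and follows essentially the same route as the paper: upward closure of $\unp\Sigma$ via Proposition \ref{propsub}.\ref{itPropsubThree} plus modus ponens, and seriality via Proposition \ref{propsub}.\ref{itPropsubFive} together with necessitation \ref{ax14NecCirc}. Your explicit handling of the degenerate no-successor case (the empty conjunction read as $\top$) is the same point the paper compresses into the phrase ``would be equivalent to $\tnext\top$'', so the two arguments coincide.
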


\proof
To check that $|\cqm{}|$ is upward closed, let ${{w}}\in |\cqm{}|$ and suppose ${{{v}}}\seq{{w}}$. Now, by Proposition \ref{propsub}.\ref{itPropsubThree}, we have that $\vdash\mathrm{Sim}({{{v}}})\to \mathrm{Sim}({{w}})$; hence if ${{w}}$ is possible, so is ${{{v}}}$.
To see that $S_{\cqm {} }$ is serial, observe that by Proposition \ref{propsub}.\ref{itPropsubFive}, if  ${{w}}\in |\cqm{}| \subseteq |\irr{}|$, $\vdash \tnext\bigwedge_{{{w}}\mathrel S_{\irr{}} {{{v}}} }\mathrm{Sim}({{{v}}}) \to \mathrm{Sim}({{w}})$. Since ${{w}}$ is possible, it follows that for some ${{{v}}}$ with ${{w}} \mathrel S_{\irr{}}  {{{v}}}$, ${{{v}}}$ is possible as well, for otherwise $\tnext\bigwedge_{{{w}}\mathrel S_{\irr{}} {{{v}}} }\mathrm{Sim}({{{v}}})$ would be equivalent to $\tnext \top$, allowing us to deduce $\mathrm{Sim}({{w}})$. But then ${{{v}}}\in|\cqm{}|$, as needed.
\endproof

\section{$\omega$-Sensibility}\label{secOmSens}
In this section we will show that $S_{\cqm{}}$ is $\omega$-sensible, the most difficult step in proving that $\cqm{}=\cqm{\Sigma}$ is a quasimodel.
Fix an admissible temporal logic $\Lambda$ over $\cl L \subseteq \landif$, and let $R$ denote the transitive, reflexive closure of $S_{\cqm{}}$.
If $w \mathrel R v$, we say that $v$ is {\em reachable} from $w$.

\begin{lemma}\label{syntactic}
Let $\Lambda$ be an admissible temporal logic over $\cl L \subseteq \landif$.
If $\Sigma\Subset \cl L$ and ${{w}}\in|\cqm\Sigma|$, then  $\vdash \tnext \bigwedge\limits_{w \mathrel R v}\mathrm{Sim}({{{v}}})\to \bigwedge \limits_{w \mathrel R v} \mathrm{Sim}({{{v}}})$.
\end{lemma}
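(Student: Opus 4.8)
The plan is to reduce the claim to finitely many instances of Proposition~\ref{propsub}.\ref{itPropsubFive}, one for each moment reachable from $w$. Write $\mathcal R(w) = \{v : w \mathrel R v\}$, which is finite since $\irr\Sigma$ is finite, and abbreviate $\Theta = \bigwedge_{v \in \mathcal R(w)}\Sim v$, so that the goal becomes $\vdash \tnext\Theta \to \Theta$. Because $\Theta$ is a finite conjunction, it suffices to establish $\vdash \tnext\Theta \to \Sim u$ separately for each $u \in \mathcal R(w)$ and then conjoin the results. So I would fix an arbitrary $u$ reachable from $w$ and note that $u \in |\cqm\Sigma|$, hence $u$ is possible.

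The key step is to apply Proposition~\ref{propsub}.\ref{itPropsubFive} at $u$, which gives $\vdash \tnext \bigwedge_{u \mathrel S_{\irr{}} v'}\Sim{v'} \to \Sim u$. The subtlety here — which I expect to be the main obstacle — is that this conjunction ranges over $S_{\irr{}}$-successors in the full initial structure, whereas $\mathcal R(w)$ only records $S_{\cqm{}}$-successors living inside the possible moments. I would resolve this by showing the two conjunctions are provably interchangeable: if $u \mathrel S_{\irr{}} v'$ but $v' \notin \unp\Sigma$, then $v'$ is not possible and so $\vdash \Sim{v'}$ by Definition~\ref{defsound}. Thus, writing $B = \bigwedge_{u \mathrel S_{\cqm{}} v'}\Sim{v'}$ for the conjunction over the \emph{possible} successors only, the remaining conjuncts of the full $S_{\irr{}}$-conjunction are theorems, giving $\vdash B \to \bigwedge_{u \mathrel S_{\irr{}} v'}\Sim{v'}$. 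Passing $\tnext$ through this implication via \ref{ax14NecCirc} and \ref{ax05KNext}, and composing with the instance of Proposition~\ref{propsub}.\ref{itPropsubFive}, yields $\vdash \tnext B \to \Sim u$.

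It then remains to connect $B$ back to $\Theta$. Since $u$ is possible and $S_{\cqm{}}$ is just the restriction of $S_{\irr{}}$ to $\unp\Sigma$, every conjunct of $B$ is $\Sim{v'}$ for some $v'$ with $w \mathrel R u \mathrel S_{\cqm{}} v'$, whence $v' \in \mathcal R(w)$ because $R$ is the reflexive--transitive closure of $S_{\cqm{}}$; therefore $\vdash \Theta \to B$, and by monotonicity of $\tnext$ (again \ref{ax14NecCirc} and \ref{ax05KNext}, or Lemma~\ref{lemmReverseNext}) $\vdash \tnext\Theta \to \tnext B$. Chaining this with $\vdash \tnext B \to \Sim u$ gives $\vdash \tnext\Theta \to \Sim u$ for the fixed $u$, and conjoining over all $u \in \mathcal R(w)$ delivers $\vdash \tnext\Theta \to \Theta$. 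Apart from the successor-set bookkeeping in the second paragraph, every step is routine propositional and $\tnext$-normality reasoning, so I expect that reconciling the $S_{\irr{}}$-successors appearing in Proposition~\ref{propsub}.\ref{itPropsubFive} with the $S_{\cqm{}}$-successors tracked by $R$ is the only genuinely delicate point.
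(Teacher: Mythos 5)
Your proof is correct and follows essentially the same route as the paper's: both apply Proposition~\ref{propsub}.\ref{itPropsubFive} at each moment reachable from $w$, discard the impossible $S_{\irr{}}$-successors using the fact that their simulation formulas are provable (hence removable from the conjunction by \ref{ax14NecCirc}, \ref{ax05KNext}/Lemma~\ref{lemmReverseNext} and propositional reasoning), and close the argument by noting that the remaining possible successors lie in $R(w)$. The only difference is bookkeeping — you establish the implication between the restricted and full conjunctions before pushing $\tnext$ through, whereas the paper manipulates the conjuncts directly under the $\tnext$ — which is immaterial.
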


\proof
Let $\irr{} = \irr\Sigma$. By Proposition \ref{propsub}.\ref{itPropsubFive} we have that, for all ${{{v}}}\in{R}({{w}})$,
\[\vdash \tnext\bigwedge\limits_{{{{v}}} \mathrel S_{\irr{}} u }\mathrm{Sim}(u) \to \mathrm{Sim}({{{v}}}).\]
Now, if $u\not\in|\cqm{\Sigma}|$, then $\vdash \mathrm{Sim}(u)$, hence by \ref{ax14NecCirc} we have
$\vdash \tnext \mathrm{Sim}(u)$, and we can remove $\Sim u$ from the conjunction using Lemma \ref{lemmReverseNext} and propositional reasoning.
Since ${{{v}}} \in R(w)$ was arbitrary, this shows that
\[\vdash \tnext \bigwedge\limits_{w \mathrel R v}\mathrm{Sim}({{{v}}})\to \bigwedge \limits_{w \mathrel R v}\mathrm{Sim}({{{v}}}).\]
\endproof

From this we obtain the following, which evidently implies $\omega$-sensibility:

\begin{proposition}\label{tempinc}
Let $\Lambda$ be a logic extending ${\sf ITL}^0_\ps$ over a temporal language $\cl L \subseteq \landif$ and $\Sigma\Subset \cl L$.
If ${{w}}\in|\cqm\Sigma|$ and $\diam \psi\in \ell^+ ({{w}})$, then there is ${{{v}}}\in{R}({{w}})$ such that $\psi\in \ell^+ ({{{v}}})$.
\end{proposition}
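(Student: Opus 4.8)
The plan is to prove the contrapositive: assuming that no world reachable from $w$ satisfies $\psi$, I will derive $\vdash\Sim w$, which contradicts the fact that $w\in|\cqm\Sigma|$ is possible (Definition \ref{defsound}). Throughout write $\irr{}=\irr\Sigma$, $\cqm{}=\cqm\Sigma$, and abbreviate $\theta=\bigwedge_{w\mathrel R v}\Sim v$; since $\irr{}$ is finite (Theorem \ref{thmSurjI}), $R(w)$ is finite and $\theta$ is a well-defined formula. So suppose toward a contradiction that $\psi\notin\ell^+(v)$ for every $v\in R(w)$.

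First I would turn this semantic hypothesis into syntactic facts about each reachable world. Because $\irr{}$ is saturated (Theorem \ref{thmSurjI}) and $\psi\in\Sigma$, the failure $\psi\notin\ell^+(v)$ forces $\psi\in\ell^-(v)$ for every $v\in R(w)$. Proposition \ref{propsub}.\ref{itPropsubOne} then gives $\vdash\psi\to\Sim v$ for each such $v$, and conjoining over the finitely many reachable $v$ yields $\vdash\psi\to\theta$. Applying rule \ref{ax11:dist} I obtain $\vdash\diam\psi\to\diam\theta$.

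The crux is to collapse $\diam\theta$ back to $\theta$, and this is exactly where the already-established Lemma \ref{syntactic} does the work: it supplies $\vdash\tnext\theta\to\theta$, and feeding this into the induction rule \ref{ax12:ind:2} gives $\vdash\diam\theta\to\theta$. Chaining with the previous implication produces $\vdash\diam\psi\to\theta$; since $R$ is reflexive we have $w\mathrel R w$, so $\Sim w$ is a conjunct of $\theta$ and hence $\vdash\diam\psi\to\Sim w$.

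Finally I would invoke the Peirce-style principle of Proposition \ref{propsub}.\ref{itPropsubOneb}, applied with the formula $\diam\psi$ (which lies in $\ell^+(w)$ and in $\Sigma$): it gives $\vdash(\diam\psi\to\Sim w)\to\Sim w$, and one modus ponens with $\vdash\diam\psi\to\Sim w$ yields $\vdash\Sim w$, the desired contradiction. The main obstacle is purely conceptual: spotting that the correct auxiliary formula is the conjunction $\theta$ over all reachable worlds, and that the two halves of Proposition \ref{propsub} must be combined through the induction rule---the positive item \ref{itPropsubOneb} applied to the eventuality $\diam\psi$ at $w$, and the negative item \ref{itPropsubOne} applied to $\psi$ at every reachable $v$. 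Once Lemma \ref{syntactic} is in hand, the remaining steps are routine propositional and modal bookkeeping.
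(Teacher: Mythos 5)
Your proof is correct and follows essentially the same route as the paper's: both derive $\vdash \psi \to \bigwedge_{w \mathrel R v}\Sim v$ from Proposition \ref{propsub}.\ref{itPropsubOne}, collapse $\diam$ via Lemma \ref{syntactic} and rule \ref{ax12:ind:2}, and close with Proposition \ref{propsub}.\ref{itPropsubOneb} and modus ponens to contradict possibility of $w$. Your only additions are cosmetic: naming the conjunction $\theta$ and spelling out the saturation step that turns $\psi \notin \ell^+(v)$ into $\psi \in \ell^-(v)$, which the paper leaves implicit.
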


\proof Towards a contradiction, assume that ${{w}}\in \unp \Sigma$ and $\diam \psi\in \ell^+ ({{w}})$ but, for all ${{{v}}}\in{R}({{w}})$, $\psi \in \ell^-({{w}})$.
By Lemma \ref{syntactic}, $\vdash \tnext \bigwedge \limits_{w \mathrel R v} \mathrm{Sim}({{{v}}})\to \bigwedge\limits_{w \mathrel R v} \mathrm{Sim}({{{v}}})$.
By the $\diam$-induction rule \ref{ax12:ind:2}, $\vdash \diam \bigwedge \limits_{w\mathrel Rv}\mathrm{Sim}({{{v}}})\to \bigwedge\limits_{w\mathrel Rv}\mathrm{Sim}({{{v}}})$;
in particular,
\begin{equation}\label{other}
\vdash \diam \bigwedge _{w\mathrel Rv}\mathrm{Sim}({{{v}}})\to \mathrm{Sim}({{w}}).
\end{equation}

Now let ${{{v}}}\in{R}({{w}})$. By Proposition \ref{propsub}.\ref{itPropsubOne} and the assumption that $\psi \in \ell^-({{{v}}})$ we have that
$\vdash \psi \to \mathrm{Sim}({{{v}}}) $,
and since ${{{v}}}$ was arbitrary,
$\vdash \psi \to \bigwedge_{w\mathrel Rv}\mathrm{Sim}({{{v}}}) $.
Using distributivity \ref{ax11:dist} we further have that
$\vdash \diam \psi \rightarrow \diam  \bigwedge_{w\mathrel Rv}\mathrm{Sim}({{{v}}})$.
This, along with (\ref{other}), shows that
$\vdash \diam \psi \to \mathrm{Sim}({{w}})$;
however, by Proposition \ref{propsub}.\ref{itPropsubOneb} and our assumption that $\diam \psi\in \ell^+ ({{w}})$ we have that
$\vdash \big ( \diam \psi \to \mathrm{Sim}({{w}}) \big ) \to \Sim w$,
hence by modus ponens we obtain $\vdash \mathrm{Sim}({{w}}),$ which contradicts the assumption that ${{w}}\in \unp\Sigma$. We conclude that there can be no such ${{w}}$.
\endproof

\begin{corollary}\label{laststretch}
Let $\Lambda$ be an admissible temporal logic over $\cl L \subseteq \landif$.
Then, if $\Sigma\Subset \cl L$, $\cqm\Sigma$ is a quasimodel.
\end{corollary}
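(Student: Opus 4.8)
The plan is to verify, one by one, the defining conditions of a quasimodel (Definition~\ref{ndqm}) for $\cqm\Sigma = \irr\Sigma \upharpoonright \unp\Sigma$. Recall that a $\Sigma$-quasimodel is a well $\Sigma$-labelled system whose topology is the up-set topology of a partial order. Since $\cqm\Sigma$ is obtained by restricting the finite, saturated \emph{weak} quasimodel $\irr\Sigma$ (Theorem~\ref{thmSurjI}) to the subset $\unp\Sigma$, most of the structural requirements will be inherited, and the task reduces to checking that the restriction is well-behaved and satisfies the two ``wellness'' conditions: honesty of the labelling and $\omega$-sensibility of the successor relation.

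First I would invoke Lemma~\ref{niceprop}, which establishes that $\unp\Sigma$ is an \emph{open} (i.e.\ upward-closed) subset of $|\irr\Sigma|$ and that $S_{\cqm\Sigma}$ is \emph{serial}. Openness is the key hypothesis: by Lemma~\ref{LemmOpenSubst} the restriction of a labelled space to an open set is again a labelled space, so $\cqm\Sigma$ is a $\Sigma$-labelled space, and its topology is still the up-set topology of the restricted order. Combined with seriality, I would then appeal to Lemma~\ref{LemmIsQuasi}: since $\unp\Sigma$ is open and $S_{\cqm\Sigma}$ is serial, to conclude that $\cqm\Sigma$ is a (weak) quasimodel it suffices to know that $S_{\cqm\Sigma}$ is moreover $\omega$-sensible. (The sensibility of $S_{\cqm\Sigma}$ is automatic, being inherited from $\irr\Sigma$, and continuity/forward-confluence of the restricted relation follows from openness as in that lemma.)

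The heart of the argument is therefore $\omega$-sensibility, and this is exactly what Proposition~\ref{tempinc} delivers. That proposition states that whenever $w \in |\cqm\Sigma|$ and $\diam\psi \in \ell^+(w)$, there exists $v \in R(w)$ with $\psi \in \ell^+(v)$, where $R$ is the reflexive-transitive closure of $S_{\cqm\Sigma}$; unpacking $R$, this says precisely that there are $n\geq 0$ and $v$ with $w \mathrel{S^n_{\cqm\Sigma}} v$ and $\psi \in \ell^+(v)$, which together with seriality is the definition of $\omega$-sensibility. Hence $S_{\cqm\Sigma}$ is $\omega$-sensible. Since $\cqm\Sigma$ is a substructure of the saturated $\irr\Sigma$, honesty of $\ell_{\cqm\Sigma}$ needs a brief check, but in the $\forall$-free setting there are no $\forall$-formulas in $\Sigma$ and the condition is vacuous, while in the presence of $\forall$ it follows from the corresponding honesty of $\irr\Sigma$ together with openness of $\unp\Sigma$.

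With all defining conditions in hand, $\cqm\Sigma$ is a well $\Sigma$-labelled system whose topology is an up-set topology, i.e.\ a $\Sigma$-quasimodel, which is the desired conclusion. I expect essentially no obstacle here: the corollary is genuinely a bookkeeping assembly of the preceding results, with Proposition~\ref{tempinc} carrying all the real weight. The only point requiring a moment's care is recognizing that Proposition~\ref{tempinc} is phrased in terms of $R$ rather than directly as $\omega$-sensibility, so I would make explicit that reachability via $R$ is the same as the existence of some finite $S$-chain, thereby matching the definition verbatim.
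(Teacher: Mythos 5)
Your proposal is correct and follows essentially the same route as the paper: the paper's proof is exactly the assembly you describe, citing Lemma \ref{niceprop} for openness of $|\cqm\Sigma|$ and seriality of $S_{\cqm\Sigma}$, Proposition \ref{tempinc} for $\omega$-sensibility, and Lemma \ref{LemmIsQuasi} to conclude. The one flaw is in your aside on honesty in the presence of $\forall$: the claim that it ``follows from the corresponding honesty of $\irr\Sigma$ together with openness of $\unp\Sigma$'' is not justified. Theorem \ref{thmSurjI} only provides $\irr\Sigma$ as a \emph{weak} quasimodel, with no honesty guarantee, and even if $\irr\Sigma$ were honest, restriction to an open subset would not preserve the existential half of honesty (the witness $v$ with $\varphi\in\ell^-(v)$ required when $\forall\varphi\in\ell^-(w)$ may lie outside $\unp\Sigma$). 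Indeed, the paper concedes at the start of Section \ref{secUniversal} that $\cqm\Sigma$ is \emph{not} honest in general, which is precisely why the honest substructures $\CMod[\Pi]$ and $\cqm\Sigma[\Pi]$ are introduced there (Lemmas \ref{lemUniversalRestrict} and \ref{lemSInv}). This looseness is shared by the paper's own statement and proof, which silently ignore honesty; in the only place the corollary is applied (Theorem \ref{theocomp}) one has $\Sigma\Subset\landi$, where, as you correctly note, honesty is vacuous and your argument is complete.
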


\proof
Let $\cqm{} = \cqm \Sigma$. By Lemma \ref{niceprop}, $|\cqm{}|$ is upwards closed in $|\irr\Sigma|$ and $S_{\cqm{}}$ is serial, while by Proposition \ref{tempinc}, $S_{\cqm{}}$ is $\omega$-sensible. It follows from Lemma \ref{LemmIsQuasi} that $\cqm{}$ is a quasimodel.
\endproof

We are now ready to prove that ${\logbasic} $ is complete.

\begin{theorem}\label{theocomp}
If $\varphi \in \landi$ is valid over the class of dynamical systems, ${\logbasic} \vdash\varphi$.
\end{theorem}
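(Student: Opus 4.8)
The plan is to prove the contrapositive: assuming ${\logbasic} \not\vdash \varphi$, I will exhibit a dynamical system falsifying $\varphi$. Put $\Sigma = {\rm sub}(\varphi)$, which is a finite, subformula-closed subset of $\landi$ containing $\varphi$, and note that ${\logbasic}$ is an admissible temporal logic over $\landi$ (it extends itself), so all results of the preceding sections apply with $\Lambda = {\logbasic}$ and this $\Sigma$. Since $\not\vdash\varphi$ is the same as $\varnothing \not\vdash \{\varphi\}$, the Lindenbaum lemma (Lemma~\ref{LemmLind}) yields a prime type $\Gamma \in |\CMod|$ with $\varphi \in \Gamma^-$.

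The crux is to transport this defect from the canonical model into the initial quasimodel $\cqm\Sigma$. By Proposition~\ref{prop:CisW}, $\CMod$ is a deterministic weak quasimodel, so Theorem~\ref{thmSurjI} provides a surjective dynamic simulation ${\simu}\subseteq |\irr\Sigma|\times|\CMod|$; surjectivity gives a moment $w\in|\irr\Sigma|$ with $(\irr\Sigma,w)\simu(\CMod,\Gamma)$. By Lemma~\ref{lemPosTotal} this already witnesses that $w$ is possible, so $w\in|\cqm\Sigma|$. It remains to check that $w$ falsifies $\varphi$. Here I would use that $\irr\Sigma$ is saturated and $\varphi\in\Sigma$, so $\ell(w)$ decides $\varphi$: either $\varphi\in\ell^+(w)$ or $\varphi\in\ell^-(w)$. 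The simulation gives $\ell(w)\subT\Gamma$, hence $\ell^+(w)\subseteq\Gamma^+$; were $\varphi\in\ell^+(w)$ we would get $\varphi\in\Gamma^+\cap\Gamma^-$, contradicting condition~\ref{cond:type:intersection} of Definition~\ref{def:type}. Thus $\varphi\in\ell^-(w)$, and $\cqm\Sigma$ falsifies $\varphi$.

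To conclude, by Corollary~\ref{laststretch} the structure $\cqm\Sigma$ is a quasimodel; being a restriction of the finite, saturated $\irr\Sigma$ to an open subset, it is again finite and saturated, hence a saturated, finite $\Sigma$-quasimodel with $\Sigma={\rm sub}(\varphi)$. Since it falsifies $\varphi\in\landi\subseteq\landif$, Theorem~\ref{TheoITLc} converts this into falsifiability of $\varphi$ over the class of dynamic topological systems, which is precisely the class of dynamical systems of Definition~\ref{DefSem}. Therefore $\varphi$ is not valid over that class, establishing the contrapositive.

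I expect the only genuinely delicate point to be the direction-flip in the second paragraph: the simulation only guarantees $\ell(w)\subT\Gamma$, so $\varphi\in\Gamma^-$ does not by itself place $\varphi$ in $\ell^-(w)$. The argument must instead exploit saturation of $\irr\Sigma$ to force $\ell(w)$ to decide $\varphi$ and then rule out the positive case by consistency of $\Gamma$. All the truly heavy lifting — that $S_{\cqm\Sigma}$ is $\omega$-sensible so that $\cqm\Sigma$ is a genuine quasimodel, and that arbitrary quasimodels are realized by actual dynamical systems — is already packaged in Corollary~\ref{laststretch} and Theorem~\ref{TheoITLc}, so beyond this flip the remainder is essentially bookkeeping.
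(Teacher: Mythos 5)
Your proof is correct, but it reaches the crucial intermediate fact --- that some \emph{possible} moment of $\irr{{\rm sub}(\varphi)}$ falsifies $\varphi$ --- by a genuinely different route than the paper. The paper argues syntactically: letting $W$ be the set of moments whose negative label contains $\varphi$, Proposition \ref{propsub}.\ref{itPropsubFour} yields $\vdash \bigwedge_{w\in W}\Sim{w}\to\varphi$, so unprovability of $\varphi$ forces $\Sim{w^\ast}$ to be unprovable for some $w^\ast\in W$, i.e.\ some falsifying moment is possible. You argue semantically: the Lindenbaum Lemma \ref{LemmLind} gives a prime type $\Gamma$ with $\varphi\in\Gamma^-$, Proposition \ref{prop:CisW} and the surjectivity of the simulation in Theorem \ref{thmSurjI} pull $\Gamma$ back to a moment $w$ with $(\irr{{\rm sub}(\varphi)},w)\simu(\CMod,\Gamma)$, Lemma \ref{lemPosTotal} certifies that $w$ is possible, and your saturation-plus-consistency argument correctly handles the direction flip (simulation only gives $\ell(w)\subT\Gamma$) to place $\varphi$ in $\ell^-(w)$. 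The two routes rest on the same underlying machinery; in fact your direction-flip step is precisely the argument occurring inside the paper's proof of Proposition \ref{propsub}.\ref{itPropsubFour}, so in effect you have inlined a specialization of that lemma rather than citing it. What the paper's packaging buys is economy of statement, keeping the canonical-model reasoning confined to Proposition \ref{propsub}, whose sibling items are needed anyway for Lemma \ref{niceprop} and Proposition \ref{tempinc}; what your version buys is that the semantic content is explicit, and you are also more scrupulous than the paper in verifying that $\cqm{{\rm sub}(\varphi)}$ is finite and saturated, which Theorem \ref{TheoITLc} formally requires. Both arguments then conclude identically via Corollary \ref{laststretch} and Theorem \ref{TheoITLc}.
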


\proof
We prove the contrapositive. Suppose $\varphi$ is an unprovable formula and let
\[W=\cbra {{w}}\in\irr{{\rm sub}(\varphi)}:\varphi\in \ell^- ({{w}})\cket.\]
Then, by Proposition \ref{propsub}.\ref{itPropsubFour} we have that
$\vdash \bigwedge_{{w} \in W} \mathrm{Sim}({{w}}) \rightarrow \varphi$;
since $\varphi$ is unprovable, it follows that some ${{w}}^\ast\in W$ is possible and hence ${{w}}^\ast\in \unp {{\rm sub}(\varphi)}$. By Corollary \ref{laststretch}, $\cqm {{\rm sub}(\varphi)}$ is a quasimodel, so that by Theorem \ref{TheoITLc}, $\varphi$ is falsifiable in some dynamical system.
\endproof

\section{The universal modality}\label{secUniversal}

Now let us show that ${\sf ITL}^0_{\ps\forall}$ is complete for the class of dynamical systems.
As before our completeness proof relies on the canonical model $\mathcal M_{{\sf ITL}^ 0_{\ps\forall}}$ and the initial quasimodel $\cqm {\Sigma}$ (for suitable $\Sigma$), but now we cannot use these structures as they are as they are not honest (see Definition \ref{frame}).
We will first exhibit an honest substructure of $\mathcal M_{{\sf ITL}^ 0_{\ps\forall}}$.

\begin{definition}
Let $\Sigma \subseteq \landif$. We define $\Sigma_\forall$ to be the set of formulas of $\Sigma$ of the form $\forall \varphi$.
A {\em universal $\Sigma$-profile} is a partition $\Pi = (\Pi^+,\Pi^-)$ of $\Sigma_\forall$.
If $\Phi = (\Phi^+,\Phi^-)$ is a pair of sets of formulas, we define $\Phi_\forall = \big ((\Phi^+)_\forall , (\Phi^-)_\forall \big )$, which we will henceforth write as $ ( \Phi^+_\forall , \Phi^-_\forall )$.
\end{definition}


\begin{definition}
Given $\Sigma \subseteq \landif$, a $\Sigma$-labelled structure $\fr A$ and a universal $\Sigma$-profile $\Pi$, we define $\fr A [ \Pi ] = \fr A \upharpoonright \{w\in |\fr A| : \Pi\subseteq_T \ell_\forall (w) \}$.
\end{definition}

\begin{lemma}\label{lemUniversalRestrict}
Let $\Sigma \Subset \landif$, $\Lambda = {\sf ITL}^0_{\ps \forall}$, $\CMod$ be the canonical $\Lambda$-model and $\Pi$ a universal $\Sigma$--profile. Then, $|\CMod[\Pi]|$ is open and $\CMod[\Pi]$ is honest as a $\Sigma$-labelled quasimodel.
\end{lemma}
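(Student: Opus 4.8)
The plan is to begin with a \emph{uniformity} observation that will be used throughout: every world of $\CMod[\Pi]$ carries exactly the profile $\Pi$ on $\forall$-formulas. Indeed, if $\Phi \in |\CMod[\Pi]|$ then $\Pi^+ \subseteq \Phi^+$ and $\Pi^- \subseteq \Phi^-$; since $\Phi$ is a prime type it is saturated, so $\Phi^+ \cap \Sigma_\forall$ and $\Phi^- \cap \Sigma_\forall$ partition $\Sigma_\forall$, and as $\Pi$ is also a partition of $\Sigma_\forall$ the two inclusions force $\Phi^+ \cap \Sigma_\forall = \Pi^+$ and $\Phi^- \cap \Sigma_\forall = \Pi^-$. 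Thus all worlds of $\CMod[\Pi]$ agree on $\Sigma_\forall$, and the two required properties split cleanly.

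For openness I must show that $|\CMod[\Pi]|$ is upward closed under $\peq_\CIcon$. Persistence gives $\Pi^+ \subseteq \Phi^+ \subseteq \Psi^+$ whenever $\Phi \peq_\CIcon \Psi$, so it remains to propagate the negative part: if $\forall\varphi \in \Pi^- \subseteq \Phi^-$ I must show $\forall\varphi \in \Psi^-$. Here I would use the classicality axiom \ref{axUnivEM}: since $\vdash \forall\varphi \vee \neg\forall\varphi$ and $\Phi$ is a consistent, saturated type, the disjunction property forces $\forall\varphi \in \Phi^+$ or $\neg\forall\varphi \in \Phi^+$; as $\forall\varphi \in \Phi^-$, we get $\neg\forall\varphi \in \Phi^+$, which persists to $\Psi^+$. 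Then $\forall\varphi \in \Psi^+$ would make $\Psi$ inconsistent, so $\forall\varphi \in \Psi^-$ by saturation. Hence $\Psi \in |\CMod[\Pi]|$ and $|\CMod[\Pi]|$ is open.

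For honesty I verify the two clauses of the definition. The positive clause is immediate from \ref{axUnivT}: if $\forall\varphi \in \Sigma$ and $\forall\varphi \in \ell^+(w)$, then by uniformity $\forall\varphi \in \Pi^+$, so $\forall\varphi \in v^+$ for every $v \in |\CMod[\Pi]|$; since $\vdash \forall\varphi \to \varphi$ and each $v$ is consistent and saturated, $\varphi \in v^+$. The negative clause is the crux. Given $\forall\varphi \in \ell^-(w)$, hence $\forall\varphi \in \Pi^-$, I must produce some $v \in |\CMod[\Pi]|$ with $\varphi \in v^-$; by the Lindenbaum Lemma \ref{LemmLind} it suffices to show $\Pi^+ \not\vdash \Pi^- \cup \{\varphi\}$. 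Suppose otherwise, so $\vdash \bigwedge_i \forall\alpha_i \to \bigvee_j \forall\beta_j \vee \varphi$ for finitely many $\forall\alpha_i \in \Pi^+$ and $\forall\beta_j \in \Pi^-$. Applying \ref{rulUnivNec} and \ref{axUnivK}, then collapsing $\forall\forall$ to $\forall$ via \ref{axUniv4} and \ref{axUnivT}, yields $\vdash \bigwedge_i \forall\alpha_i \to \forall(\bigvee_j \forall\beta_j \vee \varphi)$; repeated use of \ref{axUnivVee} pulls each $\forall\beta_j$ out of the scope of the outer $\forall$, giving $\vdash \bigwedge_i \forall\alpha_i \to \forall\varphi \vee \bigvee_j \forall\beta_j$. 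Evaluating this implication at $w$, where each $\forall\alpha_i \in w^+$ while $\forall\varphi$ and each $\forall\beta_j$ lie in $w^-$, contradicts the disjunction property of the prime type $w$. Hence $\Pi^+ \not\vdash \Pi^- \cup \{\varphi\}$ and the desired $v$ exists.

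I expect the negative honesty clause to be the main obstacle, as it is the only step requiring a genuine existence construction (via Lindenbaum) rather than local bookkeeping, and it is where the full strength of the universal-modality axioms \ref{axUnivK}, \ref{axUniv4}, \ref{axUnivT}, \ref{axUnivVee} and the rule \ref{rulUnivNec} is deployed. Finally, to treat $\CMod[\Pi]$ \emph{as} a quasimodel I would note that $|\CMod[\Pi]|$ is open and $S_\CIcon$-invariant---the latter because $S_\CIcon$ preserves the $\forall$-labelling by \ref{axUnivNex}---so that Lemma \ref{LemmIsQuasi}, together with Lemma \ref{LemmOpenSubst}, guarantees $\CMod[\Pi]$ is again a weak quasimodel, on which the honesty just established is meaningful.
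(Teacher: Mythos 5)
Your proof is correct and follows essentially the same route as the paper's: openness via \ref{axUnivEM}, $S_\CIcon$-invariance via \ref{axUnivNex} together with Lemma \ref{LemmIsQuasi}, the positive honesty clause from \ref{axUnivT}, and the negative clause by showing $(\Pi^+, \{\varphi\}\cup\Pi^-)$ is consistent through the same \ref{rulUnivNec}/\ref{axUnivK}/\ref{axUniv4}/\ref{axUnivVee} derivation followed by Lindenbaum. Your explicit ``uniformity'' observation is just the paper's implicit use of $\Phi_\forall = \Pi$, and the one slight miscitation (the collapse $\bigwedge\forall\alpha_i \to \forall\bigwedge\forall\alpha_i$ needs \ref{axUniv4} with \ref{axUnivK} and \ref{rulUnivNec}, not \ref{axUnivT}) is immaterial since you invoke all these axioms anyway.
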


\fullproof{\proof
First we check that $|\CMod[\Pi]|$ is open, i.e. if $\Phi \peq_\CIcon \Psi$ and $\Phi \in |\CMod[\Pi]|$ then $\Psi \in |\CMod[\Pi]|$.
Let $\forall \varphi \in \Pi^+$, so that $\forall \varphi \in \Phi^+$; by monotonicity, $\forall \varphi \in \Pi^+$.
If instead $\forall \varphi \in \Pi^-$, \ref{axUnivEM} yields $\neg \forall \varphi \in \Phi^+$ and hence $\forall \varphi \in \Psi^-$.
We conclude that $\Psi \in  |\CMod[\Pi]|$.
In a similar fashion we see using \ref{axUnivNex} that if $\Phi \in  |\CMod[\Pi]|$ then $S_\CIcon (\Phi) \in |\CMod[\Pi]|$. In view of Lemma \ref{LemIsQuasi}, we may moreover conclude that $\CMod[\Pi]$ is a weak quasimodel.

It remains to show that $\CMod[\Pi]$ is honest. Let us take $\Psi \in |\CMod[\Pi]|$  and let $\forall \varphi \in \Sigma$ be such that $\forall \varphi \in \ell^+(\Psi) $ (which is equal to $ \Psi^+$). By definition, $\Psi \in |\CMod|$ and $\Pi \subseteq_T \Psi_\forall$. Let us take  $\Psi \in |\CMod[\Pi]|$, so $v \in |\CMod|$ and $\Pi \subseteq_T \Psi_\forall$. Consequently, $\forall \varphi \in \Psi^+$. Finally, thanks to Axiom~\ref{axUnivT} we can conclude that $\varphi \in \Psi^+$. Since $\Psi$ is generic, we conclude that $\CMod[\Pi]$  satisfies the first condition of honesty.

For the second condition, let us take $ \Phi \in |\CMod[\Pi]|$  and let $\forall \varphi \in \Sigma$ be such that $\forall \varphi \in \Phi^- $. 
Let $\Psi' = ( \Pi^+ , \lbrace \varphi\rbrace \cup  \Pi^-)$.
Assume towards a contradiction that $\Psi'$ is not consistent, so that $\vdash \bigwedge \Pi^+ \rightarrow \varphi \vee \bigvee \Pi^-  $.
By Rule~\ref{rulUnivNec}, $\vdash \forall \left(\bigwedge \Pi^+ \rightarrow \varphi \vee \bigvee \Pi^- \right)$.
By Axiom~\ref{axUnivK}, $\vdash \forall \bigwedge \Pi^+  \rightarrow \forall\left( \varphi \vee \bigvee \Pi^- \right)$.
From $\Phi_\forall = \Pi$ and axioms~\ref{axUniv4},~\ref{axUnivK} and propositional reasoning we conclude that $\vdash \bigwedge \Pi^+ \to \forall \bigwedge \Pi^+ $, hence $\vdash \bigwedge \Pi^+ \to \forall\left(\varphi \vee \bigvee \Pi^-  \right) $.
By several applications of Axiom~\ref{axUnivVee}, $\vdash \forall \left ( \varphi \vee \bigvee \Pi^-\right)  \to \forall \varphi \vee \bigvee \Pi^-$, hence $\vdash \bigwedge \Pi^+ \to \forall \varphi \vee \bigvee \Pi^-$, which since $\forall \varphi \in \Pi ^- \subseteq \Phi^- $ implies that $\Phi$ is inconsistent, a condtradiction.

Hence $\Psi'$ is consistent. By Lemma~\ref{LemmLind}, $\Psi'$ can be extended to a prime type $
\Psi$.
Since, by definition, $\Pi \subseteq_T \Psi$, we have that $\Psi \in |\CMod[\Pi]|$ and, moreover, $\varphi \in \Psi^-$, as required.
\endproof}

\shortproof{\proof
It can be checked using universal excluded middle \ref{axUnivEM} that $|\CMod[\Pi]|$ is open and using \ref{axUnivNex} that it is $S_\CIcon$-invariant.
In view of Lemma \ref{LemIsQuasi}, we may moreover conclude that $\CMod[\Pi]$ is a weak quasimodel.

It remains to show that $\CMod[\Pi]$ is honest. That $\forall \varphi \in \Phi^+ \cap \Sigma $ implies that $\varphi \in \Psi^+$ for all $\Phi,\Psi \in |\CMod[\Pi]|$ follows readily from the truth axiom \ref{axUnivT}.
For the remaining condition, let us take $ \Phi \in |\CMod[\Pi]|$  and let $\forall \varphi \in \Sigma$ be such that $\forall \varphi \in \Phi^- $. 
Let $\Psi' = ( \Pi^+ , \lbrace \varphi\rbrace \cup  \Pi^-)$.
Assume towards a contradiction that $\Psi'$ is not consistent, so that $\vdash \bigwedge \Pi^+ \rightarrow \varphi \vee \bigvee \Pi^-  $.
By Rule~\ref{rulUnivNec}, $\vdash \forall \left(\bigwedge \Pi^+ \rightarrow \varphi \vee \bigvee \Pi^- \right)$.
By Axiom~\ref{axUnivK}, $\vdash \forall \bigwedge \Pi^+  \rightarrow \forall\left( \varphi \vee \bigvee \Pi^- \right)$.
From $\Phi_\forall = \Pi$ and axioms~\ref{axUniv4},~\ref{axUnivK} and propositional reasoning we conclude that $\vdash \bigwedge \Pi^+ \to \forall \bigwedge \Pi^+ $, hence $\vdash \bigwedge \Pi^+ \to \forall\left(\varphi \vee \bigvee \Pi^-  \right) $.
By several applications of Axiom~\ref{axUnivVee}, $\vdash \forall \left ( \varphi \vee \bigvee \Pi^-\right)  \to \forall \varphi \vee \bigvee \Pi^-$, hence $\vdash \bigwedge \Pi^+ \to \forall \varphi \vee \bigvee \Pi^-$, which since $\forall \varphi \in \Pi ^- \subseteq \Phi^- $ implies that $\Phi$ is inconsistent, a condtradiction.

Hence $\Psi'$ is consistent. By Lemma~\ref{LemmLind}, $\Psi'$ can be extended to a prime type $
\Psi$.
Since, by construction, $\Pi \subseteq_T \Psi$, we have that $\Psi \in |\CMod[\Pi]|$ and, moreover, $\varphi \in \Psi^-$, as required.
\endproof}

This already is sufficient to prove that our logics over $\cl L_{\nx\forall}$ are complete.
The proof of the following is analogous to that of Theorem \ref{theoCircComp}, but using the structures $\CMod [\Pi]$ instead of $\CMod $.

\begin{theorem}\label{theoCircUnivComp}
${\sf ITL}^0_{\nx\forall}$ is complete for the class of expanding posets and $\logpersax_{\nx\forall}$ for the class of persistent posets.
\end{theorem}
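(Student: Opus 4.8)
The plan is to mirror the proof of Theorem \ref{theoCircComp} exactly, but replacing every appeal to the canonical model $\CMod$ with an appeal to the restricted honest submodel $\CMod[\Pi]$ supplied by Lemma \ref{lemUniversalRestrict}. Let $\Lambda$ be either ${\sf ITL}^0_{\nx\forall}$ or $\logpersax_{\nx\forall}$, and suppose $\Lambda \not\vdash \varphi$ for some $\varphi \in \cl L_{\nx\forall}$. Set $\Sigma = {\rm sub}(\varphi)$, which is finite and closed under subformulas, so $\Sigma \Subset \cl L_{\nx\forall}$.

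First I would invoke the Lindenbaum lemma \ref{LemmLind} to obtain a prime type $\Phi \in |\CMod|$ with $\varphi \in \Phi^-$. The new ingredient is to select the correct universal profile: I would let $\Pi = \Phi_\forall = (\Phi^+_\forall, \Phi^-_\forall)$, which is a partition of $\Sigma_\forall$ since $\Phi$ is saturated, hence a universal $\Sigma$-profile. By construction $\Pi \subseteq_T \Phi_\forall$, so $\Phi \in |\CMod[\Pi]|$, and $\varphi$ is still falsified at $\Phi$ within this substructure. By Lemma \ref{lemUniversalRestrict}, $|\CMod[\Pi]|$ is open, $\CMod[\Pi]$ is a (deterministic, weak) quasimodel, and crucially it is \emph{honest}, which is exactly the extra condition needed to interpret $\forall$ correctly.

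Next I would observe that since $\diam$ is absent from the language $\cl L_{\nx\forall}$, the structure $\CMod[\Pi]$ is trivially $\omega$-sensible, so it qualifies as an honest deterministic quasimodel to which Lemma \ref{LemTruth} applies. Applying the truth lemma \ref{LemTruth} to $\CMod[\Pi]$ at $\Phi$ yields $\Phi \notin \val{\varphi}$, so $\varphi$ is falsified on the model obtained from $\CMod[\Pi]$ via the valuation $\val\cdot$. Since $\CMod[\Pi]$ is a poset with the up-set topology and $S_\CIcon$ is a function, this is an expanding poset, establishing completeness of ${\sf ITL}^0_{\nx\forall}$. For $\logpersax_{\nx\forall}$ I would additionally note that under axiom \ref{axFSNext} the relation $S_\CIcon$ is open by Lemma \ref{lemm:rcnext:prop}, and openness is inherited by the restriction to the open, $S_\CIcon$-invariant set $|\CMod[\Pi]|$, so the resulting poset is persistent.

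I do not anticipate a genuine obstacle here, since the statement explicitly says the proof is analogous to Theorem \ref{theoCircComp}; the only real content beyond that theorem is the honesty requirement, and that has already been discharged by Lemma \ref{lemUniversalRestrict}. The one point requiring a little care is verifying that the honesty of $\CMod[\Pi]$ is precisely what licenses using Lemma \ref{LemTruth} over the full language $\cl L_{\nx\forall}$ (the original truth lemma was stated for $\cl L_\diam$, so I would either check its clause for $\forall$ directly or confirm the truth lemma's inductive step for $\forall$ goes through using honesty), and confirming that the restriction to an open, invariant subset preserves openness of $S_\CIcon$ in the persistent case.
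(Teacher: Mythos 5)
Your proposal is correct and follows exactly the route the paper takes: the paper's entire ``proof'' is the remark that one argues as in Theorem~\ref{theoCircComp} but with the honest substructures $\CMod[\Pi]$ in place of $\CMod$, and your write-up fills in precisely those details (choosing $\Pi$ from the $\forall$-formulas of the Lindenbaum type, using Lemma~\ref{lemUniversalRestrict} for openness/invariance/honesty, trivial $\omega$-sensibility in the $\diam$-free language, the truth lemma, and openness of $S_\CIcon$ for the persistent case). Your two flagged points of care --- that the truth lemma's $\forall$ clause needs honesty, and that $\Pi$ should be the restriction of $\Phi_\forall$ to $\Sigma_\forall$ --- are exactly the details the paper leaves implicit, and both go through as you describe.
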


\begin{remark}
We will not go into detail regarding strong completeness in this article, but Theorems \ref{theoCircComp} and \ref{theoCircUnivComp} can be strengthened to state that these logics are strongly complete.
Note that logics with $\diam$ cannot be strongly complete since they are not compact.
\end{remark}

For the language with $\diam$ we will use an honest substructure of $\irr \Sigma$, for which we use the following result of \cite{FernandezITLc}.

\begin{lemma}\label{LemTotSim}
Suppose that $\Sigma\subseteq \Delta \subseteq \landif$ are both closed under subformulas, $\mathfrak X$ is a $\Sigma$-labelled space, $\mathfrak Y$ is a $\Delta$-labelled space, and $\chi\subseteq |\mathfrak X|\times |\mathfrak Y|$
is a total, surjective simulation. Then, if $\ell_\mathfrak Y$ is honest, it follows that $\ell_\mathfrak X$ is honest as well.
\end{lemma}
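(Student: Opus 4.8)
The plan is to verify directly the two clauses of honesty for $\ell_\fr X$ (Definition \ref{frame}), transporting information back and forth along $\chi$. The workhorse will be the following \emph{transfer principle}: for every $\varphi \in \Sigma$ and every pair $x \mathrel\chi y$, we have $\varphi \in \ell^+_\fr X(x)$ iff $\varphi \in \ell^+_\fr Y(y)$, and $\varphi \in \ell^-_\fr X(x)$ iff $\varphi \in \ell^-_\fr Y(y)$. The forward implications are immediate from the simulation condition $\ell_\fr X(x) \subT \ell_\fr Y(y)$. For the converse I would use that the labels of $\fr X$ are saturated (as they are in our application, where $\fr X$ is a substructure of the saturated structure $\irr\Sigma$ of Theorem \ref{thmSurjI}): given $\varphi \in \ell^+_\fr Y(y)$, saturation places $\varphi$ in $\ell^+_\fr X(x)$ or in $\ell^-_\fr X(x)$, and the latter would force $\varphi \in \ell^-_\fr Y(y)$ by the simulation condition, contradicting $\ell^+_\fr Y(y) \cap \ell^-_\fr Y(y) = \varnothing$ (Definition \ref{def:type}.\ref{cond:type:intersection}); the negative case is symmetric. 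Note that $\varphi \in \Sigma$ is guaranteed whenever $\forall\varphi \in \Sigma$, since $\Sigma$ is closed under subformulas.

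With the transfer principle in hand, the first clause of honesty is straightforward. Suppose $\forall\varphi \in \ell^+_\fr X(x)$ and fix an arbitrary $x' \in |\fr X|$. Totality yields $y$ with $x \mathrel\chi y$, so $\forall\varphi \in \ell^+_\fr Y(y)$; honesty of $\ell_\fr Y$ then gives $\varphi \in \ell^+_\fr Y(y')$ for \emph{every} $y' \in |\fr Y|$. Choosing, again by totality, some $y'$ with $x' \mathrel\chi y'$, the transfer principle returns $\varphi \in \ell^+_\fr X(x')$, as required.

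For the second clause, suppose $\forall\varphi \in \ell^-_\fr X(x)$. Totality gives $y$ with $x \mathrel\chi y$, whence $\forall\varphi \in \ell^-_\fr Y(y)$ by the simulation condition, and honesty of $\ell_\fr Y$ supplies some $y'$ with $\varphi \in \ell^-_\fr Y(y')$. Here \emph{surjectivity} of $\chi$ is essential: it provides $x'$ with $x' \mathrel\chi y'$, and the transfer principle then yields $\varphi \in \ell^-_\fr X(x')$, establishing the existential conclusion.

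I expect the main subtlety to lie in the converse (pull-back) direction of the transfer principle rather than in the bookkeeping with $\chi$. The simulation inclusions $\ell_\fr X(x) \subT \ell_\fr Y(y)$ only push membership from $\fr X$ into $\fr Y$; recovering membership in $\fr X$ from membership in $\fr Y$ genuinely relies on saturation of $\ell_\fr X$ together with the disjointness of the two sides of a type, and this is precisely the point at which the two clauses of honesty for $\fr X$ draw, respectively, on totality and on surjectivity of $\chi$.
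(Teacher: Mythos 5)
Your proof is the right argument, and in fact the paper offers no proof of its own to compare against: Lemma \ref{LemTotSim} is imported without proof from \cite{FernandezITLc}, where types are one-sided and hence, under the translation of Remark \ref{RemarkTypes}, automatically saturated. This makes the caveat you flag the heart of the matter rather than a footnote. In the two-sided setting of the present paper, the lemma as literally stated is false: take $\Sigma=\Delta=\{p,\forall p\}$, let $\fr X$ be a single point $x$ with $\ell_{\fr X}(x)=(\{\forall p\};\varnothing)$, let $\fr Y$ be a single point $y$ with $\ell_{\fr Y}(y)=(\{\forall p, p\};\varnothing)$, and let $\chi=\{(x,y)\}$. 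All stated hypotheses hold ($\chi$ is a total, surjective, continuous simulation and $\ell_{\fr Y}$ is honest), yet $\ell_{\fr X}$ is not honest, because the simulation inclusion $\ell_{\fr X}(x)\subT\ell_{\fr Y}(y)$ never forces anything \emph{back into} $\ell_{\fr X}(x)$. So the pull-back direction of your transfer principle cannot be had for free: saturation of $\ell_{\fr X}$ (or some substitute) is a genuinely necessary extra hypothesis, exactly as you suspected, and your proof should be read as a proof of the corrected statement. With that hypothesis in place your argument is complete and correct: the forward direction of transfer is the simulation inclusion, the backward direction is saturation of $\ell_{\fr X}$ plus disjointness of the two sides of a type in $\fr Y$, the first clause of honesty uses totality twice, and the second uses totality, honesty of $\ell_{\fr Y}$, and surjectivity. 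The restriction also costs the paper nothing: in its sole application (Lemma \ref{lemSInv} and Theorem \ref{theoUnivComp}), $\fr X$ is a restriction of $\irr\Sigma$, which is saturated by Theorem \ref{thmSurjI}, and $\fr Y$ is built from prime types, which are saturated by definition.
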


Finally, we observe that $|\cqm \Sigma [\Pi]|$ is a quasimodel.

\begin{lemma}\label{lemSInv}
If $\Sigma\Subset \landif$ and $\Pi$ is a $\Sigma$-universal profile then ${\simu} \cap {| \cqm \Sigma[\Pi]| \times |\CMod [\Pi]|}$ is total and surjective and $\cqm \Sigma[\Pi]$ is a quasimodel.
\end{lemma}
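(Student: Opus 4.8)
The plan is to combine the surjectivity result from Theorem \ref{thmSurjI} with the characterization of possible moments in Lemma \ref{lemPosTotal}, and then to transfer honesty and quasimodel-hood through the restriction to the profile $\Pi$. Write $\fr J = \cqm\Sigma$ and $\fr I = \irr\Sigma$, and let $\chi = {\simu} \cap \big(|\cqm\Sigma[\Pi]| \times |\CMod[\Pi]|\big)$. By Lemma \ref{lemUniversalRestrict}, $\CMod[\Pi]$ is an honest weak quasimodel whose domain is open (hence $S_\CIcon$-invariant via \ref{axUnivNex}), and by Corollary \ref{laststretch} together with Lemma \ref{LemmIsQuasi}, restricting $\cqm\Sigma$ to the open, $S$-invariant set $|\cqm\Sigma[\Pi]|$ yields a quasimodel. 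So the substantive content is showing that $\chi$ is total and surjective and that honesty descends to $\cqm\Sigma[\Pi]$.

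First I would establish surjectivity. Given $\Gamma \in |\CMod[\Pi]|$, Theorem \ref{thmSurjI} provides a moment $w \in |\fr I|$ with $(\fr I, w) \simu (\CMod, \Gamma)$; by Lemma \ref{lemPosTotal} this $w$ is possible, so $w \in |\cqm\Sigma|$. The point to check is that $w$ actually lands in $|\cqm\Sigma[\Pi]|$, i.e.\ that $\Pi \subseteq_T \ell_\forall(w)$. This follows from the simulation inequality $\ell_\fr I(w) \subT \ell_\CMod(\Gamma) = \Gamma$ together with the sensibility clauses \ref{ItCompFive}, \ref{ItCompSix} for $\forall$ and honesty of $\CMod[\Pi]$: since $\Pi \subseteq_T \Gamma_\forall$, the universal formulas decided by $\Gamma$ match $\Pi$, and these agree with $\ell(w)$ on $\Sigma_\forall$ because $\forall$-formulas are forced to propagate. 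For totality, given $w \in |\cqm\Sigma[\Pi]|$, Lemma \ref{lemPosTotal} yields some $\Gamma \in |\CMod|$ with $(\fr I, w) \simu (\CMod, \Gamma)$, and the same matching argument shows $\Pi \subseteq_T \Gamma_\forall$, so $\Gamma \in |\CMod[\Pi]|$ and the witnessing pair lies in $\chi$.

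Next, honesty of $\cqm\Sigma[\Pi]$ follows from Lemma \ref{LemTotSim}: we have just shown $\chi$ is a total, surjective simulation from $\cqm\Sigma[\Pi]$ (a $\Sigma$-labelled space) to $\CMod[\Pi]$ (a $\Sigma$-labelled space, honest by Lemma \ref{lemUniversalRestrict}), so honesty of the target transfers to the source. Finally, since $|\cqm\Sigma[\Pi]|$ is an open, $S_{\cqm\Sigma}$-invariant subset of the quasimodel $\cqm\Sigma$, Lemma \ref{LemmIsQuasi} gives that $\cqm\Sigma[\Pi]$ is a quasimodel; combined with honesty it is in fact a well $\Sigma$-labelled system. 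The main obstacle I anticipate is the bookkeeping in the surjectivity/totality step, namely verifying that the $\forall$-components genuinely match so that the witnessing moment respects the profile constraint $\Pi \subseteq_T \ell_\forall$; the openness and invariance needed for Lemma \ref{LemmIsQuasi} are comparatively routine, resting on \ref{axUnivEM} and \ref{axUnivNex} exactly as in Lemma \ref{lemUniversalRestrict}.
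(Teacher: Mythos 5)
Your proof follows the same skeleton as the paper's---totality and surjectivity from Lemma \ref{lemPosTotal} and Theorem \ref{thmSurjI} together with label preservation, then quasimodel-hood via Lemma \ref{LemmIsQuasi}---but there is a genuine gap where you declare the openness and $S$-invariance of $|\cqm\Sigma[\Pi]|$ to be ``comparatively routine, resting on \ref{axUnivEM} and \ref{axUnivNex} exactly as in Lemma \ref{lemUniversalRestrict}.'' That argument does not transfer. In Lemma \ref{lemUniversalRestrict} the points are \emph{prime types}: deductively closed, consistent theories in the full language, so from $\forall\varphi\in\Phi^-$ one can infer $\neg\forall\varphi\in\Phi^+$ by \ref{axUnivEM} and let that formula persist upward. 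The points of $\cqm\Sigma[\Pi]$ are moments of $\irr\Sigma$, whose labels are finite saturated $\Sigma$-types; they are not closed under deduction, and $\neg\forall\varphi$ need not even belong to $\Sigma$, so there is no formula to persist. Nor does monotonicity of labels save you: $w\peq v$ only yields $\ell(w)\peqT\ell(v)$, whose negative half is $\ell^-(v)\subseteq\ell^-(w)$---the wrong direction for propagating $\Pi^-\subseteq\ell^-_\forall(w)$ up to $v$.

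The paper closes exactly this gap by using the totality established in the first step, so the order of the argument matters: given $w\in|\cqm\Sigma[\Pi]|$ and $v\seq w$, pick $\Gamma\in|\CMod[\Pi]|$ with $w\simu\Gamma$; by continuity (forward confluence) of $\simu$ there is $\Delta\seq_\CIcon\Gamma$ with $v\simu\Delta$; by Lemma \ref{lemUniversalRestrict} (where \ref{axUnivEM} is legitimately applied, to prime types) $\Delta\in|\CMod[\Pi]|$; and then $\Pi\subseteq_T\ell_\forall(v)$ follows from $\ell(v)\subseteq_T\Delta$ together with saturation of $\ell(v)$ and disjointness of $\Delta^+$ and $\Delta^-$. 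The same saturation point is hidden in your totality/surjectivity step: from $\ell(w)\subseteq_T\Gamma$ and $\Pi\subseteq_T\Gamma_\forall$ one can conclude $\Pi\subseteq_T\ell_\forall(w)$ only because $\ell(w)$ is saturated---honesty of $\CMod[\Pi]$ is not the relevant tool there. Finally, $S$-invariance needs no axiom at all: it is immediate from clauses \ref{ItCompFive} and \ref{ItCompSix} of Definition \ref{compatible} and the sensibility of $S_{\cqm\Sigma}$ (you cite these clauses, but in the wrong step). Your use of Lemma \ref{LemTotSim} to transfer honesty is fine---the paper defers that transfer to Theorem \ref{theoUnivComp}---but it does not repair the openness argument.
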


\proof
Let $\cqm{} = \cqm\Sigma$.
If $ w \in |\cqm {}[\Pi]|$, by Lemma \ref{lemPosTotal} there is $\Gamma \in |\CMod|$ with $(\cqm {}, w ) \simu (\CMod,\Gamma)$, and by label-preservation $\Pi \subseteq_T \Gamma$, so that $\Gamma \in |\CMod[\Pi]|$.
Hence $\simu$ is total.
Conversely, if $\Gamma \in | \CMod[\Pi]|$ then by Theorem \ref{thmSurjI} there is $w\in |\cqm {} [\Pi]|$ such that $(\cqm {}, w ) \simu (\CMod,\Gamma)$, and once again by label-preservation $w \in |\cqm{} [\Pi]|$.

To see that $\cqm{} [\Pi]$ is a quasimodel, in view of Lemma \ref{LemmIsQuasi}, it suffices to show that $|\cqm {}[\Pi]|$ is open and $S_{\cqm{}}$-invariant.
However, if $ w \in |\cqm {}[\Pi]|$, then since $ \simu$ is total we have that there is $\Gamma \in |\CMod[\Pi]|$ with $(\cqm {}, w ) \simu (\CMod,\Gamma)$.
By Lemma \ref{lemUniversalRestrict} $|\CMod[\Pi]|$ is open, hence by continuity $v\in {\simu^{-1}}(|\CMod[\Pi]|)$, and once again by label-preservation this shows that $\Pi\subseteq_T \ell(v) $, so that $v\in |\cqm {}[\Pi]|$.
That $|\cqm {}[\Pi]|$ is $S_{\cqm {}}$-invariant is immediate from conditions \ref{ItCompFive} and \ref{ItCompSix} of Definition \ref{compatible} and the fact that $S_{\cqm {}}$ is sensible.
\endproof

\begin{theorem}\label{theoUnivComp}
The logic ${\sf ITL}^0_{\ps\forall}$ is complete for the class of dynamical systems.
\end{theorem}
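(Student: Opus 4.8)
The plan is to mirror the structure of the completeness proof for ${\logbasic}$ (Theorem \ref{theocomp}), but now working inside a single universal profile so that the honesty condition required to interpret $\forall$ is met. Concretely, I would prove the contrapositive: suppose $\varphi \in \mathcal L_{\ps\forall}$ is not provable in ${\sf ITL}^0_{\ps\forall}$, and produce a dynamical system falsifying it. First I would fix $\Sigma = {\rm sub}(\varphi)$ and form the initial quasimodel $\cqm\Sigma$ together with the canonical model $\CMod$ for $\Lambda = {\sf ITL}^0_{\ps\forall}$. As in Theorem \ref{theocomp}, setting $W = \{ w \in |\irr\Sigma| : \varphi \in \ell^-(w)\}$, Proposition \ref{propsub}.\ref{itPropsubFour} gives $\vdash \bigwedge_{w\in W}\Sim w \to \varphi$, so since $\varphi$ is unprovable at least one $w^\ast \in W$ is possible, i.e.\ $w^\ast \in |\cqm\Sigma|$ and $\varphi \in \ell^-(w^\ast)$.

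The new ingredient is that $\cqm\Sigma$ need not be honest, so I cannot apply Lemma \ref{LemTruth} to it directly. To fix this I would restrict to a suitable universal profile. By Lemma \ref{lemPosTotal} there is $\Gamma \in |\CMod|$ with $(\cqm\Sigma, w^\ast)\simu(\CMod,\Gamma)$; take $\Pi \eqdef \Gamma_\forall$, the universal $\Sigma$-profile determined by $\Gamma$. Since the simulation preserves labels and in particular the $\forall$-formulas, we have $\Pi \subseteq_T \ell(w^\ast)$, so that $w^\ast \in |\cqm\Sigma[\Pi]|$, and $\varphi$ is still falsified there. By Lemma \ref{lemSInv}, $\cqm\Sigma[\Pi]$ is a quasimodel, and the simulation ${\simu}\cap(|\cqm\Sigma[\Pi]|\times|\CMod[\Pi]|)$ is total and surjective onto the honest structure $\CMod[\Pi]$ (honest by Lemma \ref{lemUniversalRestrict}).

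With these pieces in place, honesty of $\cqm\Sigma[\Pi]$ follows from Lemma \ref{LemTotSim}: a total surjective simulation onto an honest labelled space forces honesty of the source. Thus $\cqm\Sigma[\Pi]$ is an honest quasimodel in which $\varphi \in \ell^-(w^\ast)$, so $\varphi$ is falsifiable on a saturated finite ${\rm sub}(\varphi)$-quasimodel. Finally, by Theorem \ref{TheoITLc} (valid for two-sided types in view of Remark \ref{RemarkTypes}) $\varphi$ is falsifiable over the class of dynamical systems, completing the contrapositive and hence the proof.

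I expect the main obstacle to be the honesty bookkeeping: one must be careful that the profile $\Pi$ chosen from the witness $\Gamma$ is genuinely compatible with $w^\ast$ (so that $w^\ast$ survives the restriction to $\cqm\Sigma[\Pi]$) and simultaneously that $\CMod[\Pi]$ is nonempty, open, and $S_\CIcon$-invariant so that it really is an honest quasimodel. All of this is packaged by Lemmas \ref{lemUniversalRestrict}, \ref{LemTotSim} and \ref{lemSInv}, so the remaining work is to assemble them in the correct order and to observe that falsification of $\varphi$ is inherited along the restriction from $\cqm\Sigma$ to $\cqm\Sigma[\Pi]$.
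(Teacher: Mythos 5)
Your proposal is correct and takes essentially the same route as the paper: both proofs restrict to the universal profile determined by the falsifying possible moment, invoke Lemmas \ref{lemUniversalRestrict}, \ref{lemSInv} and \ref{LemTotSim} to conclude that $\cqm\Sigma[\Pi]$ is an honest quasimodel falsifying $\varphi$, and finish with Theorem \ref{TheoITLc}. The only (immaterial) difference is the bootstrapping step: the paper applies the Lindenbaum lemma to get $\Phi\in|\CMod|$ with $\varphi\in\Phi^-$ and then uses surjectivity of $\simu$ (Theorem \ref{thmSurjI}) to obtain $w$ with $\Pi=(\ell(w))_\forall$, whereas you reuse the argument of Theorem \ref{theocomp} (Proposition \ref{propsub}.\ref{itPropsubFour}) plus Lemma \ref{lemPosTotal} and take $\Pi$ to be the $\Sigma$-restriction of $\Gamma_\forall$ --- which, since $\ell(w^\ast)$ is saturated and the simulation preserves labels, is exactly the profile $(\ell(w^\ast))_\forall$ used by the paper.
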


\proof
If $\varphi$ is unprovable, then $\varphi \in \Phi^-$ for some $\Phi \in |\CMod|$, and by Theorem \ref{thmSurjI}, there is some $w\in |\cqm \Sigma|$ such that $(\cqm \Sigma, w) \rightharpoonup (\CMod, \Phi)$. Let $\Pi = (\ell(w))_\forall$. Then, $\CMod[\Pi]$ is an honest weak quasimodel, so that by Lemmas \ref{LemTotSim} and \ref{lemSInv}, so is $\cqm \Sigma[\Pi]$.
It follows by Theorem \ref{TheoITLc} that $\varphi$ is falsifiable on some dynamic topological model.
\endproof

\section{Completeness for expanding posets}\label{SecCons}

Our goal for this section is to show that the temporal logics of dynamic posets and of dynamical systems coincide with respect to $\landi$.
We will show this by `unwinding' a quasimodel to produce a dynamical poset. First we discuss some operations on types that will be used in the unwinding.
If $\Sigma$ is a set of formulas, first define $\posres\Psi \Sigma = (\Psi^+\cap \Sigma , \Psi^-)$, and ${\rm sub} (\Sigma) = \bigcup _{\varphi \in \Sigma}{\rm sub}(\varphi)$. With this, we have the following:

\begin{lemma}\label{LemmRestrict}
Let $\Phi,\Psi,\Gamma,\Theta$ be $\landi$-types and $\Sigma \subseteq \landi$ closed under subformulas. Then,
\begin{enumerate}

\item\label{cond:alsotype} $\posres\Phi \Sigma$ is also a type;

\item\label{ItCrossTrans} if $\Gamma \sqsubT \Phi \peqT \Psi$ or $\Gamma \peqT \Phi \sqsubT \Psi$ then $\Gamma \peqT \Psi$,\david{Maybe define $\sqsubT$ in this section, as it is not used before.} and


\item\label{ItRestrictSqsub} if $\Gamma \sqsubT \Phi \ST \Psi$ and ${\rm sub} ( \Gamma^+ ) \subseteq \Sigma$, then $\Gamma \ST \posres \Psi \Sigma$.


\end{enumerate}
\end{lemma}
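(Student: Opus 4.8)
The three parts are essentially independent, and my plan is to prove each by directly unwinding the relevant definition. Part~\ref{cond:alsotype} asks that $\posres\Phi\Sigma=(\Phi^+\cap\Sigma,\Phi^-)$ again satisfy the nine clauses of Definition~\ref{def:type}. Since the negative component is left untouched, the clauses constraining only $\Phi^-$---namely \ref{cond:type:negconj}, \ref{cond:type:negdisj}, \ref{cond:type:implication:neg} and \ref{cond:type:diam}---hold verbatim, as do disjointness and the $\bot$-clause. For the positive clauses \ref{cond:type:posconj}, \ref{cond:type:posdisj} and \ref{cond:type:implication}, the key observation is that if a compound formula lies in $\Phi^+\cap\Sigma$, then the subformulas forced into $\Phi^+$ by $\Phi$ being a type also lie in $\Sigma$, because $\Sigma$ is closed under subformulas; in the implication clause the antecedent may instead be sent to $\Phi^-$, which is unchanged. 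I expect no difficulty here.

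For part~\ref{ItCrossTrans} I would first record that $\sqsubT$ refines $\peqT$: if $\Phi\sqsubT\Psi$ then $\Phi^-=\Psi^-$ yields $\Psi^-\subseteq\Phi^-$, and $\Phi^+\subseteq\Psi^+$ is given, so $\Phi\peqT\Psi$. Both hypotheses then collapse to chains $\Gamma\peqT\Phi\peqT\Psi$, and since $\peqT$ is a partial order the conclusion $\Gamma\peqT\Psi$ is just transitivity; explicitly, $\Psi^-\subseteq\Phi^-\subseteq\Gamma^-$ and $\Gamma^+\subseteq\Phi^+\subseteq\Psi^+$.

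Part~\ref{ItRestrictSqsub} carries the real content: I must check that $(\Gamma,\posres\Psi\Sigma)$ is sensible, i.e. the six clauses of Definition~\ref{compatible}, of which \ref{ItCompFive}--\ref{ItCompSix} are vacuous as $\forall$ does not occur in $\landi$. Using $\Gamma\sqsubT\Phi$ (so $\Gamma^-=\Phi^-$ and $\Gamma^+\subseteq\Phi^+$), the $\tnext$-clauses and the negative $\diam$-clause transport directly from $\Phi\ST\Psi$; for the positive transfers I additionally note that any $\varphi$ with $\tnext\varphi\in\Gamma^+$ satisfies $\varphi\in{\rm sub}(\Gamma^+)\subseteq\Sigma$, so the image lands in $\Psi^+\cap\Sigma=(\posres\Psi\Sigma)^+$, while the negative clauses need no such check since $(\posres\Psi\Sigma)^-=\Psi^-$.

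The one delicate clause is \ref{ItCompThree}, for $\diam$ on the positive side, and I expect this to be the main obstacle. Given $\diam\varphi\in\Gamma^+$ I have $\diam\varphi\in\Phi^+$ and, by ${\rm sub}(\Gamma^+)\subseteq\Sigma$, both $\varphi,\diam\varphi\in\Sigma$; clause~\ref{ItCompThree} for $\Phi\ST\Psi$ then gives $\varphi\in\Phi^+$ or $\diam\varphi\in\Psi^+$. The second disjunct immediately yields $\diam\varphi\in\Psi^+\cap\Sigma$, as wanted. The first disjunct only gives $\varphi\in\Phi^+$, and to upgrade this to $\varphi\in\Gamma^+$ I would invoke saturation of $\Gamma$ on $\Sigma$: since $\varphi\in\Sigma$ and $\varphi\notin\Phi^-=\Gamma^-$, saturation forces $\varphi\in\Gamma^+$. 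This step is essential---without it one could keep $\diam\varphi$ while discarding the witnessing $\varphi$ and so destroy sensibility---so I would make sure that $\Gamma$ is saturated relative to $\Sigma$ (as is automatic for the types arising in the unwinding, which come from saturated quasimodels) before concluding.
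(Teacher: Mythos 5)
Your parts \ref{cond:alsotype} and \ref{ItCrossTrans} are correct and essentially identical to the paper's argument; the only cosmetic difference is that you first observe that $\sqsubT$ refines $\peqT$ and then invoke transitivity, where the paper unwinds the two chains of inclusions by hand.

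On part \ref{ItRestrictSqsub}, your instincts are sharper than the paper's own proof. The paper handles clause \ref{ItCompThree} of Definition \ref{compatible} by asserting that $\Gamma \sqsubT \Phi \ST \Psi$ yields ``either $\psi \in \Gamma^+$ or $\diam\psi \in \Psi^+$''; but sensibility of $(\Phi,\Psi)$ only gives $\psi \in \Phi^+$ or $\diam\psi \in \Psi^+$, and, exactly as you observe, $\Gamma \sqsubT \Phi$ provides no way to pull a witness $\psi$ from $\Phi^+$ down into $\Gamma^+$. This is not a repairable slip within the stated hypotheses: the claim as written is false. Take $\Sigma = \{p, \diam p\}$, $\Gamma = (\{\diam p\};\varnothing)$, $\Phi = (\{p,\diam p\};\varnothing)$ and $\Psi = (\varnothing;\varnothing)$. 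Then $\Gamma \sqsubT \Phi$, $\Phi \ST \Psi$ (clause \ref{ItCompThree} for $(\Phi,\Psi)$ is witnessed by $p \in \Phi^+$), and ${\rm sub}(\Gamma^+) \subseteq \Sigma$, yet $\Gamma \ST \posres\Psi\Sigma$ fails: $\diam p \in \Gamma^+$ while $p \notin \Gamma^+$ and $(\posres\Psi\Sigma)^+ = \varnothing$. So the extra hypothesis you flag---that every formula of $\Sigma$ is decided by $\Gamma$---is genuinely needed, and under it your derivation ($\psi \in \Sigma$ and $\psi \notin \Phi^- = \Gamma^-$ force $\psi \in \Gamma^+$) closes the case correctly. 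Your remark that this hypothesis is harmless in practice is also right, but it deserves one more check than you give it: the types appearing in Section \ref{SecCons} are produced from saturated labels $\ell(w)$ by the operations $\posres{\cdot}{\Sigma}$ and $\remove{\cdot}{\varphi}$, and one must verify that both preserve the invariant ${\rm sub}(\Gamma^+) \subseteq \Gamma^+ \cup \Gamma^-$ (they do); moreover, since Definition \ref{def:term-path} allows arbitrary $\Phi_i \sqsubT \ell(w_i)$, this invariant ought to be added to the definition of typed path, or to the hypotheses of the present lemma, for Lemmas \ref{LemProperPath} and \ref{LemTerminal} to go through as stated.
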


\shortproof{
\proof
	
To prove item \ref{cond:alsotype} it is sufficient to check that the conditions of Definition~\ref{def:type} hold. 
Conditions \ref{cond:type:intersection} and~\ref{cond:type:bot} of Definition~\ref{def:type} are straightforward.
Since $\Phi^- = (\posres\Phi \Sigma)^-$, conditions \ref{cond:type:negconj} and~\ref{cond:type:negdisj} clearly hold.
For condition~\ref{cond:type:implication}, suppose that $\varphi\to\psi\in  (\posres\Psi \Sigma)^+ $.
Since $\Sigma$ is closed under subformulas, $\varphi, \psi \in \Sigma$ and, since $\Psi$ is a type it follows that either $\varphi \in \Psi^-$ or $\psi \in \Psi^+$.
By definition either $\varphi \in \Psi^-$ or $\psi \in \Psi^+ \cap \Sigma$.
The proofs for conditions \ref{cond:type:posconj} and~\ref{cond:type:posdisj} of Definition~\ref{def:type} are similar and left to the reader. 

Regarding item~\ref{ItCrossTrans} of the lemma, on one side, $\Gamma \sqsubT \Phi \peqT \Psi$ means that $\Gamma^+ \subseteq \Phi^+ \subseteq \Psi^+$ and $\Gamma^- = \Phi^- \supseteq \Psi^-$. Therefore $\Gamma^+ \subseteq \Psi^+$ and $\Psi^- \subseteq \Gamma^-$ so $\Gamma \peqT \Psi$. On the other side  $\Gamma \peqT \Phi \sqsubT \Psi$ means by definition that $\Gamma^+ \subseteq \Psi^+ \subseteq \Psi^+$ and $\Gamma^- \supseteq \Phi^- = \Psi^-$. It follows that $\Gamma^+ \subseteq \Psi^+$ and $\Psi^+ \subseteq \Gamma^-$ so $\Gamma \peqT \Psi$.

For item~\ref{ItRestrictSqsub} we must check that each condition of Definition~\ref{compatible} holds. As an example, we work out \ref{ItCompThree}. If $\diam \psi \in \Gamma^+$, since ${\rm sub}(\Gamma^+) \subseteq \Sigma$ then $\diam \psi, \psi \in \Sigma$. From $\Gamma \sqsubT  \Phi\ST\Psi$ we conclude that $\diam \psi \in \Phi^+$ and either $\psi \in \Gamma^+$ or $\diam \psi \in \Psi^+$. From this it follows that either $\psi \in \Gamma^+$ or  $\diam \psi \in \Psi^+ \cap \Sigma$ (which means that $\diam \psi \in \left(\posres \Psi \Sigma\right)^+$).
Other conditions follow similar reasoning and are left to the reader.
\endproof
}

\fullproof{
\proof
	
To prove item \ref{cond:alsotype} it is sufficient to check that the conditions of Definition~\ref{def:type} hold. 
Conditions \ref{cond:type:intersection} and~\ref{cond:type:bot} of Definition~\ref{def:type} are straightforward.
Since $\Phi^- = (\posres\Phi \Sigma)^-$, conditions \ref{cond:type:negconj} and~\ref{cond:type:negdisj} clearly hold.
For condition~\ref{cond:type:implication}, suppose that $\varphi\to\psi\in  (\posres\Psi \Sigma)^+ $.
Since $\Sigma$ is closed under subformulas, $\varphi, \psi \in \Sigma$ and, since $\Psi$ is a type it follows that either $\varphi \in \Psi^-$ or $\psi \in \Psi^+$.
By definition either $\varphi \in \Psi^-$ or $\psi \in \Psi^+ \cap \Sigma$.
The proofs for conditions \ref{cond:type:posconj} and~\ref{cond:type:posdisj} of Definition~\ref{def:type} are similar and left to the reader. 

Regarding item~\ref{ItCrossTrans} of the lemma, on one side, $\Gamma \sqsubT \Phi \peqT \Psi$ means that $\Gamma^+ \subseteq \Phi^+ \subseteq \Psi^+$ and $\Gamma^- = \Phi^- \supseteq \Psi^-$. Therefore $\Gamma^+ \subseteq \Psi^+$ and $\Psi^- \subseteq \Gamma^-$ so $\Gamma \peqT \Psi$. On the other side  $\Gamma \peqT \Phi \sqsubT \Psi$ means by definition that $\Gamma^+ \subseteq \Psi^+ \subseteq \Psi^+$ and $\Gamma^- \supseteq \Phi^- = \Psi^-$. It follows that $\Gamma^+ \subseteq \Psi^+$ and $\Psi^+ \subseteq \Gamma^-$ so $\Gamma \peqT \Psi$.

For item~\ref{ItRestrictSqsub} we consider the conditions of Definition~\ref{compatible}:\medskip

\noindent \ref{ItCompOne}. If $\tnext \psi \in \Gamma^+$, from $\Gamma \sqsubT  \Phi\ST\Psi$ we conclude that $\tnext \psi \in \Phi^+$ and $\psi \in \Psi^+$. Since ${\rm sub}(\Gamma^+) \subseteq \Sigma$ then $\psi \in \Sigma$. Therefore $\psi \in \Psi^+ \cap \Sigma$ so $\psi \in \left(\posres\Psi \Sigma\right)^+$.\medskip

\noindent \ref{ItCompTwo}. If $\tnext \psi \in \Gamma^-$, from $\Gamma \sqsubT  \Phi\ST\Psi$ we conclude that $\tnext \psi \in \Phi^-$ and $\psi \in \Psi^-$, which by definition means that $\psi \in \left(\posres\Psi \Sigma  \right)^-$.
\medskip

\noindent \ref{ItCompThree}. If $\diam \psi \in \Gamma^+$, since ${\rm sub}(\Gamma^+) \subseteq \Sigma$ then $\diam \psi, \psi \in \Sigma$. From $\Gamma \sqsubT  \Phi\ST\Psi$ we conclude that $\diam \psi \in \Phi^+$ and either $\psi \in \Gamma^+$ or $\diam \psi \in \Psi^+$. From this it follows that either $\psi \in \Gamma^+$ or  $\diam \psi \in \Psi^+ \cap \Sigma$ (which means that $\diam \psi \in \left(\posres \Psi \Sigma\right)^+$).
\medskip

\noindent \ref{ItCompFour}. If $\diam \psi \in \Gamma^-$, from $\Gamma \sqsubT  \Phi\ST\Psi$ we conclude that $\diam \psi \in \Phi^-$, $\psi \in \Phi^-$ (thus $\psi \in \Gamma^-$) and $\diam \psi \in \Psi^-$. As a consequence it follows that $\psi \in \Gamma^-$ and $\diam \psi \in \left(\posres\Psi \Sigma\right)^-$.
\endproof
}

We may also wish to `forget' temporal formulas that have been realized.
To make this precise, let ${\rm sup}(\varphi)$ denote the set of {\em super}-formulas of $\varphi$, i.e., ${\rm sup}(\varphi)=\{\psi \in \landi : \varphi \in {\rm sub}(\psi)\}$.
Say that a formula $\varphi$ is a {\em temporal formula} if it is of the forms $\tnext \psi$ or $\diam \psi$, and if $\Phi$ is a set of formulas, say that $\varphi \in \Phi$ is {\em maximal in $\Phi$} if it does not have any temporal superformulas in $\Phi$ (except $\varphi$).
Then, define $\remove\Phi\varphi =(\Phi^+\setminus {\rm sup}(\varphi),\Phi^-)$.

\begin{lemma}\label{LemDelete}
	Suppose that $\Phi \ST \Psi$.
	\begin{enumerate*}[label=\arabic*)]		
		\item\label{ItDeleteCirc}  If $\tnext\varphi$ is maximal in $\Phi^+$, then $\Phi \ST ( \remove \Psi {\tnext\varphi})$ .		
		\item\label{ItDeleteDiam}  If $\diam \varphi $ is maximal in $\Phi^+$ and $\varphi \in \Phi^+$, then $\Phi \ST ( \remove \Psi {\diam \varphi})$.		
	\end{enumerate*}
\end{lemma}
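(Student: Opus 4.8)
The plan is to verify directly that the pair $(\Phi,\remove\Psi{\tnext\varphi})$ (respectively $(\Phi,\remove\Psi{\diam\varphi})$) satisfies every clause of Definition \ref{compatible}. The decisive observation is that the deletion operation only shrinks the positive component: writing $\Psi'$ for $\remove\Psi{\tnext\varphi}$ we have $(\Psi')^- = \Psi^-$, so clauses \ref{ItCompTwo} and \ref{ItCompFour}, which constrain the negative side of the successor, are inherited verbatim from the hypothesis $\Phi \ST \Psi$. (Clauses \ref{ItCompFive} and \ref{ItCompSix} are vacuous, as $\forall$ does not occur in $\landi$.) Hence only the positive clauses \ref{ItCompOne} and \ref{ItCompThree} require work, and for these the question is whether the formula witnessing the clause in $\Psi$ survives the passage to $(\Psi')^+ = \Psi^+ \setminus {\rm sup}(\tnext\varphi)$.

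For part \ref{ItDeleteCirc} I would reason as follows. Suppose $\tnext\psi \in \Phi^+$; by \ref{ItCompOne} for $\Phi \ST \Psi$ we have $\psi \in \Psi^+$, and it remains to see that $\psi \notin {\rm sup}(\tnext\varphi)$. If it were, then $\tnext\varphi \in {\rm sub}(\psi) \subseteq {\rm sub}(\tnext\psi)$, so $\tnext\psi$ would be a temporal superformula of $\tnext\varphi$ lying in $\Phi^+$; and $\tnext\psi \neq \tnext\varphi$, since $\tnext\psi = \tnext\varphi$ would give $\psi = \varphi$ and hence the impossible $\tnext\varphi \in {\rm sub}(\varphi)$. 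This contradicts the maximality of $\tnext\varphi$ in $\Phi^+$. The same shape of argument settles \ref{ItCompThree}: if $\diam\psi \in \Phi^+$, then either $\psi \in \Phi^+$ (and we are done), or $\diam\psi \in \Psi^+$, in which case $\diam\psi \in {\rm sup}(\tnext\varphi)$ would again make $\diam\psi$ a temporal superformula of $\tnext\varphi$ in $\Phi^+$ distinct from it (distinctness being automatic, as one formula begins with $\diam$ and the other with $\tnext$), contradicting maximality.

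Part \ref{ItDeleteDiam} follows the identical template with $\diam\varphi$ replacing $\tnext\varphi$, the only new point being the verification of \ref{ItCompThree} for the deleted formula itself. Here is where the extra hypothesis $\varphi \in \Phi^+$ enters: when $\diam\psi \in \Phi^+$, $\diam\psi \in \Psi^+$, and $\diam\varphi \in {\rm sub}(\diam\psi)$, the case $\diam\psi = \diam\varphi$ yields $\psi = \varphi$, so the disjunct $\psi \in \Phi^+$ of \ref{ItCompThree} is supplied precisely by the assumption $\varphi \in \Phi^+$, while the case $\diam\psi \neq \diam\varphi$ contradicts maximality as before. I expect this last edge case to be the only genuine subtlety. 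In contrast to the $\tnext$ case, where the content $\varphi$ of a realized $\tnext\varphi$ necessarily migrates into $\Psi^+$, a $\diam\varphi$ may legitimately persist into the successor unless its content $\varphi$ is already present; this is exactly what the hypothesis $\varphi \in \Phi^+$ guarantees, and why it is indispensable in part \ref{ItDeleteDiam} but not in part \ref{ItDeleteCirc}.
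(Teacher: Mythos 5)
Your proof is correct and follows essentially the same route as the paper's: a direct verification of the clauses of Definition \ref{compatible}, using that deletion leaves the negative component untouched (so clauses \ref{ItCompTwo} and \ref{ItCompFour} are inherited) and that maximality of the deleted formula keeps the positive witnesses out of ${\rm sup}(\tnext\varphi)$ (resp.\ ${\rm sup}(\diam\varphi)$). Your write-up is in fact more complete than the paper's sketch, which treats only part \ref{ItDeleteCirc} and declares part \ref{ItDeleteDiam} ``analogous''; your explicit handling of the edge case $\diam\psi = \diam\varphi$, where the hypothesis $\varphi \in \Phi^+$ supplies the first disjunct of clause \ref{ItCompThree}, fills in precisely the subtlety the paper leaves to the reader.
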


\begin{proof}[Proof sketch]
We consider the first item; the second is analogous. Assuming that $\tnext\varphi$ is maximal in $\Phi^+$,
it must be checked that the four conditions of Definition~\ref{compatible} hold.
For conditions \eqref{ItCompOne} and~\eqref{ItCompThree}, remark that ${\rm sup}(\tnext \varphi) \cap \Phi^+ = \{\tnext\varphi\}$.
Therefore if $\tnext \theta$ or $\diam \theta$ belong to $\Phi^+$, then neither $\theta$ nor $\diam \theta$ belong to ${\rm sup}(\tnext \varphi)$.
For conditions \eqref{ItCompTwo} and~\eqref{ItCompFour}, it suffices to observe that $(\Psi \setminus \tnext\varphi)^- = \Psi^-$.
\ignore{
let us check that the four conditions of Definition~\ref{compatible} hold.
\medskip

\noindent\eqref{ItCompOne} If $\tnext \theta \in \Phi^+$, since $\Phi \ST \Psi$ then $\theta \in \Psi^+$. Moreover, since $\tnext \theta \in \Phi^+$ then $\theta \not \in {\rm sup}(\tnext \varphi)$ by maximality of $\tnext\varphi$. Therefore, $\theta \in \left(\Psi^+\setminus {\rm sup}(\tnext \varphi)\right) =\left(\remove \Psi {\tnext\varphi} \right)^+$.\medskip

\noindent\eqref{ItCompTwo} If $\tnext \theta \in \Phi^-$, since $\Phi \ST \Psi$ then $\theta \in \Psi^- = \left(\remove \Psi {\tnext\varphi}\right)^-$.\medskip

\noindent\eqref{ItCompThree} If $\diam \theta \in \Phi^+$, since $\Phi \ST \Psi$ then $\theta \in \Phi^+$ or $\diam \theta \in \Psi^+$.
In the latter case, since $\tnext\varphi$ is maximal in $\Phi^+$ and $\diam \theta \in \Phi^+$, it follows that $\diam \theta \not \in {\rm sup}(\tnext \varphi)$.
Therefore, $\diam \theta \in =\left(\remove \Psi {\tnext\varphi} \right)^+$.
\medskip

\noindent\eqref{ItCompFour} If $\diam \theta \in \Phi^-$, since $\Phi \ST \Psi$ then $ \diam \theta \in \left(\remove \Psi {\tnext\varphi}\right)^-$.
}
\end{proof}

The unwinding procedure is similar to that in \cite{FernandezITLc}. There, the points of the `limit model' obtained from a quasimodel are the infinite paths satisfying all $\diam$-formulas in their labels. However, to obtain a poset rather than a topological space, we will instead work with finite paths.

\begin{definition}\label{def:term-path}
If $\fr Q$ is an $\landi$-quasimodel, a {\em path (on $\fr Q$)} is a sequence $(w_i)_{i<n} \subseteq |\fr Q|$ such that $w_i \mathrel S w_{i+1}$ for all $i<n-1$. We define a {\em typed path (on $\fr Q$)} to be a sequence $((w_i,\Phi_i))_{i < n}$ such that $(w_i)_{i<n}$ is a path, for all $i< n$, $\Phi_i\sqsubT \ell(w_i)$, and for all $i < n - 1$, $\Phi_i \ST \Phi_{i+1}$.
	
We say that $((w_i,\Phi_i))_{i < n}$ is {\em properly typed} if ${\rm sub}(\Phi^+_{i + 1}) \subseteq {\rm sub}(\Phi^+_{i})$ for all $i<n-1$, and {\em terminal} if $\Phi^+_{n-1} = \varnothing$.
\end{definition}

Note that we allow $\Phi_i\sqsubT \ell(w_i)$ and not only $\Phi_i = \ell(w_i)$. This will allow us to use finite paths, as temporal formulas can be `forgotten' once they have been realized.
\begin{definition}

We define the {\em weak limit model $\widehat{\fr Q}$} of $\fr Q$ as follows:
\begin{itemize}

\item Define $|\widehat {\fr Q}|$ to be the set of terminal typed paths on $\fr Q$ together with the empty path, which we denote $\epsilon$.

\item For $\alpha = ((w_i,\Phi_i))_{i<n}$, $\beta= ((v_i,\Psi_i))_{i<m} \in |\widehat {\fr Q}|$, define $\alpha \mathrel{\peq_{\widehat {\fr Q}}} \beta$ if $n\leq m$ and for all $i<n$, $w_i \peq v_i$ and $\Phi_i \peqT \Psi_i$.

\item Define $S_{\widehat {\fr Q}}(((w_i,\Phi_i))_{i<n}) = ((w_{i+1},\Phi_{i+1}))_{i<n-1}$; note that $  S_{\widehat {\fr Q}} (\epsilon) = \epsilon$.

\item If $n>0$, define $\ell_{\widehat{\fr Q}} (((w_i,\Phi_i))_{i<n}) = \Phi_0$. Then, set
$\ell_{\widehat{\fr Q}} (\epsilon)^- = \bigcup_{w\in W} \ell(w)^-$ and
$\ell_{\widehat{\fr Q}} (\epsilon)^+ = \varnothing$.

\end{itemize}
\end{definition}

The structure $\widehat {\fr Q}$ we have just defined is always a deterministic quasimodel, as we show in the following lemmas.

\begin{lemma}\label{LemIsDP}
	If $\fr Q$ is an $\landi$-quasimodel then $\widehat{\fr Q}$ is a dynamic poset.
\end{lemma}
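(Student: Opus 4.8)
The plan is to check the three conditions that make $\widehat{\fr Q}$ a dynamic poset: that $\peq_{\widehat{\fr Q}}$ is a partial order on $|\widehat{\fr Q}|$, that $S_{\widehat{\fr Q}}$ is a well-defined function from $|\widehat{\fr Q}|$ to itself, and that $S_{\widehat{\fr Q}}$ is continuous for the up-set topology of $\peq_{\widehat{\fr Q}}$. By the continuity criterion recalled in Section~\ref{SecTopre}, this last point amounts to showing that $S_{\widehat{\fr Q}}$ is monotone, so that $\widehat{\fr Q}$ is indeed a dynamical system based on a poset.

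First I would verify that $\peq_{\widehat{\fr Q}}$ is a partial order. Since $\alpha\peq_{\widehat{\fr Q}}\beta$ is defined componentwise in terms of $\peq$ on $|\fr Q|$ and $\peqT$ on types, subject to the length constraint $n\leq m$, reflexivity, transitivity and antisymmetry all reduce to the corresponding properties of $\peq$ and $\peqT$, which are partial orders. Transitivity uses $n\leq m\leq k\Rightarrow n\leq k$, and antisymmetry uses that $n\leq m$ and $m\leq n$ force $n=m$, after which the componentwise comparisons collapse to equalities. The empty path $\epsilon$ causes no trouble: with length $0$ it lies below every element and the universally quantified conditions hold vacuously, so $\epsilon$ is simply the least element.

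Next I would check that $S_{\widehat{\fr Q}}$ maps $|\widehat{\fr Q}|$ into itself; this is where the bookkeeping lives. For a terminal typed path $\alpha=((w_i,\Phi_i))_{i<n}$, the value $S_{\widehat{\fr Q}}(\alpha)$ is obtained by deleting the head, so its underlying sequence is $(w_i)_{1\leq i<n}$, still a path, while the constraints $\Phi_i\sqsubT\ell(w_i)$ and $\Phi_i\ST\Phi_{i+1}$ survive for the retained indices; the terminal condition $\Phi^+_{n-1}=\varnothing$ is preserved since the final pair $(w_{n-1},\Phi_{n-1})$ is unchanged. When $n\leq 1$ the deletion yields the empty path, which is again in $|\widehat{\fr Q}|$, and by definition $S_{\widehat{\fr Q}}(\epsilon)=\epsilon$. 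Hence $S_{\widehat{\fr Q}}$ is a genuine function on $|\widehat{\fr Q}|$.

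Finally, for monotonicity suppose $\alpha\peq_{\widehat{\fr Q}}\beta$ with $\alpha=((w_i,\Phi_i))_{i<n}$ and $\beta=((v_i,\Psi_i))_{i<m}$, so that $n\leq m$, and $w_i\peq v_i$ and $\Phi_i\peqT\Psi_i$ hold for all $i<n$. If $n=0$ then $S_{\widehat{\fr Q}}(\alpha)=\epsilon$ lies below everything. If $n\geq 1$, then $S_{\widehat{\fr Q}}(\alpha)$ and $S_{\widehat{\fr Q}}(\beta)$ have lengths $n-1\leq m-1$, and for each retained index the comparisons $w_{i+1}\peq v_{i+1}$ and $\Phi_{i+1}\peqT\Psi_{i+1}$ are instances of the hypotheses; hence $S_{\widehat{\fr Q}}(\alpha)\peq_{\widehat{\fr Q}}S_{\widehat{\fr Q}}(\beta)$, as required. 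I expect the only mildly delicate point to be the uniform handling of the empty path and the length offsets, since everything else follows directly from $\peq$ and $\peqT$ being partial orders and from the definition of a terminal typed path.
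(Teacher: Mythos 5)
Your proof is correct and takes essentially the same approach as the paper's: the paper verifies continuity/monotonicity with exactly your case split ($n>0$ versus $\alpha=\epsilon$) and leaves the partial-order and functionality checks to the reader. Your additional care in verifying that deleting the head of a terminal typed path yields again a terminal typed path (so $S_{\widehat{\fr Q}}$ really maps $|\widehat{\fr Q}|$ into itself) is a welcome elaboration of what the paper treats as clear, not a different argument.
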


\fullproof{
\begin{proof}
We have to prove the following.

\begin{itemize}[wide]

\item
{\em ${\peq_{\widehat {\fr Q}}}$ is a partial order on $|\widehat{\fr Q}|$:} This follows easily from the fact that $\peq$ and $\peqT$ are both partial orders.

\item {\em ${S_{\widehat{\fr Q}}}$ is a function:} This is clear since $S_{\widehat{\fr Q}}(\alpha)$ is defined by removing the first element of $\alpha$ if it exists, otherwise $S_{\widehat{\fr Q}}(\alpha) = \alpha$, and thus $S_{\widehat{\fr Q}}(\alpha)$ is uniquely defined for all $\alpha \in |\widehat{\fr Q}|$.

\item {\em $S_{\widehat{\fr Q}}$ is monotone:}
Let $\alpha = ((w_i,\Phi_i))_{i<n}$ and $\beta = ((v_i,\Psi_i))_{i<m}$.
If $\alpha \mathrel{\peq_{\widehat {\fr Q}} } \beta$
then $n\leq m$ and for all $i<n$, $w_i \peq v_1$ and $\Phi_i \peqT \Psi_i$.
If $n>0$, then we also have $n-1 \leq m-1$ and for all $i < n-1$, $w_{i+1} \peq v_{i+1}$ and $\Phi_{i+1} \peqT \Psi_{i+1}$, i.e.,
$S_{\widehat{\fr Q}} (\alpha) = ((w_{i+1},\Phi_{i+1}))_{i<n-1} \mathrel{\peq_{\widehat {\fr Q}}} ((v_{i+1},\Psi_{i+1}))_{i<m -1} = S_{\widehat{\fr Q}} (\beta),$
as needed. If $n = 0$ then $\alpha = \epsilon$, so that $S_{\widehat{\fr Q}}(\alpha) = \epsilon$ and clearly $\epsilon \mathrel {\peq_{\widehat {\fr Q}}} S(\beta)$.
\end{itemize}
\end{proof}
}

\shortproof{
\begin{proof}
We have to prove that ${\peq_{\widehat {\fr Q}}}$ is a partial order on $|\widehat{\fr Q}|$, ${S_{\widehat{\fr Q}}}$ is a function and that it is continuous.
We prove only continuity and leave the other properties to the reader.
Let $\alpha = ((w_i,\Phi_i))_{i<n}$ and $\beta = ((v_i,\Psi_i))_{i<m}$.
If $\alpha \mathrel{\peq_{\widehat {\fr Q}} } \beta$
then $n\leq m$ and for all $i<n$, $w_i \peq v_1$ and $\Phi_i \peqT \Psi_i$.
If $n>0$, then we also have $n-1 \leq m-1$ and for all $i < n-1$, $w_{i+1} \peq v_{i+1}$ and $\Phi_{i+1} \peqT \Psi_{i+1}$, i.e.,
$S_{\widehat{\fr Q}} (\alpha) = ((w_{i+1},\Phi_{i+1}))_{i<n-1} \mathrel{\peq_{\widehat {\fr Q}}} ((v_{i+1},\Psi_{i+1}))_{i<m -1} = S_{\widehat{\fr Q}} (\beta),$
as needed. If $n = 0$ then $\alpha = \epsilon$, so that $S_{\widehat{\fr Q}}(\alpha) = \epsilon$ and clearly $\epsilon \mathrel {\peq_{\widehat {\fr Q}}} S(\beta)$.
\end{proof}
}

%

Next, we must show that $\widehat {\fr Q}$ has `enough' paths. First we show that we can iterate the forward-confluence property.

\begin{lemma}\label{LemmPath}
If $\fr Q$ is an $\landi$-quasimodel, $((w_i,\Phi_i))_{i<n}$ is a typed path in $\fr Q$, and $w_0 \peq v_0$, then there is a typed path $((v_i,\Psi_i))_{i<n}$ such that $w_i \peq v_i$ and $\Phi_i \peqT \Psi_i$ for all $i<n$.
\end{lemma}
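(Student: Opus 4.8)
The plan is to obtain the lifted path in two stages: first lift the underlying sequence of worlds using forward confluence, and then relabel each new world by its own type. Recall that, being part of a quasimodel, the successor relation $S = S_{\fr Q}$ is continuous, hence forward-confluent: whenever $w \peq w'$ and $w \mathrel S v$, there is $v'$ with $w' \mathrel S v'$ and $v \peq v'$. Starting from the given $v_0$ with $w_0 \peq v_0$, I would define $v_1,\dots,v_{n-1}$ by recursion on $i$, maintaining the invariant $w_i \peq v_i$: given $v_i$ with $w_i \peq v_i$, apply forward confluence to $w_i \peq v_i$ and $w_i \mathrel S w_{i+1}$ to obtain $v_{i+1}$ with $v_i \mathrel S v_{i+1}$ and $w_{i+1} \peq v_{i+1}$. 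This produces a path $(v_i)_{i<n}$ with $w_i \peq v_i$ throughout.

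I would then set $\Psi_i \eqdef \ell(v_i)$ and argue that $((v_i,\Psi_i))_{i<n}$ is the required typed path. The three defining conditions are immediate: $(v_i)_{i<n}$ is a path by construction; $\Psi_i = \ell(v_i) \sqsubT \ell(v_i)$ since $\sqsubT$ is reflexive; and for $i<n-1$ we have $v_i \mathrel S v_{i+1}$, so sensibility of $S_{\fr Q}$ makes $(\ell(v_i),\ell(v_{i+1})) = (\Psi_i,\Psi_{i+1})$ sensible, i.e.\ $\Psi_i \ST \Psi_{i+1}$. For the comparison $\Phi_i \peqT \Psi_i$ I combine the monotonicity of the labelling with $\Phi_i \sqsubT \ell(w_i)$. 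From $w_i \peq v_i$ the labelling is monotone (Definition~\ref{frame}, as already exploited in Lemma~\ref{lemm:normality}), so $\ell(w_i) \peqT \ell(v_i)$, that is $\ell(v_i)^- \subseteq \ell(w_i)^-$ and $\ell(w_i)^+ \subseteq \ell(v_i)^+$; and $\Phi_i \sqsubT \ell(w_i)$ gives $\Phi_i^- = \ell(w_i)^-$ and $\Phi_i^+ \subseteq \ell(w_i)^+$. Hence $\Psi_i^- = \ell(v_i)^- \subseteq \ell(w_i)^- = \Phi_i^-$ and $\Phi_i^+ \subseteq \ell(w_i)^+ \subseteq \ell(v_i)^+ = \Psi_i^+$, which is precisely $\Phi_i \peqT \Psi_i$.

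The argument is genuinely short, its only substantive ingredient being the iterated forward confluence, so the main thing to get right is simply the direction of forward confluence and of label monotonicity. The one subtlety worth flagging is the choice of the new labels: one might be tempted to keep the negative parts fixed and set $\Psi_i \eqdef (\Phi_i^+,\ell(v_i)^-)$, but such a pair need not be a type, since its implication clause (Definition~\ref{def:type}.\ref{cond:type:implication}) can fail precisely because $\ell(v_i)^-$ may be strictly smaller than $\ell(w_i)^- = \Phi_i^-$. Relabelling each lifted world with its full type $\ell(v_i)$, which is a type by hypothesis, sidesteps this issue entirely while still delivering $\Phi_i \peqT \Psi_i$.
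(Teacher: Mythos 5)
Your proof is correct and follows essentially the same route as the paper's: iterate forward confluence to lift the worlds $v_0 \mathrel S v_1 \mathrel S \cdots$ with $w_i \peq v_i$, take $\Psi_i = \ell(v_i)$, and use sensibility of $S$ for the condition $\Psi_i \ST \Psi_{i+1}$. The only cosmetic difference is that where you verify $\Phi_i \peqT \Psi_i$ by unfolding the chain $\Phi_i \sqsubT \ell(w_i) \peqT \ell(v_i) = \Psi_i$ by hand, the paper obtains the same inference by citing Lemma~\ref{LemmRestrict}.\ref{ItCrossTrans}.
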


\begin{proof}
First we find $v_i$ by induction on $i$; $v_0$ is already given, and once we have found $v_i$, we use forward confluence to choose $v_{i+1}$ so that $v_i \mathrel S v_{i+1}$ and $w_{i+1} \peq v_{i+1}$. Then we set $\Psi_i = \ell(v_i)$; since $S$ is sensible, $\Psi_{i} \ST \Psi_{i+1}$, and by Lemma \ref{LemmRestrict}.\ref{ItCrossTrans}, $\Phi_n \peqT \Psi_n$.
\end{proof}

Now we want to prove that any point can be included in a terminal typed path. For this we will first show that we can work mostly with properly typed paths, thanks to the following.

\begin{lemma}\label{LemProperPath}
Let $\fr Q$ be an $\landi$-quasimodel, $(w_i)_{i <n}$ be a path on $|\fr Q|$, and $\Phi_0 \sqsubseteq \ell (w_0)$. Then there exist $(\Phi_i)_{i<n}$ such that $((w_i,\Phi_i))_{i<n}$ is a properly typed path.
\end{lemma}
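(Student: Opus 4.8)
The plan is to define the $\Phi_i$ by a single forward recursion on $i$, keeping $\Phi_0$ as given and shrinking each successive label by a budget of allowed subformulas. Having defined $\Phi_i$, I would put $\Sigma_i = \{\varphi \in \landi : {\rm sub}(\varphi) \subseteq {\rm sub}(\Phi_i^+)\}$, which is closed under subformulas, and set $\Phi_{i+1} = \posres{\ell(w_{i+1})}{\Sigma_i}$, so that $\Phi_{i+1}^- = \ell(w_{i+1})^-$ and $\Phi_{i+1}^+ = \{\varphi \in \ell(w_{i+1})^+ : {\rm sub}(\varphi) \subseteq {\rm sub}(\Phi_i^+)\}$. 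With this choice the proper-typing requirement ${\rm sub}(\Phi_{i+1}^+) \subseteq {\rm sub}(\Phi_i^+)$ holds by fiat, and the positive parts can only lose formulas along the path, which is exactly the feature a later argument will use to reach a terminal path.

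The bookkeeping obligations then reduce to earlier results. Each $\Phi_{i+1}$ is a type by Lemma \ref{LemmRestrict}.\ref{cond:alsotype}, since it is of the form $\posres{\ell(w_{i+1})}{\Sigma_i}$ with $\Sigma_i$ subformula-closed, and $\Phi_{i+1} \sqsubT \ell(w_{i+1})$ is immediate from the definition of $\posres{\cdot}{\cdot}$. For the sensibility links $\Phi_i \ST \Phi_{i+1}$ I would invoke Lemma \ref{LemmRestrict}.\ref{ItRestrictSqsub} with $\Gamma = \Phi_i$, the ambient sensible pair $\ell(w_i) \ST \ell(w_{i+1})$ (which holds because $w_i \mathrel S w_{i+1}$ and $S_{\fr Q}$ is sensible), and $\Sigma = \Sigma_i$: one has $\Phi_i \sqsubT \ell(w_i)$ and ${\rm sub}(\Phi_i^+) \subseteq \Sigma_i$, whence $\Phi_i \ST \posres{\ell(w_{i+1})}{\Sigma_i} = \Phi_{i+1}$. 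The hypothesis $\Phi_i \sqsubT \ell(w_i)$ is threaded through as a loop invariant: it is the given hypothesis when $i = 0$ and part of the output of the construction for $i > 0$.

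The step I expect to be delicate is the $\diam$-clause of sensibility, condition \ref{ItCompThree}, which is the unique point where discarding positive formulas can conflict with the successor relation and so dictates how large the budget $\Sigma_i$ must be. If $\diam\psi \in \Phi_i^+$ but $\psi \notin \Phi_i^+$, sensibility forces $\diam\psi \in \Phi_{i+1}^+$, which is only available when $\diam\psi \in \ell(w_{i+1})^+$; sensibility of $\fr Q$ gives merely $\psi \in \ell(w_i)^+$ \emph{or} $\diam\psi \in \ell(w_{i+1})^+$, so the awkward case is a realized eventuality, $\psi \in \ell(w_i)^+$ with $\diam\psi$ gone from $\ell(w_{i+1})$. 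The construction is calibrated to avoid this: taking $\Sigma_i$ to be the \emph{full} subformula-closure of $\Phi_i^+$ guarantees that a realized $\diam\psi \in \Phi_i^+$ already has ${\rm sub}(\psi) \subseteq {\rm sub}(\Phi_i^+) \subseteq \Sigma_{i-1}$, so that for $i \geq 1$ the witness $\psi$ is forced back into $\Phi_i^+ = (\posres{\ell(w_i)}{\Sigma_{i-1}})^+$, contradicting its absence. Thus the heart of the argument is choosing the budget exactly so that Lemma \ref{LemmRestrict}.\ref{ItRestrictSqsub} applies at every step, with the base step relying on $\Phi_0$ already carrying the witnesses of its immediately realized eventualities.
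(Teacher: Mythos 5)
Your construction coincides with the paper's: since ${\rm sub}(\Phi_i^+)$ is itself closed under subformulas, your budget $\Sigma_i$ is literally the set ${\rm sub}(\Phi_i^+)$, so your recursion is exactly the paper's $\Phi_{i+1} = \posres{\ell(w_{i+1})}{{\rm sub}(\Phi_i^+)}$, which the paper justifies by sensibility of $S_{\fr Q}$ plus Lemma \ref{LemmRestrict}.\ref{ItRestrictSqsub}. Your verification of the inductive step ($i\geq 1$) is correct, and it is in fact sharper than the paper's: the paper's proof of Lemma \ref{LemmRestrict}.\ref{ItRestrictSqsub} asserts, in the $\diam$-clause, that $\Gamma \sqsubT \Phi \ST \Psi$ yields ``$\psi \in \Gamma^+$ or $\diam\psi \in \Psi^+$'', when sensibility of $(\Phi,\Psi)$ only gives $\psi \in \Phi^+$ or $\diam\psi \in \Psi^+$; since merely $\Gamma^+ \subseteq \Phi^+$, nothing lets one pass from $\psi \in \Phi^+$ to $\psi \in \Gamma^+$. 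That unjustified step is precisely the point you isolate as delicate.

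However, the proviso in your closing sentence --- that $\Phi_0$ ``already carries the witnesses of its immediately realized eventualities'' --- is not among the hypotheses of the lemma, and it cannot be dispensed with: as stated, both Lemma \ref{LemProperPath} and Lemma \ref{LemmRestrict}.\ref{ItRestrictSqsub} are false. Concretely, let $|\fr Q| = \{w_0,w_1\}$ with the discrete order, $S_{\fr Q} = \{(w_0,w_1),(w_1,w_1)\}$, $\ell(w_0)^+ = \{p,\diam p\}$, $\ell(w_1)^+ = \varnothing$, and both negative labels empty; this is an $\landi$-quasimodel ($\diam p$ is realized at $w_0$ itself, so $\omega$-sensibility holds). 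Take $\Phi_0$ with $\Phi_0^+ = \{\diam p\}$ and $\Phi_0^- = \varnothing$, so $\Phi_0 \sqsubT \ell(w_0)$. Any $\Phi_1 \sqsubT \ell(w_1)$ has $\Phi_1^+ = \varnothing$, so condition \ref{ItCompThree} of Definition \ref{compatible} fails for $(\Phi_0,\Phi_1)$: $\diam p \in \Phi_0^+$ but $p \notin \Phi_0^+$ and $\diam p \notin \Phi_1^+$. Hence no choice of $\Phi_1$ makes $((w_0,\Phi_0),(w_1,\Phi_1))$ a typed path. The repair is to promote your proviso to a hypothesis: require that $\diam\psi \in \Phi_0^+$ and $\psi \in \ell(w_0)^+$ imply $\psi \in \Phi_0^+$ (which holds whenever $\Phi_0^+ = \ell(w_0)^+ \cap \Sigma'$ for some subformula-closed $\Sigma'$). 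With that hypothesis your argument goes through verbatim, the property propagates along the constructed path and is preserved by the deletions of Lemma \ref{LemDelete}, and it is satisfied everywhere the paper actually uses Lemma \ref{LemProperPath} (in Lemma \ref{LemTerminal}, as invoked from Lemma \ref{LemIsQuasi} and Theorem \ref{TheoConservativity}, the relevant initial types are full labels or such restrictions), so the downstream results are salvageable. In short: your proof shares the paper's gap rather than introducing a new one, and unlike the paper you detected it; but as written neither argument proves the lemma, because the lemma as stated is not true.
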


\begin{proof}
For $i < n-1$ define recursively $\Phi_{i+1} = \posres{ \ell (w_{i + 1}) }{ {\rm sub} ( \Phi^+_i ) }$; by the assumption that $S$ is sensible and Lemma \ref{LemmRestrict}, $\Phi_i \ST \Phi_{i+1}$ for each $i<n-1$. It is easy to see that $((w_i,\Phi_i))_{i<n}$ thus defined is properly typed.
\end{proof}

However, the properly typed paths we have constructed need not be terminal. This will typically require extending them to a long-enough path.
The extension procedure is precisely the crux of our unwinding procedure.

\begin{lemma}\label{LemTerminal}
If $\fr Q$ is an $\landi$-quasimodel, then any non-empty typed path on $\fr Q$ can be extended to a terminal path.
\end{lemma}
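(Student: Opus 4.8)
The plan is to establish the stronger statement that for every world $w\in|\fr Q|$ and every type $\Phi\sqsubT\ell(w)$ there is a terminal typed path whose first vertex is $(w,\Phi)$; the lemma then follows by applying this to the final vertex $(w_{n-1},\Phi_{n-1})$ of the given non-empty path and gluing the two paths at their shared vertex, which preserves the conditions $\Phi_i\ST\Phi_{i+1}$. I would argue by strong induction on the measure $\mu(\Phi)$ defined as the number of distinct temporal formulas (those of the form $\tnext\psi$ or $\diam\psi$) occurring in ${\rm sub}(\Phi^+)$.

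If $\Phi^+$ contains no temporal formula at top level, then conditions \ref{ItCompOne} and \ref{ItCompThree} of Definition \ref{compatible} impose no constraint on the positive part of a successor. Choosing any $S$-successor $w'$ of $w$ (one exists since $S$ is serial) and putting $\Phi'=(\varnothing,\ell(w')^-)$, one checks that $\Phi'$ is a type, that $\Phi'\sqsubT\ell(w')$, and, using $\ell(w)\ST\ell(w')$ for the negative conditions \ref{ItCompTwo} and \ref{ItCompFour}, that $\Phi\ST\Phi'$. The two-vertex path $(w,\Phi),(w',\Phi')$ is then terminal, which settles this case directly (in particular the base case $\mu(\Phi)=0$).

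Otherwise $\Phi^+$ has a temporal member; let $\tau$ be one that is maximal in $\Phi^+$. The aim of a single macro-step is to reach, along a typed path starting at $(w,\Phi)$, a vertex $(w',\Phi')$ with ${\rm sub}((\Phi')^+)\subseteq{\rm sub}(\Phi^+)$ and $\tau\notin{\rm sub}((\Phi')^+)$, so that $\mu(\Phi')<\mu(\Phi)$ and the induction hypothesis finishes the job. When $\tau=\tnext\varphi$ I would move one step to a successor $w'$, set $\Psi=\posres{\ell(w')}{{\rm sub}(\Phi^+)}$ (so $\Phi\ST\Psi$ by Lemma \ref{LemmRestrict}.\ref{ItRestrictSqsub}), and delete $\tau$ by Lemma \ref{LemDelete}.\ref{ItDeleteCirc}, taking $\Phi'=\remove\Psi\tau$; since all superformulas of $\tau$ are removed, $\tau\notin{\rm sub}((\Phi')^+)$. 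When $\tau=\diam\varphi$ with $\varphi\in\Phi^+$ the same works using Lemma \ref{LemDelete}.\ref{ItDeleteDiam}. When $\tau=\diam\varphi$ with $\varphi\notin\Phi^+$, I would first use $\omega$-sensibility (Definition \ref{compatible}) to obtain an $S$-path $w=u_0\mathrel S\cdots\mathrel S u_m$ with $\varphi\in\ell(u_m)^+$ and $m$ minimal; minimality guarantees that $\diam\varphi$ remains positive along the whole path. Building types along this path by restriction (Lemmas \ref{LemmPath}, \ref{LemProperPath} and \ref{LemmRestrict}) keeps ${\rm sub}(\Phi_i^+)$ non-increasing and forces $\varphi\in\Phi_m^+$, after which $\diam\varphi$ is deleted as before.

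The step I expect to be the main obstacle is keeping $\tau=\diam\varphi$ \emph{maximal} in the positive part all the way to the realizing world $u_m$, since the restriction construction can reintroduce proper temporal superformulas of $\tau$ that are themselves subformulas of $\Phi^+$. The remedy I would use is to prune, at each step, exactly the proper superformulas of $\diam\varphi$ from the candidate successor type; a direct check --- analogous to the proof of Lemma \ref{LemDelete} and relying on the invariant that no proper temporal superformula of $\diam\varphi$ is a member of the current type --- shows that this pruned type is still a type and still a sensible successor, that $\diam\varphi$ stays positive, and that ${\rm sub}$ does not grow. With the invariant in hand, $\diam\varphi$ is maximal at $u_m$, Lemma \ref{LemDelete}.\ref{ItDeleteDiam} applies, and removing all superformulas of $\diam\varphi$ strictly lowers $\mu$; this controlled interaction between realizing the eventuality and simultaneously maintaining maximality is the delicate heart of the argument.
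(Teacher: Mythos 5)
Your proof is correct and has the same skeleton as the paper's: induction on a subformula-counting measure of the positive part, a base case that appends one serial step carrying the type $(\varnothing;\ell(w')^-)$, and inductive cases that realize a maximal temporal formula and then delete it via Lemma \ref{LemDelete} after restricting types along the path (your use of Lemma \ref{LemmRestrict}.\ref{ItRestrictSqsub} is exactly what Lemma \ref{LemProperPath} packages). The differences in the measure ($\mu$ versus $\|\Phi\|=|{\rm sub}(\Phi^+)|$), in proving a one-vertex strengthening and gluing rather than extending the given path directly, and in splitting the $\diam$-case according to whether the eventuality is already fulfilled, are cosmetic.

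The one substantive divergence is the pruning device in your last paragraph, and there your proposal is more careful than the paper itself. The paper's $\diam$-case applies Lemma \ref{LemProperPath} along the realizing path and then invokes Lemma \ref{LemDelete}.\ref{ItDeleteDiam} at its last step, which requires $\diam\psi$ (your $\tau$) to be maximal in $\Phi_k^+$. But maximality is a condition on \emph{members}, and the restriction $\posres{\ell(w_{i+1})}{{\rm sub}(\Phi_i^+)}$ can later make a proper temporal superformula of $\diam\psi$, such as $\tnext\diam\psi$, into a member: this is possible whenever that superformula occurs in ${\rm sub}(\Phi_{m-1}^+)$ inside some non-temporal member (so maximality in $\Phi_{m-1}^+$ is not violated) and lies in the labels along the path. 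In that situation $\remove{\widetilde\Phi_{k+1}}{\diam\psi}$ is not a sensible successor of $\Phi_k$ --- condition \ref{ItCompOne} of Definition \ref{compatible} fails for $\tnext\diam\psi$ --- so the paper's argument as written has a gap exactly where you placed your remedy. Your two additions, taking the realizing path of minimal length (which is also what keeps $\diam\varphi$ in the positive parts along the way and forces $\varphi$ into the type at the realizing world; the paper leaves this implicit too) and pruning ${\rm sup}(\diam\varphi)\setminus\{\diam\varphi\}$ from each successor type, restore precisely the hypotheses of Lemma \ref{LemDelete}.\ref{ItDeleteDiam}, and the checks you sketch (pruned types are types, remain $\sqsubT$ the corresponding labels, remain sensible successors under your invariant, keep $\diam\varphi$ positive, and do not enlarge ${\rm sub}$) all go through. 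In short, your proof is the paper's proof with the missing bookkeeping supplied; the paper's own argument needs your patch, or an equivalent one, to be complete.
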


\begin{proof}
Let $\alpha = ((w_i,\Phi_i))_{i<m}$ be any typed path on $\fr Q$. For a type $\Phi$, define $\|\Phi\| = |{\rm sub} (\Phi^+)|$.
We proceed to prove the claim by induction on $\|\Phi_{m-1}\|$. Consider first the case where $\Phi_{m-1}^+$ contains no temporal formulas; that is, formulas of the form $\tnext \psi$ or $\diam \psi$ for some $\psi$. In this case, using the seriality of $S$ choose $w_{m}$ such that $w_{m-1} \mathrel S w_{m}$, and define $\Phi_{m+1} = (\ell(w_{m})^-;\varnothing)$; it is easy to see that $((w_i,\Phi_i))_{i\leq m}$ is a terminal path.
Otherwise, let $\varphi$ be a maximal temporal formula of $\Phi^+_{m-1}$, i.e., it does not appear as a proper subformula of any other temporal formula in $\Phi^+_{m-1}$.
We consider two sub-cases.

Assume first that $\varphi = \tnext \psi$. Then, by the seriality of $S$, we may choose $w_m$ so that $w_{m-1} \mathrel S w_{m}$.
Applying Lemma \ref{LemProperPath}, let $\widetilde \Phi_m$ be such that $((w_{m-1},\Phi_{m-1}), (w_m,\widetilde \Phi_m))$ is a properly typed path.
Setting $\Phi_m = \remove {\widetilde \Phi_m }{\tnext\psi}$, we see by Lemma \ref{LemDelete}.\ref{ItDeleteCirc} that
$((w_{m-1},\Phi_{m-1}),(w_m,\Phi_m))$
is a properly typed path, and $\| \Phi_m \| < \|  \Phi_{m-1} \|$, since the left-hand side does not count $\tnext\psi$. Thus we may apply the induction hypothesis to obtain a terminal typed path $((w_i,\Phi_i))_{i<n}$ extending $\alpha$.

Now consider the case where $\varphi = \diam \psi$. Since $S$ is $\omega$-sensible, there is a path
$w_{m-1} \mathrel S w_m \mathrel S \ldots \mathrel S w_k$
so that $\varphi \in \ell( w_k )$.
Using the seriality of $S$, choose $w_{k+1}$ so that $w_k \mathrel S w_{k+1}$.
By Lemma \ref{LemProperPath}, there are types $\Phi_i$ for $m\leq i \leq k$ and a type $\widetilde \Phi_{k+1}$ such that
$((w_{m-1},\Phi_{m-1}), \ldots, (w_k,\Phi_k), (w_{k+1},\widetilde \Phi_{k+1}))$
is a properly typed path. Then, define $\Phi_{k + 1} = \remove{\widetilde \Phi_{k+1} }{\diam \psi}$. Using Lemma \ref{LemDelete} we see that $(\Phi_k,\Phi_{k+1})$ is sensible; moreover, $\|  \Phi_{k+1} \| < \| \Phi_{m-1} \|$. Hence we can apply the induction hypothesis to obtain a terminal typed path $((w_i,\Phi_i))_{i <n}$ extending $\alpha$.
\end{proof}
%

With this, we are ready to show that our unwinding is indeed a deterministic quasimodel.

\begin{lemma}\label{LemIsQuasi}
If $\fr Q$ is an $\landi$-quasimodel, then $\widehat{\fr Q}$ is a deterministic $\landi$-quasimodel.
\end{lemma}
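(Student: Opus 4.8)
The plan is to go through the clauses of Definition \ref{ndqm}, taking for granted from Lemma \ref{LemIsDP} that $\widehat{\fr Q}$ is based on a poset and that $S_{\widehat{\fr Q}}$ is a deterministic continuous function, so that the topology is already an up-set topology. Since $\landi$ has no subformula of the form $\forall\varphi$, honesty holds vacuously, and seriality is immediate because $S_{\widehat{\fr Q}}$ is a total function. It therefore remains to verify that $\ell_{\widehat{\fr Q}}$ really takes values in $\type\Sigma$ and is continuous, that $\widehat{\fr Q}$ satisfies the defect-revocation clause of Definition \ref{frame}, that $S_{\widehat{\fr Q}}$ is sensible, and that it is $\omega$-sensible. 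The only label needing attention for the first point is $\ell_{\widehat{\fr Q}}(\epsilon)=(\varnothing;\bigcup_w\ell(w)^-)$: its positive part is empty, so all the positive clauses of Definition \ref{def:type} are vacuous while the negative clauses are preserved under the union $\bigcup_w\ell(w)^-$, making it a type. Continuity of $\ell_{\widehat{\fr Q}}$ is read directly off the definition of $\peq_{\widehat{\fr Q}}$: for nonempty paths the $i=0$ coordinate gives $\Phi_0\peqT\Psi_0$, while for $\alpha=\epsilon$ one only needs $\Psi_0^-\subseteq\bigcup_w\ell(w)^-$, which holds since $\Psi_0\sqsubT\ell(v_0)$.

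For sensibility, the generic step $\alpha\mathrel S_{\widehat{\fr Q}}S_{\widehat{\fr Q}}(\alpha)$ with $\alpha=((w_i,\Phi_i))_{i<n}$ and $n\geq 2$ compares $\Phi_0$ with $\Phi_1$, and $(\Phi_0,\Phi_1)$ is sensible because $\Phi_0\ST\Phi_1$ is part of Definition \ref{def:term-path}. The two boundary cases are those reaching $\epsilon$, namely the pair $(\Phi_0,\ell_{\widehat{\fr Q}}(\epsilon))$ for a length-one path (where terminality forces $\Phi_0^+=\varnothing$, so the positive clauses are vacuous) and the pair $(\ell_{\widehat{\fr Q}}(\epsilon),\ell_{\widehat{\fr Q}}(\epsilon))$; in both, the negative clauses of Definition \ref{compatible} follow from the inclusion into $\bigcup_w\ell(w)^-$, using the seriality and sensibility of $S_{\fr Q}$ to handle the $\tnext$-clause. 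For $\omega$-sensibility, suppose $\diam\varphi\in\ell_{\widehat{\fr Q}}(\alpha)^+=\Phi_0^+$. Walking along $\alpha$ and applying clause \ref{ItCompThree} of Definition \ref{compatible} to each sensible pair $(\Phi_i,\Phi_{i+1})$, one finds that either $\varphi\in\Phi_k^+$ for some $k<n$ or else $\diam\varphi\in\Phi_{n-1}^+$; the latter is impossible since $\alpha$ is terminal and hence $\Phi_{n-1}^+=\varnothing$. Thus $\varphi\in\Phi_k^+=\ell_{\widehat{\fr Q}}(S^k_{\widehat{\fr Q}}(\alpha))^+$, as required.

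The hard part will be the defect-revocation clause, and this is where Lemmas \ref{LemmPath} and \ref{LemTerminal} do the work. Fix $\alpha$ with a defect $\varphi\to\psi\in\ell_{\widehat{\fr Q}}(\alpha)^-$; because the topology is the up-set topology it suffices to produce a revoking world in the up-set of $\alpha$. Now $\ell_{\widehat{\fr Q}}(\alpha)^-$ equals $\ell(w_0)^-$ when $\alpha$ is nonempty and $\bigcup_w\ell(w)^-$ when $\alpha=\epsilon$, so in either case there is a world $w$ of $\fr Q$ with $\varphi\to\psi\in\ell(w)^-$, whence $\psi\in\ell(w)^-$ by clause \ref{cond:type:implication:neg} of Definition \ref{def:type}. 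If $\varphi\in\ell(w)^+$ then $w$ itself revokes the defect in $\fr Q$; otherwise $\varphi\to\psi$ is a defect of $\ell(w)$ and the labelled-space property of $\fr Q$ yields $u\succeq w$ with $\varphi\in\ell(u)^+$ and $\psi\in\ell(u)^-$. Either way I obtain a world revoking $\varphi\to\psi$ above (or equal to) $w$. For nonempty $\alpha$ I take $w=w_0$, lift $\alpha$ through $w_0\peq u$ to a typed path of the same length starting at $u$ with full labels via Lemma \ref{LemmPath}, and extend it to a terminal path $\beta$ by Lemma \ref{LemTerminal} without disturbing its first coordinate; then $\alpha\peq_{\widehat{\fr Q}}\beta$ and $\ell_{\widehat{\fr Q}}(\beta)=\ell(u)$ revokes the defect. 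For $\alpha=\epsilon$, whose only neighbourhood is all of $|\widehat{\fr Q}|$, I take the length-one typed path $((u,\ell(u)))$ and again extend it to a terminal $\beta$ with $\ell_{\widehat{\fr Q}}(\beta)=\ell(u)$, which revokes $\varphi\to\psi$. This establishes the defect-revocation clause and completes the verification that $\widehat{\fr Q}$ is a deterministic $\landi$-quasimodel.
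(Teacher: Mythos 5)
Your proof is correct and follows essentially the same route as the paper's: it delegates the poset/function structure to Lemma \ref{LemIsDP}, verifies sensibility and $\omega$-sensibility by the same case analysis on terminal paths and $\epsilon$, and establishes defect revocation by lifting the given path through the revoking world with Lemma \ref{LemmPath} and extending it to a terminal path with Lemma \ref{LemTerminal}. If anything, you are slightly more thorough than the paper, which leaves implicit the case $\alpha=\epsilon$ of defect revocation, the verification that $\ell_{\widehat{\fr Q}}(\epsilon)$ is a type, and the case distinction on whether $\varphi\in\ell(w_0)^+$.
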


\begin{proof}
We have already seen in Lemma \ref{LemIsDP} that $(|\widehat{\fr Q}|,\peq_{\widehat {\fr Q}}, S_{\widehat{\fr Q}} )$ is a dynamic poset, so it remains to check that $(|\widehat{\fr Q}|,\peq_{\widehat {\fr Q}}, \ell_{\widehat{\fr Q}})$ is a labelled frame and $S_{\widehat{\fr Q}}$ is sensible and $\omega$-sensible. Let $\alpha = \big ( (w_i,\Phi_i) \big ) _{i<n} \in |\widehat{\fr Q}|$.

First we must check that if $\alpha \mathrel{\peq_{\widehat {\fr Q}}} \beta$, then $\ell_{\widehat{\fr Q}}(\alpha) \peqT \ell_{\widehat{\fr Q}}(\beta)$. Consider two cases; if $n>0$, then $\beta$ is also of the form $(v_i,\Psi_i)_{i<m}$ with $m>0$ and by definition,
$\ell_{\widehat{\fr Q}}(\alpha) = \Phi_0 \peqT \Psi_0 = \ell_{\widehat{\fr Q}}( \beta ).$
Otherwise, $\alpha = \epsilon$, and it is clear from the definition of $\ell_{\widehat{\fr Q}}(\epsilon )$ that $\ell_{\widehat{\fr Q}}(\epsilon ) \peqT \ell_{\widehat{\fr Q}}( \beta )$ regardless of $\beta$.

Now assume that $\varphi \to \psi \in \ell_{\widehat{\fr Q}}(\alpha)^-$. If $n>0$, then since $\fr Q$ is a labelled frame, we can pick $v_0 \seq w_0$ with $ \varphi \in \ell(v_0)^+$ and $\psi \in \ell(v_0)^-$. Since $\Phi_0 \sqsubT \ell(w_0) \peqT \ell(v_0)$, by Lemma \ref{LemmRestrict}.\ref{ItCrossTrans}, $\Phi_0 \peqT \ell(v_0)$, so that by Lemma \ref{LemmPath}, there is a typed path $\beta' = ((v_i,\Psi_i))_{i<n}$ with $\Psi_0 = \ell(v_0)$ such that $w_i \peq v_i$ and $\Phi_i \peqT \Psi_i$ for all $i<n$. By Lemma \ref{LemTerminal}, we can extend $\beta'$ to a terminal path $\beta$. Then, it is easy to see that $\alpha \mathrel{\widehat\peq} \beta$, $ \varphi \in \ell_{\widehat{\fr Q}}( \beta )^+$, and $\psi \in \ell_{\widehat{\fr Q}}( \beta )^-$, as required.

To check that $S_{\widehat{\fr Q}}$ is sensible, consider two cases. If $S_{\widehat{\fr Q}}(\alpha) \not = \epsilon$, then $\alpha$ has length at least two, but since $\alpha$ is a typed path,
$\ell_{\widehat{\fr Q}}(\alpha) = \Phi_0 \ST \Phi_1 = \ell_{\widehat{\fr Q}} \big ( S_{\widehat{\fr Q}} (\alpha) \big ) .$
Otherwise, $S_{\widehat{\fr Q}}(\alpha) = \epsilon$; this means that either $\alpha = \epsilon$ and thus $\ell_{\widehat{\fr Q}} ^+ (\alpha) = \varnothing$, or $\alpha$ has length $1$, in which case since $\alpha$ is terminal, so we also have that $\ell_{\widehat{\fr Q}} ^+ (\alpha) = \varnothing$. In either case, there can be no temporal formula in $\ell_{\widehat{\fr Q}} ^+ (\alpha)$. Now, if $\tnext\varphi \in \ell_{\widehat{\fr Q}} ^- (\alpha)$, then $\tnext \varphi \in \ell ^- (w )$ for some $w \in W$, hence $\varphi \in \ell ^- S(v)$ for any $v$ with $w \mathrel S v$ (which exists since $S$ is serial), and thus $\varphi \in \ell_{\widehat{\fr Q}} (\epsilon)^-$.
Similarly, if $\diam\varphi \in \ell_{\widehat{\fr Q}} ^- (\alpha)$, then by Definition \ref{def:type}.\ref{cond:type:diam} $ \varphi \in  \ell ^- ( \alpha )$, so that $ \varphi \in \ell_{\widehat{\fr Q}} ^- (\epsilon)$.

Finally we check that $S_{\widehat{\fr Q}}$ is $\omega$-sensible. Suppose that $\diam \varphi \in \ell (\alpha)^+$.
This means that $\alpha \not = \epsilon$, so $n > 0$, and since $\alpha$ is terminal, $\diam \alpha \not \in \Phi_{n-1}$.
But this is only possible if $\varphi \in \Phi_i$ for some $i<n-1$, in which case $\varphi \in \ell_{\widehat{\fr Q}} \big ( S_{\widehat{\fr Q}} ^i (\alpha) \big ) $.
\end{proof}

\begin{theorem}\label{TheoConservativity}
A formula $\varphi \in \landi$ is satisfiable (falsifiable) over the class of dynamic posets if and only if it is satisfiable (falsifiable) over the class of dynamical systems.
\end{theorem}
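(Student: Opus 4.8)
The plan is to prove the two implications separately, the forward one being immediate and the converse assembling the unwinding lemmas above with Theorem~\ref{TheoITLc} and Lemma~\ref{LemTruth}. For the forward direction, every dynamic poset equipped with its up-set topology is a dynamical topological system, and (as recalled in Section~\ref{SecTopre}) a self-map is continuous in this topology precisely when it is monotone; hence any model based on a dynamic poset is already a dynamical topological model, so a formula satisfiable (falsifiable) on some dynamic poset is satisfiable (falsifiable) on some dynamical system.

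For the converse, suppose $\varphi \in \landi$ is falsifiable over the class of dynamical systems; the satisfiability case is entirely dual. By Theorem~\ref{TheoITLc} (read with two-sided types via Remark~\ref{RemarkTypes}), $\varphi$ is falsifiable on a finite, saturated $\mathrm{sub}(\varphi)$-quasimodel $\fr Q$, and since $\mathrm{sub}(\varphi)\subseteq\landi$ this is an $\landi$-quasimodel. I would fix $w\in|\fr Q|$ with $\varphi\in\ell(w)^-$ and pass to the weak limit model $\widehat{\fr Q}$, which by Lemmas~\ref{LemIsDP} and~\ref{LemIsQuasi} is a deterministic $\landi$-quasimodel based on a dynamic poset.

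The key step is to locate a world of $\widehat{\fr Q}$ that still falsifies $\varphi$. Consider the length-one typed path $\big((w,\ell(w))\big)$, which is a legitimate typed path since $\ell(w)\sqsubT\ell(w)$. By Lemma~\ref{LemTerminal} it extends to a terminal typed path $\alpha=\big((w_i,\Phi_i)\big)_{i<n}$; the crucial observation is that the extension procedure only appends worlds and rewrites later types, leaving the root untouched, so $\Phi_0=\ell(w)$ and hence $\varphi\in\ell_{\widehat{\fr Q}}(\alpha)^-$. Because $\landi$ has no occurrence of $\forall$, honesty holds vacuously, so Lemma~\ref{LemTruth} applies to the deterministic quasimodel $\widehat{\fr Q}$: with the valuation defined there, $\varphi\in\ell_{\widehat{\fr Q}}(\alpha)^-$ gives $\alpha\notin\val\varphi_{\widehat{\fr Q}}$, witnessing falsifiability on the dynamic poset model $\widehat{\fr Q}$. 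For satisfiability one instead starts from a $w$ with $\varphi\in\ell(w)^+$, and the same preservation of the root type keeps $\varphi\in\Phi_0^+$, so the first clause of Lemma~\ref{LemTruth} yields $\alpha\in\val\varphi_{\widehat{\fr Q}}$ (here it is the $\omega$-sensibility of $\widehat{\fr Q}$ that validates positive $\diam$-formulas).

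The main obstacle is not any single calculation but the correct preservation of the root type $\Phi_0=\ell(w)$ under the terminal extension of Lemma~\ref{LemTerminal}: this is exactly what guarantees that $\varphi$ survives the passage from $\fr Q$ to $\widehat{\fr Q}$, and it is the only place where one must be careful, since the extension is free to alter all later types. A secondary point to verify is that the honesty demanded by Lemma~\ref{LemTruth} is automatic here, as the $\diam$-fragment contains no universal modality.
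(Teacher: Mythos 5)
Your proof is correct and takes essentially the same route as the paper's: invoke Theorem~\ref{TheoITLc} to obtain a finite saturated ${\rm sub}(\varphi)$-quasimodel $\fr Q$, pass to the weak limit model $\widehat{\fr Q}$ via Lemma~\ref{LemIsQuasi}, extend the length-one typed path $\big((w,\ell(w))\big)$ to a terminal path by Lemma~\ref{LemTerminal}, and conclude with Lemma~\ref{LemTruth}. The two points you flag as delicate---that the terminal extension of Lemma~\ref{LemTerminal} leaves the root type untouched, and that the honesty hypothesis of Lemma~\ref{LemTruth} is vacuous for the $\forall$-free language $\landi$---are precisely what the paper's terse proof leaves implicit, so your write-up is a faithful (and slightly more explicit) rendering of the same argument.
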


\proof
Suppose that $\varphi \in \mathcal L_\diam$ is satisfied (falsified) on a dynamical topological model. Then, by Theorem \ref{TheoITLc}, $\varphi$ is satisfied (falsified) on some point $w_\ast$ of a ${\rm sub}(\varphi)$-quasimodel $\fr Q $. By Lemma \ref{LemIsQuasi}, $\widehat {\fr Q}$ is a deterministic quasimodel, and by Lemma \ref{LemTerminal}, $(w_\ast, \ell(w_\ast))$ can be extended to a terminal path $\alpha_\ast \in |\widehat{\fr Q}|$. By Lemma \ref{LemTruth}, $\alpha_\ast $ satisfies (falsifies) $\varphi$ on the dynamic poset model $(|\widehat{\fr Q}|,\peq_{\widehat {\fr Q}}, S_{\widehat {\fr Q}}, \val\cdot_{\widehat {\fr Q}})$.
\endproof

As an immediate consequence of Theorems \ref{theocomp} and \ref{TheoConservativity}, we conclude that ${\logbasic} $ is complete for the class of expanding posets.

\begin{corollary}\label{corKripkeComplete}
${\logbasic}  = \itlc_\diam = \itle_\diam$.
\end{corollary}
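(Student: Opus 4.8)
The plan is to chain the two cited results together with soundness, establishing the two equalities ${\logbasic} = \itlc_\diam$ and $\itlc_\diam = \itle_\diam$ separately. Throughout I will use the elementary observation that for classes of dynamical systems $\Omega \subseteq \Omega'$ one has ${\sf ITL}^{\Omega'}_\diam \subseteq {\sf ITL}^{\Omega}_\diam$, since being valid over a larger class is a strictly stronger requirement.

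First I would handle ${\logbasic} = \itlc_\diam$. For the inclusion ${\logbasic} \subseteq \itlc_\diam$ I would invoke soundness: as ${\logbasic}$ is a sublogic of ${\sf ITL}^0_{\ps\forall}$, every $\mathcal L_\diam$-formula provable in ${\logbasic}$ is provable in ${\sf ITL}^0_{\ps\forall}$ and hence valid over the class of all dynamical systems by Theorem \ref{ThmSoundZero}. The converse inclusion $\itlc_\diam \subseteq {\logbasic}$ is exactly the content of Theorem \ref{theocomp}, which states that any $\mathcal L_\diam$-formula valid over all dynamical systems is ${\logbasic}$-provable. These two inclusions give the first equality.

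Next I would establish $\itlc_\diam = \itle_\diam$. Since the class $\sf e$ of expanding posets sits inside the class $\sf c$ of all dynamical systems, the inclusion $\itlc_\diam \subseteq \itle_\diam$ is immediate from the monotonicity observation above. For the reverse inclusion it is cleanest to argue contrapositively through falsifiability: if $\varphi \in \mathcal L_\diam$ lies outside $\itlc_\diam$, then $\varphi$ is falsifiable on some dynamical system, so by Theorem \ref{TheoConservativity} it is falsifiable on some expanding poset, whence $\varphi \notin \itle_\diam$. This yields $\itle_\diam \subseteq \itlc_\diam$, and combining with the trivial inclusion completes the chain of equalities.

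Since both Theorem \ref{theocomp} and Theorem \ref{TheoConservativity} are already in hand, I expect no genuine obstacle in the corollary itself; the substantive work lives in those theorems. The only points requiring care are getting the direction of the inclusions between logics of nested classes right, and observing that the "dynamic posets" of Theorem \ref{TheoConservativity} are precisely the expanding posets defining $\itle_\diam$, so that the falsifiability transfer applies verbatim.
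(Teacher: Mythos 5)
Your proposal is correct and follows essentially the same route as the paper, which derives the corollary as an immediate consequence of Theorem \ref{theocomp} and Theorem \ref{TheoConservativity} together with soundness (Theorem \ref{ThmSoundZero}) and the trivial inclusion $\itlc_\diam \subseteq \itle_\diam$ coming from ${\sf e} \subseteq {\sf c}$. Your write-up merely makes explicit the inclusion-chasing that the paper leaves to the reader, including the correct observation that the dynamic posets of Theorem \ref{TheoConservativity} are exactly the expanding posets.
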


\section{Completeness for specific spaces}\label{secMetric}

In this section we will show that the above completeness theorems already hold for some familiar spaces, which follows from known results regarding dynamic topological logic.
Thus it will be convenient to briefly review $\sf DTL$ and how $\sf ITL$ embeds into it.
Since the base logic of $\sf DTL$ is classical, we may use a simpler syntax, using the language $\lanclass$ given by the grammar
\[ \bot \  | \   p  \ |  \ \varphi\to\psi  \ |  \ \blacksquare \varphi \  | \  \nx\varphi \  |  \ \nec\varphi \  |  \ \forall \varphi. \]
We can then define $\neg,\wedge,\vee,\blacklozenge,\ps$ using standard classical validities, and denote the $\forall$-free sublanguage by $\lanclass_\nec$.

Given a dynamical system $\mathfrak X$, a {\em classical valuation on $\mathfrak X$} is a function $\cval\cdot\colon\lanclass\to \mathcal P(|\mathfrak X|)$ such that

\[
\begin{array}{lcllcl}
\cval \bot&=&\varnothing &
\cval{\varphi\to\psi}&=& (|\mathfrak X|\setminus\cval\varphi)\cup \cval\psi\\[.8ex]
\cval{\blacksquare\varphi}&=&(\cval\varphi)^\circ&
\cval{\nx\varphi}&=&f^{-1}_\mathfrak X\cval\varphi\\[.8ex]
\cval{\nec\varphi}&=&\displaystyle\bigcap_{n<\omega}f^{-n}_\mathfrak X\cval\varphi\phantom{aaaaa}&
\cval{\forall\varphi}&=&
|\mathfrak X|\text{ if $\cval\varphi=|\mathfrak X|$, else
$\varnothing$.}\\
\end{array}
\]

Note that valuations of formulas are no longer restricted to open sets. We then have the following results regarding satisfiability on some standard metric spaces.

\begin{theorem}[\cite{FernandezMetric}]\label{TheoQComplete}
If $\varphi \in \lanclass$ is classicaly satisfiable on any topological space, then it is classically satisfiable on $\mathbb Q$.
\end{theorem}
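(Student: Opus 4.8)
The plan is to reduce satisfiability over an arbitrary dynamical system to satisfiability over a countable \emph{Aleksandroff system} --- a countable poset equipped with its up-set topology together with a monotone (hence continuous) self-map --- and then to realize such a system inside $\mathbb Q$ by means of a dynamic interior map. The first step is the classical-$\sf DTL$ counterpart of the quasimodel machinery developed earlier (cf.\ Theorems \ref{TheoITLc} and \ref{TheoConservativity}): using nondeterministic quasimodels for $\sf DTL$ together with a determinizing unwinding in the spirit of Section \ref{SecCons}, any satisfiable $\lanclass$-formula is satisfied on some deterministic, countable Aleksandroff system $(W,\peq,f_W)$. This replaces a possibly pathological space by one whose topology is governed entirely by a preorder.

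The heart of the argument is then to build a surjection $g\colon \mathbb Q \to W$ that is continuous, open, and \emph{dynamic}, meaning that $g\circ f_{\mathbb Q}=f_W\circ g$ for a continuous $f_{\mathbb Q}\colon\mathbb Q\to\mathbb Q$ constructed alongside $g$. Here I would exploit the fact that $\mathbb Q$ is the unique countable, metrizable, dense-in-itself space, so that $\mathbb Q\cong\mathbb Q\times\mathbb Q\cong\bigsqcup_{n<\omega}\mathbb Q$ and every nonempty open subset is itself a copy of $\mathbb Q$. Following the McKinsey--Tarski scheme underlying the $\sf S4$-completeness of $\mathbb Q$ \cite{tarski}, I would partition $\mathbb Q$ into dense-in-themselves pieces indexed by the worlds of $W$, arranging their closure/interior pattern to mirror $\peq$; the new ingredient is to thread $f_{\mathbb Q}$ through the recursion so that the piece over $w$ is sent into the piece over $f_W(w)$, all while keeping $f_{\mathbb Q}$ globally continuous. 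Since $W$ is infinite, this must be carried out by a back-and-forth recursion rather than a single finite subdivision.

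Granting such a $g$, the transfer is routine. I set $\cval p_{\mathbb Q}=g^{-1}\cval p_W$ for each atom $p$ and prove by induction on $\chi$ that $\cval\chi_{\mathbb Q}=g^{-1}\cval\chi_W$. The Boolean cases are immediate. For $\blacksquare$ one uses that a continuous open map satisfies $g^{-1}(A^\circ)=(g^{-1}A)^\circ$, so interiors are both preserved and reflected. For $\nx$ one uses the commutation $g\circ f_{\mathbb Q}=f_W\circ g$, which gives $f_{\mathbb Q}^{-1}g^{-1}=g^{-1}f_W^{-1}$; iterating this identity over all $f^{-n}$ settles the case of $\nec$. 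Finally $\forall$ is handled using surjectivity of $g$, since $\cval\chi_{\mathbb Q}=\mathbb Q$ exactly when $\cval\chi_W=W$. Thus a point of $W$ witnessing $\varphi$ lifts along $g$ to a point of $\mathbb Q$ witnessing $\varphi$, as required.

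The main obstacle is the construction in the second step: producing simultaneously an open continuous surjection $g$ and a continuous self-map $f_{\mathbb Q}$ that projects \emph{exactly} onto $f_W$. Openness and continuity of $g$ pull the subdivision in opposite directions, and the commutation $g\circ f_{\mathbb Q}=f_W\circ g$ must be maintained even at points lying in the closure of several pieces, where $f_W$ --- being merely monotone, not open --- may look discontinuous from the ambient metric viewpoint. Reconciling these demands is precisely where the self-similarity of $\mathbb Q$ becomes indispensable, and where the bulk of the technical work of \cite{FernandezMetric} lies.
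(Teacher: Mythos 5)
Your proposal has a fatal gap at its very first step: the reduction of arbitrary-space satisfiability to satisfiability on a (countable) \emph{Aleksandroff} system is impossible for the language $\lanclass$, because $\lanclass$ contains $\nec$. In an Aleksandroff space arbitrary intersections of open sets are open, so in any system $(W,\peq,f_W)$ based on a poset with its up-set topology and any valuation, $\cval{\nec\blacksquare p}=\bigcap_{n<\omega}f_W^{-n}\cval{\blacksquare p}$ is an intersection of open sets and hence open; consequently every Aleksandroff system validates $\nec\blacksquare p\to\blacksquare\nec\blacksquare p$. This formula is \emph{not} valid over arbitrary dynamical systems: take $(\mathbb R,f)$ with $f(x)=x/2$ and $\cval p=\bigcup_{n<\omega}\big(2^{-n}-4^{-n-1},\,2^{-n}+4^{-n-1}\big)$; then $f^n(1)=2^{-n}\in\cval{\blacksquare p}$ for every $n$, so $1\in\cval{\nec\blacksquare p}$, yet any point $1+\eta'$ with $\eta'>0$ escapes $\cval p$ after roughly $\log_2(1/\eta')$ steps, so no neighbourhood of $1$ lies inside $\cval{\nec\blacksquare p}$ and thus $1\notin\cval{\blacksquare\nec\blacksquare p}$. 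Hence $\nec\blacksquare p\wedge\neg\blacksquare\nec\blacksquare p$ is satisfiable on a topological system (and, by the theorem itself, on $\mathbb Q$) but on no Aleksandroff system whatsoever, countable or not, deterministic or not. So the reduction you posit cannot exist: if it did, it would show that the $\sf DTL$s of arbitrary systems and of expanding posets coincide, which is false (this is essentially the observation of Kremer and Mints \cite{kmints} separating Kripke from topological semantics). Note that the unwinding of Section \ref{SecCons}, which you invoke ``in the spirit of,'' is deliberately restricted to the $\ubox$-free fragment $\landi$ for precisely this reason, and the classical Theorem \ref{theoR2Complete} accordingly takes poset satisfiability as a \emph{hypothesis} rather than deriving it.

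The later steps are not where the problem lies: your transfer lemma in the third step is sound (preimages commute with arbitrary intersections, handling $\nec$; openness plus continuity of $g$ handles $\blacksquare$; surjectivity handles $\forall$, mirroring the remark in the paper that totality and surjectivity of the simulations is what extends the result from $\lanclass_\nec$ to $\lanclass$). But it is applied to an intermediate structure you cannot obtain. The actual proof in \cite{FernandezMetric} avoids the Aleksandroff detour entirely: satisfiability is first transferred to non-deterministic quasimodels \cite{FernandezNonDeterministic}, in which $\nec$ is handled by infinite realizing paths rather than by openness of the intersection; the ``limit model'' extracted from a quasimodel lives on a metric-like, decidedly non-Aleksandroff space; and satisfaction is then carried over to a system based on $\mathbb Q$ via total, surjective, open and continuous dynamic simulations. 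Your intuition that the homogeneity and self-similarity of $\mathbb Q$ drive the final construction is correct, but the intermediary through which one passes must not be a poset.
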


\begin{theorem}[\cite{FernandezR2}]\label{theoR2Complete}
If $\varphi \in \lanclass_\nec$ is classicaly satisfiable on an expanding poset, then it is classically satisfiable on $\mathbb R^n$ for any $n\geq 2$.
\end{theorem}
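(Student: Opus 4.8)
The plan is to prove this as the dynamic analogue of the McKinsey--Tarski theorem: I would transfer satisfiability from a poset to $\mathbb R^n$ by means of a single open, continuous, dynamics-preserving surjection. Suppose $\varphi\in\lanclass_\nec$ is classically satisfied at a point $w_\ast$ of a model based on an expanding poset $(W,\peq,f)$. First I would pass, by a L\"owenheim--Skolem / generated-submodel argument, to a \emph{countable} expanding poset model that still satisfies $\varphi$ at $w_\ast$ (closing $w_\ast$ under $f$ and under enough $\peq$-successors to witness the relevant $\blacksquare$-formulas and negated implications). Crucially, one cannot simply reduce to a \emph{finite} model here, since full $\sf DTL$ lacks the finite model property; hence the geometric construction below must accommodate a countably infinite poset.

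The geometric core is to build a continuous surjection $\pi\colon\mathbb R^n\to W$ that is also \emph{open} — an interior map in Tarski's sense — so that the closure relations among the fibres mirror $\peq$ (namely $v\peq w$ exactly when every neighbourhood of a point in $\pi^{-1}(v)$ meets $\pi^{-1}(w)$); this is the nested, self-similar subdivision of $\mathbb R^n$ of McKinsey--Tarski type, with one family of regions per poset point. Simultaneously I would construct a continuous self-map $g\colon\mathbb R^n\to\mathbb R^n$ realizing the dynamics on the quotient, i.e. $\pi\circ g=f\circ\pi$. Setting $\cval p_{\mathbb R^n}=\pi^{-1}\cval p_W$ for each atom then yields a dynamical model on $\mathbb R^n$.

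The verification is a truth-preservation lemma proved by induction on $\psi\in\lanclass_\nec$, asserting $x\in\cval\psi_{\mathbb R^n}$ iff $\pi(x)\in\cval\psi_W$. The Boolean cases are immediate since the valuations are classical; the $\blacksquare$ case uses continuity of $\pi$ for the inclusion $\pi^{-1}((\cval\psi_W)^\circ)\subseteq(\pi^{-1}\cval\psi_W)^\circ$ and openness of $\pi$ for the reverse one; the $\nx$ case uses $\pi\circ g=f\circ\pi$ directly; and the $\nec$ case follows because $\pi\circ g^k=f^k\circ\pi$ for all $k$, so $\bigcap_k g^{-k}\pi^{-1}\cval\psi_W=\pi^{-1}\bigcap_k f^{-k}\cval\psi_W$. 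Applying the lemma at any $x\in\pi^{-1}(w_\ast)$ then shows $\varphi$ is satisfied on $\mathbb R^n$.

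The main obstacle is the \emph{simultaneous} construction of $\pi$ and $g$: one must keep $\pi$ open while bending $\mathbb R^n$ by a continuous $g$ that projects onto the prescribed map $f$, which is in general neither injective nor open. This is exactly where the hypothesis $n\ge 2$ is indispensable. In dimension one the intermediate value theorem rigidly constrains continuous maps — the very phenomenon exploited in Example~\ref{examRInco}, which already separates ${\sf ITL}^{\mathbb R}_\nx$ from the logic of all systems — so no such $g$ exists on $\mathbb R$. In dimension $\ge 2$ there is room to \emph{route} the fibres: one can make $g$ locally collapse or spread regions to match $f$ on the quotient while staying continuous, by placing the fibres in a fractal pattern compatible with both $\peq$ and $f$ and refining it along orbits. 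Carrying out this routing so that openness of $\pi$ survives under the iteration needed for $\nec$, over a countably infinite poset, is the technical heart of the argument.
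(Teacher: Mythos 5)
Your proposal has a genuine gap at its core: the transfer device you rely on cannot exist. You require a continuous, open \emph{surjection} $\pi\colon\mathbb R^n\to W$ together with a continuous $g$ satisfying the exact equivariance $\pi\circ g=f\circ\pi$. Consider the expanding poset $W=\{a,b\}$ with $a,b$ incomparable and $f(a)=f(b)=b$ (a legitimate member of the class ${\sf e}$: $f$ is monotone, hence continuous for the up-set topologies). Its up-set topology is discrete, so $W$ is disconnected, whereas $\mathbb R^n$ is connected; hence there is no continuous surjection from $\mathbb R^n$ onto $W$ at all --- openness is not even needed for the contradiction. Passing to generated submodels does not help: the submodel generated by $a$ under $\peq$ and $f$ is all of $W$, and your equivariance condition forces $f(a)=b$ into the image of $\pi$, so the target cannot be shrunk either. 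Yet $\lanclass_\nec$-formulas satisfied at $a$ in this model, such as $\blacksquare p\wedge\nx\blacksquare\neg p$, are perfectly satisfiable on $\mathbb R^2$ (two open half-planes and a translation), so the theorem is true but not by your route. The failure is structural rather than a matter of geometric ingenuity in carrying out the ``routing'': your truth lemma plus surjectivity of $\pi$ would also transfer the universal modality (since $\cval{\psi}_{\mathbb R^n}=\pi^{-1}\cval{\psi}_W$, one gets $\cval{\psi}_{\mathbb R^n}=\mathbb R^n$ iff $\cval{\psi}_W=W$), so your argument, if it worked, would prove the statement for all of $\lanclass$. That stronger statement is false: $\forall(\blacksquare p\vee\blacksquare\neg p)\wedge\neg\forall p\wedge\neg\forall\neg p$ is satisfiable on the two-point antichain above but on no connected space, since it forces a partition of the space into two nonempty open sets.

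This also explains how the result actually sits in the paper: Theorem \ref{theoR2Complete} is not proved here but quoted from \cite{FernandezR2} (strengthening Slavnov \cite{slav}), and any correct proof must use a transfer device that is \emph{not} a total, surjective, truth-preserving map. This is precisely the point of the remark in Section \ref{secMetric}: Theorem \ref{TheoQComplete} extends to the full language $\lanclass$ because the simulations constructed for $\mathbb Q$ in \cite{FernandezMetric} happen to be total and surjective, while the $\mathbb R^n$ theorem is stated for $\lanclass_\nec$ only. The cited proofs work with dynamic \emph{simulations} in the sense of Section \ref{SecSim}: relations, in general partial and non-surjective, that are continuous and interior in the relational sense and that commute with the dynamics only up to a confluence-type inclusion, not the equality $\pi\circ g=f\circ\pi$; `henceforth' is then handled by realizing eventualities along orbits within the construction rather than by the identity $\pi\circ g^{k}=f^{k}\circ\pi$. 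Your reduction to countable models and your intuition about $n\geq2$ (via Example \ref{examRInco}) are reasonable, but they do not repair this. If you insist on a function-based argument, the minimal fix is to first adjoin a fresh root below the countable generated submodel --- this preserves satisfaction of all $\forall$-free formulas at the old points and makes the target connected, and it is exactly the step at which $\forall$ is lost --- but even then you would still owe a proof that an exact open semiconjugacy onto an arbitrary countable rooted expanding poset exists, which your proposal asserts rather than establishes, and which the cited proof deliberately avoids needing.
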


\begin{theorem}[\cite{FernandezMetric}]\label{theoCantorComplete}
If $\varphi \in \lanclass_\nec$ is classicaly satisfiable on any complete metric space, then it is classically satisfiable on the Cantor space.
\end{theorem}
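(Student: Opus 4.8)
The plan is to transfer a satisfying model from the given complete metric space to the Cantor space by means of \emph{dynamic interior maps}. Call $\pi\colon \mathfrak Y \to \mathfrak X$ a dynamic interior map between dynamical systems if it is a continuous, open surjection with $\pi\circ f_\mathfrak Y = f_\mathfrak X \circ \pi$. A routine induction on $\varphi \in \lanclass_\nec$ shows that if $\cval\cdot_\mathfrak X$ is a classical valuation and we pull it back by $\cval p_\mathfrak Y = \pi^{-1}\cval p_\mathfrak X$, then $\cval\varphi_\mathfrak Y = \pi^{-1}\cval\varphi_\mathfrak X$ for every formula: the clause for $\nb$ uses openness, the clause for $\nx$ uses commutation with $f$, and the clause for $\nec$ uses that $\pi$ maps the forward orbit of $y$ onto that of $\pi(y)$. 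Hence satisfiability is inherited along preimages, and it suffices to produce a dynamic interior map \emph{from} the Cantor space \emph{onto} a structure carrying a satisfying valuation of $\varphi$.

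First I would pass from the complete metric space to a combinatorial witness. From the given topological model I would extract, via the quasimodel/combinatorial model theory underlying this fragment of $\sf DTL$, a combinatorial dynamical model $\mathfrak W$ — an Alexandroff space equipped with a function — on which $\varphi$ holds. Because $\sf DTL$ with $\nec$ lacks the finite model property, $\mathfrak W$ must in general be taken infinite, and I would arrange it to have a tractable, self-similar orbit structure (for instance eventually periodic orbits, or a tree of orbits), so that the truth set of each $\nec\psi$, which is governed by the entire forward orbit $\{ f^n_\mathfrak W(w) : n<\omega\}$, can be tracked along the construction.

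The core step is to realize $\mathfrak W$ as a dynamic interior image of the Cantor space $\mathcal C = 2^\omega$. I would build $\mathcal C$ together with its map $f_\mathcal C$ and the projection $\pi$ as an inverse limit of finite approximations: a sequence of finite clopen partitions of $\mathcal C$ refining the fibres $\pi^{-1}(w)$, together with finite approximations to $f_\mathcal C$ consistent with $f_\mathfrak W$. Zero-dimensionality and perfectness of $\mathcal C$ supply, at each stage, enough clopen room to make $\pi$ surjective and open onto the up-sets of $\mathfrak W$ (the McKinsey--Tarski mechanism for $\nb$), while the self-similarity $\mathcal C \cong \mathcal C \sqcup \mathcal C$ lets me glue copies so that $f_\mathcal C$ shifts each piece onto the piece indexed by its $f_\mathfrak W$-image, enforcing $\pi\circ f_\mathcal C = f_\mathfrak W\circ\pi$ exactly. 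Pulling back the valuation of $\mathfrak W$ along the resulting $\pi$ and invoking the interior-map lemma places a model of $\varphi$ on $\mathcal C$.

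The main obstacle is reconciling the two demands across all iterates at once. Openness (needed for $\nb$, and transitively for $\nec$) pushes each fibre $\pi^{-1}(w)$ to meet every relevant clopen neighbourhood, whereas the rigid identity $\pi\circ f_\mathcal C = f_\mathfrak W\circ\pi$ constrains exactly where $f_\mathcal C$ may send each clopen piece; making these compatible for every $n<\omega$, so that the whole forward orbit of each Cantor point projects openly and uniformly onto the orbit of its image, is the delicate part. The failure of the finite model property compounds this, since the approximations must be organized coherently into a genuine inverse limit rather than stabilizing at a finite stage, and it is precisely the compactness, completeness and self-similarity of $\mathcal C$ that make the limit construction go through.
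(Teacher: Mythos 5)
A preliminary remark: the paper contains no proof of Theorem \ref{theoCantorComplete} at all --- it is imported verbatim from \cite{FernandezMetric} --- so your attempt can only be compared with the argument in that reference, and the comparison reveals a genuine gap rather than a different-but-valid route.

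Your pullback lemma for dynamic interior maps is correct, but the strategy breaks irreparably at the step where you extract ``a combinatorial dynamical model $\mathfrak W$ --- an Alexandroff space equipped with a function --- on which $\varphi$ holds'': such a witness need not exist. Take $\varphi = \nec\nb p \wedge \neg\nb\nec p \in \lanclass_\nec$. On the Cantor space $2^\omega$ (a compact, hence complete, metric space) with the shift map $\sigma$ and the clopen valuation $\cval p = \{x : x_0 = 0\}$, we have $\cval{\nb p} = \cval p$, so $\cval{\nec\nb p} = \bigcap_{n<\omega}\sigma^{-n}\cval p = \{0^\omega\}$, whereas $\cval{\nb\nec p} = \{0^\omega\}^\circ = \varnothing$; thus $\varphi$ holds at the point $0^\omega$. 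But on any Alexandroff system, the set $\cval{\nec\nb p} = \bigcap_{n<\omega} f^{-n}\big(\left(\cval p\right)^\circ\big)$ is an intersection of open sets (by continuity of $f$) and hence open --- this is exactly the Alexandroff property --- and since it is contained in $\cval{\nec p}$, it is contained in $\left(\cval{\nec p}\right)^\circ = \cval{\nb\nec p}$. So $\nec\nb p \to \nb\nec p$ is valid on every Alexandroff system, and $\varphi$ is satisfiable on a complete metric space but on no Alexandroff space whatsoever. Everything downstream of this step therefore collapses, and no care with the inverse-limit construction can repair it: with $\nec$ in the language, Alexandroff semantics yields a strictly stronger logic, which is precisely why the theorem is stated for complete metric spaces and cannot be routed through any Kripke-style witness.

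Two further points. First, even for formulas that do admit Alexandroff witnesses, your plan faces a compactness obstruction: a continuous surjection from the compact space $2^\omega$ has compact image, so $\mathfrak W$ would have to be compact as a topological space (generated by finitely many minimal points), which your ``infinite, self-similar'' witnesses have no reason to be. Second, the proof in \cite{FernandezMetric} never leaves the metric setting: it constructs a system on the Cantor space directly from the given model on the complete metric space $X$, with points of $2^\omega$ coding coherent sequences of ever-finer approximations to orbits in $X$, and it transfers truth along total, surjective dynamic \emph{simulations} --- a relational notion strictly weaker than your surjective interior maps, as the Remark following these theorems in the present paper also indicates. Completeness of the metric on $X$ is used exactly to guarantee that the coded approximating sequences converge, making the simulation total; neither a globally defined open surjection onto $X$ (impossible by compactness when $X$ is not compact) nor an Alexandroff intermediary (impossible by the example above) is involved.
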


\begin{remark}
Fern\'andez-Duque \cite{FernandezMetric} states Theorem \ref{TheoQComplete} for $\lanclass_\nec$, but the proof provided yields the result for all of $\lanclass$.
Roughly speaking, this is due to the fact that the simulations constructed in the proof are total and surjective.

Theorem \ref{theoR2Complete} is a strengthening of a result of Slavnov \cite{slav}.
\end{remark}

Our intuitionistic temporal logic may then be interpreted in $\sf DTL$ via the G\"odel-Tarski translation $\cdot^\blacksquare$, defined as follows:

\begin{definition}
Given $\varphi\in \lanfull$, we define $\varphi^\blacksquare\in \lanclass$ recursively by setting
\[
\begin{array}{lclclclclcl}
\bot^\nb & = & \bot & \ \ & (\nec \varphi)^\nb & = &\nb \nec \varphi^\nb  & \ \ & (\varphi \odot  \psi )^\nb & = & \varphi^\nb \odot \psi^ \nb \phantom{aaaaa}\\
p^\nb & = & \nb p &  & (\boxdot  \varphi)^\nb & = & \boxdot \varphi^\nb &  & (\varphi\imp \psi )^\nb & = &\nb ( \varphi^\nb \to \psi^ \nb )\\
\end{array}
\]
where $\odot\in \{\wedge,\vee\}$ and $\boxdot \in \{\nx,\ps, \forall\}$.
\end{definition}

In words, we put $\nb$ in front of variables, implication and $\nec$.
The following can then be verified by a simple induction on $\varphi$:

\begin{lemma}\label{LemGT}
Let $\varphi\in \lanfull$, and $\mathfrak X$ be any dynamic topological system. Suppose that $\val\cdot$ is an intuitionistic valuation and $\cval\cdot$ a classical valuation such that, for every atom $p$, $(\cval{p})^\circ=\val p$. Then, for every formula $\varphi$, $\val\varphi=\cval{\varphi^\blacksquare}$.
\end{lemma}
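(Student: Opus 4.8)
The plan is to argue by structural induction on $\varphi$, establishing $\val\varphi = \cval{\varphi^\nb}$ at every stage, and the guiding principle is that each insertion of $\nb$ in the translation is there precisely to compensate for an interior operation hidden in the intuitionistic clauses. First I would dispatch the base cases: for $\varphi = \bot$ both sides equal $\varnothing$ by definition, and for an atom $p$ the translation gives $\cval{p^\nb} = \cval{\nb p} = (\cval p)^\circ$, which equals $\val p$ exactly by the hypothesis relating the two valuations.

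For the Boolean connectives the two semantics agree on the nose. Since $(\psi\wedge\theta)^\nb = \psi^\nb \wedge \theta^\nb$ and $(\psi\vee\theta)^\nb = \psi^\nb\vee\theta^\nb$, and the intuitionistic and classical clauses for $\wedge,\vee$ are both intersection and union, the induction hypothesis closes these cases immediately. The implication case is the first place the interior matters: $\val{\psi\imp\theta} = \big((|\mathfrak X|\setminus\val\psi)\cup\val\theta\big)^\circ$, and applying the induction hypothesis inside turns this into $(\cval{\psi^\nb\to\theta^\nb})^\circ$, which by the classical clauses for $\to$ and $\nb$ is exactly $\cval{\nb(\psi^\nb\to\theta^\nb)} = \cval{(\psi\imp\theta)^\nb}$.

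The modal cases split into those that commute with $\cdot^\nb$ and the one that again conceals an interior. For $\nx$ I would use $\val{\nx\psi} = f^{-1}_\mathfrak X\val\psi$ against the matching classical clause, and for $\forall$ the global ``$=|\mathfrak X|$ or $\varnothing$'' condition transfers directly under the induction hypothesis. The case $\ps$ is worth emphasising because it requires no $\nb$: the intuitionistic clause $\val{\ps\psi} = \bigcup_{n<\omega} f^{-n}_\mathfrak X\val\psi$ is already a plain union of preimages, and the classically defined $\ps$ (namely $\neg\nec\neg$) unfolds to the very same union, so the two coincide with no interior to correct. By contrast $\nec$ does carry an interior: $\val{\nec\psi} = \big(\bigcap_{n<\omega} f^{-n}\val\psi\big)^\circ$, and the induction hypothesis rewrites this as $(\cval{\nec\psi^\nb})^\circ = \cval{\nb\nec\psi^\nb} = \cval{(\nec\psi)^\nb}$, which is exactly why the translation prefixes $\nec$ with $\nb$.

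No step is genuinely hard; the argument is bookkeeping. The single point requiring care — and the only place where the shape of the translation is forced — is keeping straight which intuitionistic clauses silently apply an interior (atoms, $\imp$, $\nec$) and which do not ($\wedge,\vee,\nx,\ps,\forall$), and verifying that $\nb$ has been inserted in precisely the former positions. A secondary remark I would record is that each $\cval{\varphi^\nb}$ is consequently open, in agreement with $\val\varphi\in\mathcal T_\mathfrak X$; this is a byproduct of the established identity rather than something needed to drive the induction.
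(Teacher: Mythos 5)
Your proof is correct and is precisely the argument the paper intends: the paper merely remarks that the lemma ``can be verified by a simple induction on $\varphi$,'' and your case analysis carries out exactly that induction, correctly matching each insertion of $\blacksquare$ in the translation to the interior operations hidden in the intuitionistic clauses for atoms, $\imp$ and $\nec$, and correctly observing that $\wedge,\vee,\nx,\ps,\forall$ (with classical $\ps$ unfolding to $\bigcup_{n<\omega}f^{-n}_\mathfrak X$ since preimages commute with complements) need no correction.
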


Then we obtain the following.

\begin{theorem}
Given $n\geq 2$, ${\sf ITL}^0_\ps$ is complete for $\mathbb R^n$, as well as for the Cantor space.
\end{theorem}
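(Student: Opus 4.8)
The plan is to reduce completeness for these two metric spaces to the corresponding classical $\sf DTL$ transfer theorems (Theorems \ref{theoR2Complete} and \ref{theoCantorComplete}) by passing through the G\"odel--Tarski translation of Lemma \ref{LemGT}. Since every dynamical system based on $\mathbb R^n$ or on the Cantor space is in particular a dynamical system, soundness is immediate from Theorem \ref{ThmSoundZero}; hence I would prove only the nontrivial direction, namely that every unprovable $\varphi \in \landi$ is falsifiable on the relevant space.

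First I would fix $\varphi \in \landi$ with ${\sf ITL}^0_\ps \nvdash \varphi$. By Corollary \ref{corKripkeComplete} we have ${\sf ITL}^0_\ps = \itle_\diam$, so $\varphi$ is falsifiable on some expanding poset $\fr X$; let $\val\cdot$ be an intuitionistic valuation on $\fr X$ with $\val\varphi \neq |\fr X|$. I then define a classical valuation by $\cval p = \val p$ for each atom $p$; since $\val p$ is open we have $(\cval p)^\circ = \val p$, so Lemma \ref{LemGT} yields $\cval{\varphi^\blacksquare} = \val\varphi \neq |\fr X|$. Thus $\varphi^\blacksquare$ is classically falsifiable — equivalently $\neg \varphi^\blacksquare$ is classically satisfiable — on $\fr X$. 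At this point I would observe that, because $\varphi \in \landi$ contains no occurrence of $\forall$, its translation $\varphi^\blacksquare$ lies in the $\forall$-free fragment $\lanclass_\nec$, which is precisely the fragment to which Theorems \ref{theoR2Complete} and \ref{theoCantorComplete} apply.

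For $\mathbb R^n$ I would apply Theorem \ref{theoR2Complete} to $\neg\varphi^\blacksquare$: being classically satisfiable on an expanding poset, it is classically satisfiable on $\mathbb R^n$ for every $n \geq 2$, say under a classical valuation $\cval\cdot$ on some $(\mathbb R^n, f)$. Transferring back, I set $\val p = (\cval p)^\circ$, which gives a legitimate intuitionistic model since $f$ is continuous and each $\val p$ is open; Lemma \ref{LemGT} then gives $\val\varphi = \cval{\varphi^\blacksquare} \neq |\mathbb R^n|$, so $\varphi$ is intuitionistically falsifiable on $\mathbb R^n$, establishing completeness there. For the Cantor space I would reuse the previous step: $\mathbb R^n$ is a complete metric space, so the classical satisfiability of $\neg\varphi^\blacksquare$ on $\mathbb R^n$ feeds directly into Theorem \ref{theoCantorComplete}, yielding classical satisfiability on the Cantor space; transferring back through Lemma \ref{LemGT} exactly as above falsifies $\varphi$ intuitionistically on the Cantor space.

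There is no genuinely hard step; the argument is bookkeeping that chains the known transfer results. The points demanding care are (i) checking that $\varphi^\blacksquare \in \lanclass_\nec$, i.e.\ that the translation of a $\ps$-only formula never introduces $\forall$ (it introduces only $\blacksquare$ and, through the classical definition of $\ps$, $\nec$), and (ii) matching the hypothesis $(\cval p)^\circ = \val p$ of Lemma \ref{LemGT} in both directions — taking $\cval p = \val p$ when passing to the classical semantics and $\val p = (\cval p)^\circ$ when returning — so that the truth sets of $\varphi$ and $\varphi^\blacksquare$ coincide and falsifiability is preserved at each stage.
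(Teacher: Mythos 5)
Your proposal is correct and is essentially the paper's own argument, run in contrapositive form: both rest on the G\"odel--Tarski translation (Lemma \ref{LemGT}), Corollary \ref{corKripkeComplete}, and Theorems \ref{theoR2Complete} and \ref{theoCantorComplete}, with the paper phrasing the chain as ``valid on $\mathbb R^n$ $\Rightarrow$ valid on expanding posets $\Rightarrow$ derivable'' where you phrase it as ``unprovable $\Rightarrow$ falsifiable on an expanding poset $\Rightarrow$ falsifiable on $\mathbb R^n$.'' Your explicit routing of the Cantor-space case through $\mathbb R^n$ (so that Theorem \ref{theoCantorComplete}'s complete-metric-space hypothesis is met) is exactly the detail hidden in the paper's ``similar reasoning,'' since an expanding poset is not itself a complete metric space.
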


\proof
If $\varphi $ is valid on $\mathbb R^n$ then by Lemma \ref{LemGT} $\varphi ^\nb$ is valid on $\mathbb R^n$ and hence, by Corollary \ref{corKripkeComplete}, $\varphi ^\nb$ (and thus $\varphi$) is valid on the class of expanding posets and therefore derivable in ${\sf ITL}^0_\ps$.
Completeness for the Cantor space follows by similar reasoning using Theorem \ref{theoCantorComplete}.
\endproof

By similar reasoning, but using Theorems \ref{TheoQComplete} and \ref{theoUnivComp}, we obtain the following.

\begin{theorem}
${\sf ITL}^0_{\ps \forall}$ is complete for $\mathbb Q$.
\end{theorem}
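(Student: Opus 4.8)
The plan is to run exactly the argument used for the preceding theorem, but with the ingredients for expanding posets and $\mathbb R^n$ replaced by those for all dynamical systems and $\mathbb Q$. The proof is a round trip through the G\"odel-Tarski translation $\cdot^\nb$: first transfer intuitionistic validity on $\mathbb Q$ to classical validity of $\varphi^\nb$ on $\mathbb Q$ using Lemma \ref{LemGT}, then lift this to classical validity over all topological spaces via Theorem \ref{TheoQComplete}, translate back to intuitionistic validity over the class of all dynamical systems using Lemma \ref{LemGT} again, and finally invoke completeness for that class (Theorem \ref{theoUnivComp}) to conclude that $\varphi$ is derivable.

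Concretely, suppose $\varphi \in \landif$ is valid on $\mathbb Q$. Given any classical valuation $\cval\cdot$ on a dynamical system based on $\mathbb Q$, set $\val p = (\cval p)^\circ$ to obtain an intuitionistic valuation; Lemma \ref{LemGT} then yields $\cval{\varphi^\nb} = \val\varphi = |\mathbb Q|$, so $\varphi^\nb$ is classically valid on $\mathbb Q$. Equivalently, $\neg \varphi^\nb$ is classically unsatisfiable on $\mathbb Q$, and by the contrapositive of Theorem \ref{TheoQComplete} it is then unsatisfiable on every topological space; that is, $\varphi^\nb$ is classically valid over the class of all dynamical systems. Applying Lemma \ref{LemGT} once more---this time taking $\cval p = \val p$ for an arbitrary intuitionistic valuation $\val\cdot$, which is legitimate since intuitionistic valuations already assign open sets, so $(\cval p)^\circ = \val p$---we obtain $\val\varphi = \cval{\varphi^\nb} = |\fr X|$ for every dynamical system $\fr X$, i.e.\ $\varphi$ is intuitionistically valid over all dynamical systems. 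By Theorem \ref{theoUnivComp}, ${\sf ITL}^0_{\ps\forall} \vdash \varphi$, as required.

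The one point that genuinely requires attention, and that distinguishes this from the $\mathbb R^n$ and Cantor cases, is that Theorem \ref{TheoQComplete} must be applied to the \emph{full} classical language $\lanclass$ rather than its $\nec$-free fragment $\lanclass_\nec$. Since ${\sf ITL}^0_{\ps\forall}$ contains the universal modality, the translated formula $\varphi^\nb$ may contain $\forall$, so the $\forall$-free version of the transfer theorem would not suffice. Here we rely precisely on the strengthening recorded in the remark after Theorem \ref{theoCantorComplete}, namely that Fern\'andez-Duque's proof in fact establishes $\mathbb Q$-completeness for all of $\lanclass$ (because the simulations it constructs are total and surjective). With that in hand there is no further obstacle: both uses of Lemma \ref{LemGT} are routine, and the only substantive inputs are Theorems \ref{TheoQComplete} and \ref{theoUnivComp}.
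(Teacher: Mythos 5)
Your proposal is correct and is precisely the paper's argument: the paper proves this theorem ``by similar reasoning'' to the $\mathbb R^n$/Cantor case, i.e.\ exactly the round trip through the G\"odel--Tarski translation you describe, using Theorem \ref{TheoQComplete} for the transfer and Theorem \ref{theoUnivComp} for completeness over all dynamical systems. Your observation that the transfer theorem must hold for the full language $\lanclass$ (because $\varphi^\nb$ may contain $\forall$) is also the point the paper makes in the remark following Theorem \ref{theoCantorComplete}.
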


In contrast Example \ref{examRInco} shows that ${\sf ITL}^0_\nx$ is already incomplete for $\mathbb R$.
We leave the problem of axiomatizing ${\sf ITL}^\mathbb R_\nx$ open, along with the long-standing problem of axiomatizing ${\sf DTL}^\mathbb R_\nx$.

\section{Concluding remarks}\label{secConc}

We have provided a sound and complete axiomatization for $\ubox$-free fragments of intuitionistic temporal logics interpreted over various classes of dynamical systems.
Many questions remain open in this direction, perhaps most notably an extension to the full language with $\ubox$. This is likely to be a much more challenging problem than that for $\diam$, as we do not even have a feasible set of axioms for Kremer's interpretation of $\ubox$.
On the other hand, the semantics for $\ubox$ given in \cite{BoudouLICS} does satisfy the standard axioms for $\nec$ of classical $\sf LTL$, but little else is known about it, including its decidability.
Aside from $\sf ITL$ with `henceforth' over the class of all dynamical systems one may consider the corresponding logics for the class of dynamical posets, for spaces with an open map, or for persistent posets; none of these logics have been axiomatized, but we know that they are all distinct \cite{BoudouLICS}.


\bibliographystyle{plain}


\begin{thebibliography}{10}

\bibitem{alek}
Pavel Aleksandroff.
\newblock Diskrete r\"aume.
\newblock {\em Matematicheskii Sbornik}, 2:501--518, 1937.

\bibitem{arte}
Sergei~N. Art\"emov, Jennifer~M. Davoren, and Anil Nerode.
\newblock Modal logics and topological semantics for hybrid systems.
\newblock Technical report msi 97-05, Cornell University, 1997.

\bibitem{IMLA}
Philippe Balbiani, Joseph Boudou, Mart\'in {Di{\'e}guez}, and David
  {Fern{\'a}ndez-Duque}.
\newblock Bisimulations for intuitionistic temporal logics.
\newblock {\em arXiv}, 1803.05078, 2018.

\bibitem{BalbianiDieguezJelia}
Philippe Balbiani and Mart\'in Di\'eguez.
\newblock Temporal here and there.
\newblock In {\em Logics in Artificial Intelligence}, pages 81--96. Springer,
  2016.

\bibitem{Bir}
George~D. Birkhoff.
\newblock Quelques th\'eor\`emes sur le mouvement des syst\`emes dynamiques.
\newblock {\em Bulletin de la Soci\'et\'e math\'ematiques de France},
  40:305--323, 1912.

\bibitem{BoudouCSL}
Joseph Boudou, Mart{\'{\i}}n Di{\'{e}}guez, and David Fern{\'{a}}ndez{-}Duque.
\newblock A decidable intuitionistic temporal logic.
\newblock In {\em 26th {EACSL} Annual Conference on Computer Science Logic
  {(CSL)}}, volume~82, pages 14:1--14:17, 2017.

\bibitem{BoudouLICS}
Joseph Boudou, Mart{\'{\i}}n Di{\'{e}}guez, David Fern{\'{a}}ndez{-}Duque, and
  Fabi{\'{a}}n Romero.
\newblock Axiomatic systems and topological semantics for intuitionistic
  temporal logic.
\newblock In {\em 16th European Conference on Logics in Artificial Intelligence
  ({JELIA})}, pages 763--777, 2019.

\bibitem{Davoren2009}
Jennifer~M. Davoren.
\newblock {On intuitionistic modal and tense logics and their classical
  companion logics: Topological semantics and bisimulations}.
\newblock {\em Annals of Pure and Applied Logic}, 161(3):349--367, 2009.

\bibitem{do}
Anuj Dawar and Martin Otto.
\newblock Modal characterisation theorems over special classes of frames.
\newblock {\em Annals of Pure and Applied Logic}, 161:1--42, 2009.
\newblock Extended journal version LICS 2005 paper.

\bibitem{JonghY09}
Dick de~Jongh and Fan Yang.
\newblock Jankov's theorems for intermediate logics in the setting of universal
  models.
\newblock In {\em 8th International Tbilisi Symposium on Logic, Language, and
  Computation {(TbiLLC'09)}}, pages 53--76, 2009.
\newblock Revised Selected Papers.

\bibitem{FernandezR2}
David Fern{\'a}ndez-Duque.
\newblock Dynamic topological completeness for $\mathbb{R}^2$.
\newblock {\em Logic Journal of the IGPL}, 15(1):77--107, 2007.

\bibitem{FernandezNonDeterministic}
David Fern{\'a}ndez-Duque.
\newblock Non-deterministic semantics for dynamic topological logic.
\newblock {\em Annals of Pure and Applied Logic}, 157(2-3):110--121, 2009.

\bibitem{FernandezMinimal}
David Fern{\'a}ndez-Duque.
\newblock Dynamic topological logic interpreted over minimal systems.
\newblock {\em Journal of Philosophical Logic}, 40(6):767--804, 2011.

\bibitem{FernandezSimulability}
David Fern{\'a}ndez-Duque.
\newblock On the modal definability of simulability by finite transitive
  models.
\newblock {\em Studia Logica}, 98:347--373, August 2011.

\bibitem{FernandezTangle}
David Fern{\'a}ndez-Duque.
\newblock Tangled modal logic for spatial reasoning.
\newblock In T.~Walsh, editor, {\em 22nd International Joint Conference on
  Artificial Intelligence {(IJCAI'11)}}, pages 857--862, 2011.

\bibitem{FernandezMetric}
David Fern{\'a}ndez-Duque.
\newblock Dynamic topological logic interpreted over metric spaces.
\newblock {\em Journal of Symbolic Logic}, 77:308--328, 2012.

\bibitem{dtlaxiom}
David Fern{\'a}ndez-Duque.
\newblock A sound and complete axiomatization for dynamic topological logic.
\newblock {\em Journal of Symbolic Logic}, 77(3):947--969, 2012.

\bibitem{FernandezNonFin}
David Fern{\'a}ndez-Duque.
\newblock Non-finite axiomatizability of dynamic topological logic.
\newblock {\em {ACM} Transactions on Computational Logic}, 15(1):4:1--4:18,
  2014.

\bibitem{FernandezITLc}
David Fern{\'a}ndez-Duque.
\newblock {The intuitionistic temporal logic of dynamical systems}.
\newblock {\em Logical Methods in Computer Science}, 14(3):1--35, 2018.

\bibitem{GoldblattH17}
Robert Goldblatt and Ian~M. Hodkinson.
\newblock Spatial logic of tangled closure operators and modal mu-calculus.
\newblock {\em Annals of Pure and Applied Logic}, 168(5):1032--1090, 2017.

\bibitem{GoldblattTangleSL}
Robert Goldblatt and Ian~M. Hodkinson.
\newblock The finite model property for logics with the tangle modality.
\newblock {\em Studia Logica}, 106(1):131--166, 2018.

\bibitem{KamideBounded}
Norihiro Kamide and Heinrich Wansing.
\newblock Combining linear-time temporal logic with constructiveness and
  paraconsistency.
\newblock {\em Journal of Applied Logic}, 8(1):33--61, 2010.

\bibitem{KojimaNext}
Kensuke Kojima and Atsushi Igarashi.
\newblock Constructive linear-time temporal logic: Proof systems and {K}ripke
  semantics.
\newblock {\em Information and Computation}, 209(12):1491 --1503, 2011.

\bibitem{konev}
Boris Konev, Roman Kontchakov, Frank Wolter, and Michael Zakharyaschev.
\newblock Dynamic topological logics over spaces with continuous functions.
\newblock In G.~Governatori, I.~Hodkinson, and Y.~Venema, editors, {\em
  Advances in Modal Logic}, volume~6, pages 299--318, London, 2006. College
  Publications.

\bibitem{KremerIntuitionistic}
Philip Kremer.
\newblock A small counterexample in intuitionistic dynamic topological logic.
\newblock \url{http://individual.utoronto.ca/
  philipkremer/onlinepapers/counterex.pdf}, 2004.
\newblock [Online; accessed \today].

\bibitem{kmints}
Philip Kremer and Grigori Mints.
\newblock Dynamic topological logic.
\newblock {\em Annals of Pure and Applied Logic}, 131:133--158, 2005.

\bibitem{temporal}
Orna Lichtenstein and Amir Pnueli.
\newblock Propositional temporal logics: Decidability and completeness.
\newblock {\em Logic Jounal of the IGPL}, 8(1):55--85, 2000.

\bibitem{MintsInt}
Grigori Mints.
\newblock {\em A Short Introduction to Intuitionistic Logic}.
\newblock University Series in Mathematics. Springer, 2000.

\bibitem{munkres2000}
James~R. Munkres.
\newblock {\em Topology}.
\newblock Featured Titles for Topology Series. Prentice Hall, Incorporated,
  2000.

\bibitem{NishimuraConstructivePDL}
Hirokazu Nishimura.
\newblock Semantical analysis of constructive {P}{D}{L}.
\newblock {\em Publications of the Research Institute for Mathematical
  Sciences, Kyoto University}, 18:427--438, 1982.

\bibitem{Poincare}
Henri Poincar\'e.
\newblock Sur le probl\`eme des trois corps et les \'equations de la dynamique.
\newblock {\em Acta Mathematica}, 13:1--270, 1890.

\bibitem{Simpson94}
Alex~K. Simpson.
\newblock {\em The proof theory and semantics of intuitionistic modal logic}.
\newblock PhD thesis, University of Edinburgh, {UK}, 1994.

\bibitem{slav}
Sergey Slavnov.
\newblock On completeness of dynamic topological logic.
\newblock {\em Moscow Mathematics Journal}, 5:477--492, 2005.

\bibitem{tarski}
Alfred Tarski.
\newblock Der {A}ussagenkalk\"ul und die {T}opologie.
\newblock {\em Fundamenta Mathematica}, 31:103--134, 1938.

\end{thebibliography}
\end{document}